\documentclass[a4paper,twoside,11pt]{article}
\usepackage[margin=3cm]{geometry}
\usepackage[affil-it]{authblk}
\usepackage{silence}
\WarningFilter[pdftoc]{hyperref}{Token not allowed in a PDF string}
\ActivateWarningFilters[pdftoc]

\usepackage[english]{babel}
\usepackage{multirow}
\usepackage{hyphenat}
\usepackage{amsmath,amssymb,amsfonts,amsthm}
\usepackage{booktabs,tabularx,enumerate}
\usepackage{url,xspace,soul,wrapfig,lineno}
\usepackage{graphicx,paralist}
\usepackage[font=small, hypcap=true]{subfig,caption}
\usepackage[textsize=footnotesize]{todonotes}
\usepackage[inline]{enumitem}


\newcommand{\fb}{$1$-fbp\xspace}
\newcommand{\fblong}{$1$-fan-bundle-planar\xspace}
\newcommand{\fby}{$1$-fan-bundle-planarity\xspace}
\newcommand{\myparagraph}[1]{\medskip\noindent\textbf{#1}}
\newcommand{\threepartition}{{\sc 3-Par\-ti\-tion}\xspace}

\newtheorem{theorem}{Theorem}
\newtheorem{lemma}[theorem]{Lemma}

\newcounter{casecounter}
\newcounter{subcasecounter}
\newcounter{subsubcasecounter}

\makeatletter
\newcommand{\ccase}[1]{%
  \stepcounter{casecounter}%
  \setcounter{subcasecounter}{0}%
  \setcounter{subsubcasecounter}{0}%
  \bgroup
  \setlength{\parindent}{0em}
  \protected@write \@auxout {}{\string \newlabel {#1}{{\thecasecounter}{\thepage}{\thecasecounter}{#1}{}} }%
\medskip
\textbf{Case \thecasecounter:}%
  \egroup
}

\newcommand{\subcase}[1]{%
  \stepcounter{subcasecounter}%
  \setcounter{subsubcasecounter}{0}%
  \bgroup
  \protected@write \@auxout {}{\string \newlabel {#1}{{\thecasecounter.\thesubcasecounter}{\thepage}{\thecasecounter.\thesubcasecounter}{#1}{}} }%
\smallskip
\textbf{Case \thecasecounter.\thesubcasecounter:}%
  \egroup
}

\newcommand{\subsubcase}[1]{%
  \stepcounter{subsubcasecounter}%
  \bgroup
  \protected@write \@auxout {}{\string \newlabel {#1}{{\thecasecounter.\thesubcasecounter.\thesubsubcasecounter}{\thepage}{\thecasecounter.\thesubcasecounter.\thesubsubcasecounter}{#1}{}} }%
\textbf{Case \thecasecounter.\thesubcasecounter.\thesubsubcasecounter:}%
  \egroup
}

\usepackage{hyperref}

\pagestyle{plain}
\graphicspath{{figures/}}

\begin{document}
\title{1-Fan-Bundle-Planar Drawings of Graphs}
\author[1]{P.~Angelini}
\author[1]{M.~A. Bekos}
\author[1]{M.~Kaufmann}
\author[2]{P.~Kindermann}
\author[1]{T.~Schneck}
\affil[1]{Institut f{\"u}r Informatik, Universit{\"a}t T{\"u}bingen, Germany\\
$\{$angelini,bekos,mk,schneck$\}$@informatik.uni-tuebingen.de}
\affil[2]{LG Theoretische Informatik, FernUniversit\"at in Hagen, Germany\\
philipp.kindermann@fernuni-hagen.de}
\date{}

\maketitle

\begin{abstract}
Edge bundling is an important concept, heavily used for graph visualization purposes. To enable the comparison with other established nearly-planarity models in graph drawing, we formulate a new edge-bundling model which is inspired by the recently introduced fan-planar graphs. In particular, we restrict the bundling to the endsegments of the edges. Similarly to 1-planarity, we call our model \emph{1-fan-bundle-planarity}, as we allow at most one crossing per bundle.

For the two variants where we allow either one or, more naturally, both endsegments of each edge to be part of bundles, we present edge density results and consider various recognition questions, not only for general graphs, but also for the outer and 2-layer variants. We conclude with a series of challenging questions.
\end{abstract}

\section{Introduction}
\label{sec:introduction}

Edge bundling is a powerful tool used in information visualization to avoid
visual clutter. In fact, when the edge density of the network is too high,
the traditional techniques of graph layouts and flow maps become unusable. In
this case, grouping together parts of edges that flow parallel to each other
within a single bundle allows us to reduce the clutter and improve readability; see Fig.~\ref{fig:edgebundling} for an example.
Among the many, we mention here only the seminal papers of
Holten~\cite{DBLP:journals/tvcg/Holten06} and Telea and
Ersoy~\cite{DBLP:journals/cgf/TeleaE10}, which focus on radial layouts, as well
as works on flow maps~\cite{DBLP:journals/tvcg/BuchinSV11} and parallel
coordinates~\cite{DBLP:journals/cgf/ZhouYQCC08}. 
For a comprehensive overview and
an evaluation refer to Zhou et al.~\cite{zhou13}.
Confluent drawings~\cite{degm-cdvnd-GD03} represent edges by planar curves that are not
interior-disjoint, so the parts that are used by several edges can be interpreted as~bundles,
and in contrast to other edge
bundling techniques, they are not ambiguous.

\begin{figure}[t]
    \centering
    \includegraphics[width=0.9\textwidth]{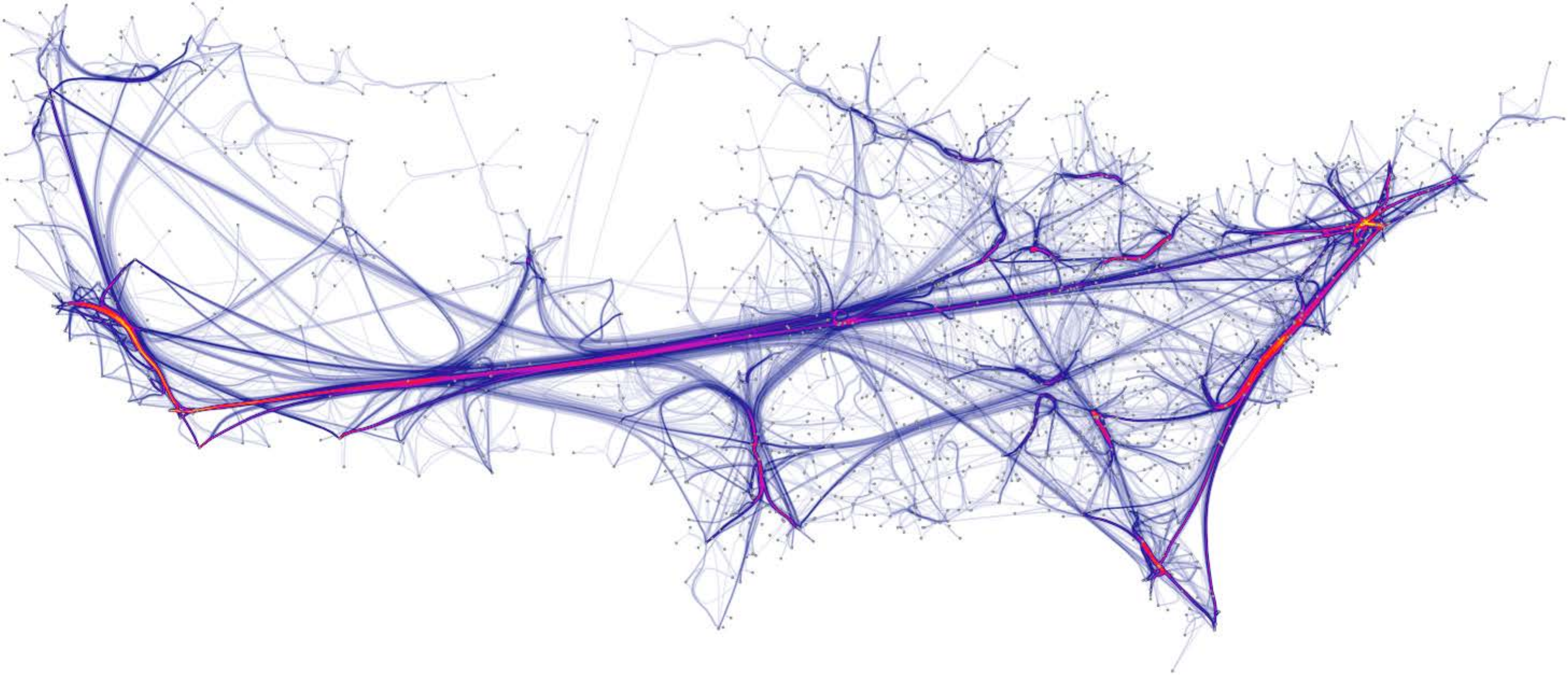}
    \caption{An example of edge bundling (taken from~\cite{DBLP:journals/cgf/HoltenW09}).} 
    \label{fig:edgebundling}
\end{figure}

In this work, we combine the powerful
visualization technique of edge bundling with previous theoretical considerations from
the area of \emph{beyond-planarity}, which is currently receiving
strong attention (see,
e.g.,~\cite{SoCG2017,Shonan2016,Dagstuhl2016}). This area focuses on
drawings of graphs in which in addition to a planar graph structure
some crossings may be allowed, if they are limited to locally defined
configurations. Different constraints on the crossing configurations
define different \emph{nearly-planar} graph classes. Classical
examples are \emph{$1$-planar} graphs~\cite{MR0187232}, which
allow for drawings in which each edge is crossed at most once, and
\emph{quasi-planar}
graphs~\cite{DBLP:journals/combinatorica/AgarwalAPPS97}, which admit
 drawings not containing three mutually crossing~edges.

\begin{figure}[b]
    \centering
    \subfloat[\label{fig:fanplanar}{}]
    {\includegraphics[page=1]{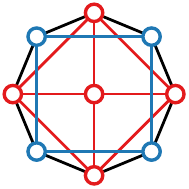}}
    \hfil
    \subfloat[\label{fig:bundleplanar}{}]
    {\includegraphics[page=2]{introduction}}
    \hfil
    \subfloat[\label{fig:k6}{}]
    {\includegraphics[page=3]{introduction}}
    \hfil
    \subfloat[\label{fig:k5}{}]
   {\includegraphics[page=4]{introduction}}
    \caption{(a--b)~The fan-planar graph of (a) is redrawn in (b) under the $1$-sided model,
    (c)~a $2$-sided \fb drawing of $K_6$,
    (d)~a $1$-sided \fb drawing of $K_5 \setminus e$ (the missing edge is drawn dotted).}
    \label{fig:introduction}
\end{figure}

Another typical example of nearly-planar graphs is the class of \emph{fan-planar}
graphs~\cite{DBLP:journals/corr/KaufmannU14}. In a
\emph{fan-planar drawing}~\cite{BekosCGHK14,DBLP:conf/gd/BinucciCDGKKMT15,
DBLP:journals/tcs/BinucciGDMPST15,DBLP:journals/corr/KaufmannU14}, an edge is allowed to cross
multiple edges as long as they belong to the same \emph{fan}, that is, they are
all incident to the same vertex; refer to Fig.~\ref{fig:fanplanar}. Such a
crossing is called a \emph{fan crossing}. The idea is that edges incident to
the same vertex are somehow close to each other, and thus having an edge
crossing all of them does not affect readability too much. In other words,
edges of a fan can be grouped into a \emph{bundle} such that the crossings
between an edge and all the edges of the fan become a single crossing.
In Fig.~\ref{fig:bundleplanar} we
show the bundle-like edge routing corresponding to the fan-planar drawing in
Fig.~\ref{fig:fanplanar}. Note, however, that the original definition of
fan-planar drawings does not always allow for this type of bundling, as in the
case of graph $K_{4,n-4}$, for large enough $n$ (see Section~\ref{sec:relationships}).

We thus introduce \emph{\fblong} drawings (\emph{\fb} drawings for short), in
which edges of a fan can be bundled together and crossings between bundles
are allowed as long as each bundle is crossed by at most one other bundle;
see Figs.~\ref{fig:bundleplanar}--\ref{fig:k5}. More formally, in a \fb
drawing every edge has $3$ parts: the first and the last parts are
\emph{fan-bundles}, which may be shared by several edges, while the middle part is
\emph{unbundled}. Each fan-bundle can cross at most one other fan-bundle, while the
unbundled parts are crossing-free. We remark that fan-bundles are not allowed
to branch, that is, each fan-bundle has exactly two end points: one of
them is the vertex the fan is incident to, while at the other one all the
edges in the fan are separated from each other.

The latter ``1-planarity'' restriction prevents a fan-bundle of an edge to
cross edges of several fans, which would be not allowed in a fan-planar
drawing. However, since every edge has two fan-bundles, each of which can
cross another bundle, it is still possible that an edge crosses two different
fans, hence making the drawing not fan-planar. In order to avoid this, we
introduce a restricted model of \fb drawings, called \emph{$1$-sided},
in which an edge can be bundled with other edges only on one of its two
end vertices, that is, each edge has only one fan-bundle; see Figs.~\ref{fig:bundleplanar} and~\ref{fig:k5}.
This restriction immediately implies that $1$-sided \fb drawings are fan-planar. 
Note that this is not the case for the model in which each edge has two fan-bundles, 
which we call \emph{$2$-sided} (see
Fig.~\ref{fig:k6}). In fact, we will prove in Section~\ref{sec:relationships} that there 
exist $2$-sided \fb graphs that are not fan-planar, and vice versa.

Since each bundle collects a set of edges and allows them to participate in a crossing, natural beyond-planarity theoretical questions arise:%
\begin{inparaenum}[(i)]
\item Characterize or recognize the graphs that admit ($1$- or $2$-sided) \fb drawings, and
\item provide upper and lower bounds on their \emph{edge density}, that is, the maximum number of edges they can have, expressed in terms of their number of vertices.
\end{inparaenum}
More graph drawing related questions concern the use of this model embedded in commonly used approaches like hierarchical drawings~\cite{DBLP:journals/tsmc/SugiyamaTT81}, radial drawings~\cite{DBLP:reference/crc/GiacomoDL13}, or force-directed methods~\cite{fr-gdfdp-91}.

We provide several answers to these questions under the $1$-sided and the
$2$-sided models. We study these questions in the general case and in two
restricted variants that have been commonly studied for other classes of
nearly-planar graphs. Namely, in an \emph{outer}-\fb drawing the vertices are
incident to the unbounded face of the drawing, while in a \emph{$2$-layer} \fb
drawing the graph is bipartite and the vertices of the two bipartition sets lie on
two parallel~lines, and the edges lie completely between these two lines.

\myparagraph{Our Contribution.}
\label{sec:contribution}
%
In Section~\ref{sec:relationships}, we study inclusion relationships between
the classes of $1$- and $2$-sided \fb graphs and other classes of nearly-planar 
graphs. Then, in Section~\ref{sec:density}, we present upper and lower
bounds on the edge density of these classes; for an overview refer to Table~\ref{table:density}. 
We then consider the complexity of the recognition
problem; we prove in Section~\ref{sec:npcompleteness} that this problem is
NP-complete in the general case for both the $1$-sided and the $2$-sided models,
while in Section~\ref{sec:recognition} we present linear-time recognition and
drawing algorithms for biconnected 2-layer \fb graphs, maximal 2-layer \fb graphs,
and triconnected outer-\fb graphs in the $1$-sided model.

In Section~\ref{sec:related_work} we present a short overview of the state of the art for~beyond-planarity.  
Section~\ref{sec:preliminaries} introduces preliminary notions and notation.
We conclude in Section~\ref{sec:conclusions} by giving a list of open problems.

\begin{table}
  \centering
  \caption{Lower bounds (LB) and upper bounds (UB) on the number of edges of $1$- and $2$-sided \fb graphs.}
  \label{table:density}
  \medskip
  \begin{tabular}{lc@{\hspace{.15cm}}c@{\hspace{.15cm}}ccc@{\hspace{.15cm}}c@{\hspace{.15cm}}ccc@{\hspace{.15cm}}c@{\hspace{.15cm}}c}
    \toprule
     & \multicolumn{3}{c}{$2$-layer}  & \multicolumn{3}{c}{outer}  & \multicolumn{3}{c}{general}\\
    \cmidrule(r{8pt}){2-4} \cmidrule(r{8pt}){5-7} \cmidrule{8-10}
    Model & LB & UB & Th. &  LB & UB & Th. &  LB & UB & Th.\\
    \midrule
    $1$-sided & $\frac{5n-7}{3}$ & $\frac{5n-7}{3}$ & \ref{thm:density_layered}  & $\frac{8n-13}{3}$ & $\frac{8n-13}{3}$ & \ref{thm:density_outer} &  $\frac{13n-26}{3}$ & $\frac{13n-26}{3}$ & \ref{thm:density_general}\\[4pt]
    $2$-sided & $2n-4$ & $3n-7$ & \ref{thm:2sided-2layer-density} &  $4n-9$ & $4n-9$ & \ref{thm:2sided-outer-density} &  $6n-18$ & $\frac{43n-78}{5}$ & \ref{thm:general_bound}\\
    \bottomrule
  \end{tabular}
\end{table}

\section{Related Work.}
\label{sec:related_work}
%
Over the last few years, several classes of nearly-planar graphs have been
proposed and studied. Apart from \emph{$1$-planar}~\cite{2017arXiv170302261K,MR0187232},
\emph{quasi-planar}~\cite{DBLP:journals/combinatorica/AgarwalAPPS97}, and
\emph{fan-pla\-nar}~\cite{DBLP:journals/corr/KaufmannU14} graphs, which have already
been discussed, other classes of nearly-planar graphs include: %
\begin{enumerate*}[label=(\roman*)]
\item \emph{$k$-planar}~\cite{DBLP:journals/combinatorica/PachT97}, which
  generalize $1$-planar graphs, as they admit drawings in which every edge
  is crossed at most~$k$ times;
\item \emph{fan-crossing free}~\cite{DBLP:conf/isaac/CheongHKK13}, which
  complement fan-planar graphs, as they forbid fan crossings but allow each
  edge to cross any number of pairwise independent edges; 
\item \emph{RAC}~\cite{DBLP:journals/tcs/DidimoEL11}, which admit straight-line
  drawings where edges cross only at right angles; and 
\item the recently introduced \emph{$k$-gap-planar}~\cite{2017arXiv170807653B}, which admit drawings where each crossing is assigned to one of the two involved edges and each edge is assigned at most $k$ of its crossings. 
\end{enumerate*}

These classes have been mainly studied in terms of their edge density, and of
the computational complexity of their corresponding recognition problem. From
the density point of view, while the graphs in all these classes can be denser
than planar graphs, all of them still have a linear number of
edges~\cite{DBLP:journals/combinatorica/AgarwalAPPS97,2017arXiv170807653B,MANA:MANA3211170125,
DBLP:conf/isaac/CheongHKK13,DBLP:journals/tcs/DidimoEL11,
DBLP:journals/corr/KaufmannU14,DBLP:journals/combinatorica/PachT97}. From the
recognition point of view, the problem has been proven to be NP-complete for most of
the classes~\cite{DBLP:journals/jgaa/ArgyriouBS12,2017arXiv170807653B,%
DBLP:journals/tcs/BinucciGDMPST15,DBLP:journals/tcs/EadesHKLSS13}, except for
quasi-planar and fan-crossing free graphs, whose time complexities are still unknown.
On the other hand, for the restricted outer and $2$-layer cases, several
polynomial-time algorithms have been proposed~\cite{DBLP:conf/gd/AuerBBGHNR13,
BekosCGHK14,DBLP:conf/walcom/DehkordiNEH13,eggleton,
DBLP:journals/algorithmica/GiacomoDEL14,DBLP:conf/gd/HongEKLSS13}.

Fink et al.~\cite{fhsv-bceg-LATIN16} considered a different style of edge
bundling, where groups of locally parallel edges are bundled and only bundled
crossings are allowed. 

\section{Preliminaries}
\label{sec:preliminaries}
%
A graph $G$ admitting a $1$-sided ($2$-sided) \fb drawing is called
\emph{$1$-sided} (\emph{$2$-sided}, respectively) \emph{\fb}  graph. Graph $G$ is
\emph{maximal} if the addition of any edge destroys its \fby, in any of its drawings.
Analogously, we define the (maximal) $1$-sided or $2$-sided 
\emph{outer-\fb} and \emph{$2$-layer \fb} graphs. The
drawings we consider in this paper are \emph{almost simple}, meaning that no
two bundles of the same vertex cross. Note however that two edges incident to
the same vertex might cross in the $2$-sided model; see for an example Fig.~\ref{fig:k4-14-2sided}. 
A \emph{rotation system} describes for each vertex an order of its incident edges as they appear around it.

A vertex $u$ can be incident to more than one bundle. Let $B_u$ be one 
such bundle. We say that~$B_u$ \emph{is anchored at} vertex $u$, which is
the \emph{origin} of $B_u$. We denote by $|B_u|$ the \emph{size} of $B_u$,
that is, the number of edges represented by~$B_u$. Clearly, $|B_u| \leq \deg(u)$.
We refer to the endpoint of fan-bundle~$B_u$ different from~$u$ 
(where all the edges of $B_u$ are separated from each other) 
as the \emph{terminal} of $B_u$, and to the end vertex different
from $u$ of any edge in~$B_u$ as a \emph{tip} of $B_u$. 
For two crossing fan-bundles $B_u$ and $B_v$ anchored at vertices $u$ and $v$, we call
\emph{$B_uB_v$-following curve} a curve that starts at~$u$, follows~$B_u$ up to the
crossing point with~$B_v$, then follows~$B_v$, and ends at~$v$ in such a way
that it crosses neither bundles.

\section{Relationships with other graph classes}
\label{sec:relationships}

In this section, we discuss inclusion relationships between the classes of
$1$-sided and $2$-sided \fb graphs and other relevant classes of
nearly-planar graphs. In particular, we focus on the classes of $1$-planar and
fan-planar graphs, due to the immediate relationships determined by the
definition of \fb graphs. We also consider the well-studied class of
$2$-planar graphs~\cite{DBLP:conf/gd/AuerBGH12}, which has already been proven to
be incomparable with the class of fan-planar
graphs~\cite{DBLP:journals/tcs/BinucciGDMPST15}, despite the fact that their maximum edge-density is the
same, namely $5n-10$~\cite{DBLP:journals/corr/KaufmannU14,DBLP:journals/combinatorica/PachT97}.
Our findings are summarized in Fig.~\ref{fig:relationship}.

\begin{figure}[t]
	\centering
	\includegraphics[scale=0.8]{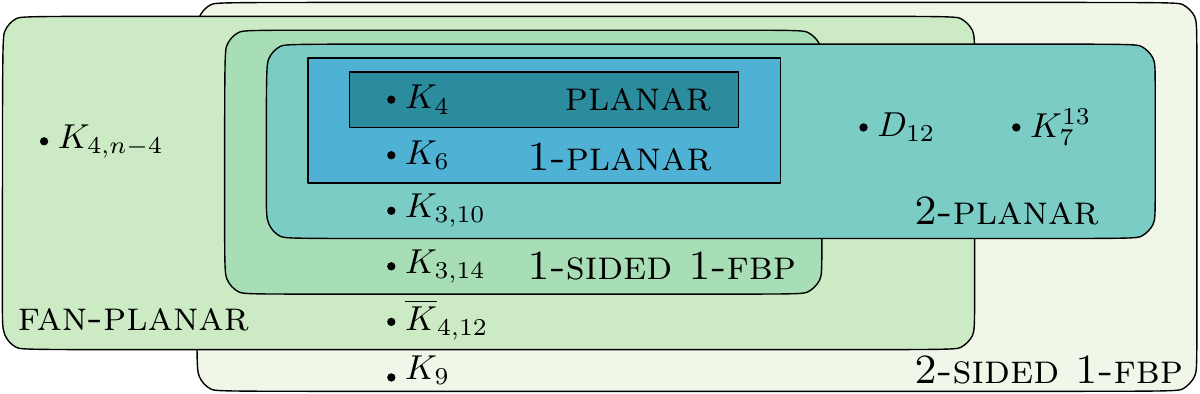}
	\caption{Relationships among graph classes proved in this paper.
	The graph denoted by $\overline{K}_{4,12}$ is obtained from the complete bipartite graph $K_{4,12}$
	by joining the four vertices of its first bipartition set on a path and the twelve vertices of its second bipartition set on a second path 
	(see Fig.~2(a) in~\cite{DBLP:journals/corr/KaufmannU14} or Fig.~\ref{fig:k412}).
	The graph denoted by $D_{12}$ corresponds to the graph obtained from the dodecahedral graph by adding a pentagram in each of its faces
	(see Fig.~2(b) in~\cite{DBLP:journals/corr/KaufmannU14} or Fig.~\ref{fig:d12}). 
	The graph denoted by $K_{7}^{13}$ is obtained by appropriately adjusting 13 copies of $K_7$ (see Fig.~8 in~\cite{DBLP:journals/tcs/BinucciGDMPST15} or Fig.~\ref{fig:kplanar-non-fanplanar}).}
	\label{fig:relationship}
\end{figure}

The inclusion relationship \textsc{$1$-planar} $\subseteq$ \textsc{$1$-sided \fb} $\subseteq$
\textsc{fan-planar} follows from the definition of $1$-sided \fb graphs, and the same holds
for the inclusion relationship \textsc{$2$-planar graphs} $\subseteq$ \textsc{$2$-sided \fb}. Also, Binucci
et al.~\cite{DBLP:journals/tcs/BinucciGDMPST15} proved that the class of $2$-planar 
graphs is incomparable with the class of fan-planar graphs. In particular,
they showed that the 3-partite graph $K_{1,3,10}$ is fan-planar but not 2-planar
and that the graph $K_{7}^{13}$, which is obtained by adjusting 13 copies of $K_7$ as depicted
in Fig.~\ref{fig:kplanar-non-fanplanar}, is 2-planar but not fan-planar.

\begin{figure}[b]
  \centering
  \includegraphics{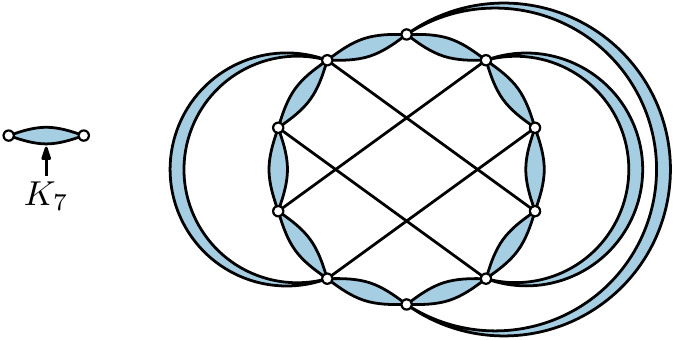}
  \caption{Illustration of the graph $K_7^{13}$ which has the property that it is $2$-planar, but not fan-planar~\cite{DBLP:journals/tcs/BinucciGDMPST15}.}
  \label{fig:kplanar-non-fanplanar}
\end{figure}

We already know that $K_4$ is planar and that $K_5$ and $K_6$ are $1$-planar,
and hence belong to all classes that we consider here.
Kaufmann and Ueckerdt~\cite{DBLP:journals/corr/KaufmannU14} proved that the
graph obtained from the dodecahedral graph by adding a pentagram in
each of its faces (denoted by~$D_{12}$ in Fig.~\ref{fig:relationship}) 
is $2$-planar, fan-planar, and meets exactly the maximum
density of these classes of graphs, namely $5n-10$; see Fig.~2(b)
in~\cite{DBLP:journals/corr/KaufmannU14} or Fig.~\ref{fig:d12}. As we will see in
Section~\ref{sec:density}, this graph is too dense to be $1$-sided \fb (and
hence $1$-planar). Since~$K_9$ contains more than $5n-10$ edges, it is
neither fan-planar nor $2$-planar. On the other hand,~$K_9$ is $2$-sided \fb,
as shown in Fig.~\ref{fig:k9-2sided}. We do not know whether $K_{10}$ is $2$-sided 
\fb or not, but we know that there exists some value of $n$ for which $K_{n}$
is not $2$-sided \fb, since $2$-sided \fb graphs have at most a linear number of
edges, as we prove in Section~\ref{sec:density}. An interesting observation
is that $K_{10}$ admits a quasi-planar drawing~\cite{franz-quasi-planar}.

\begin{figure}[t]
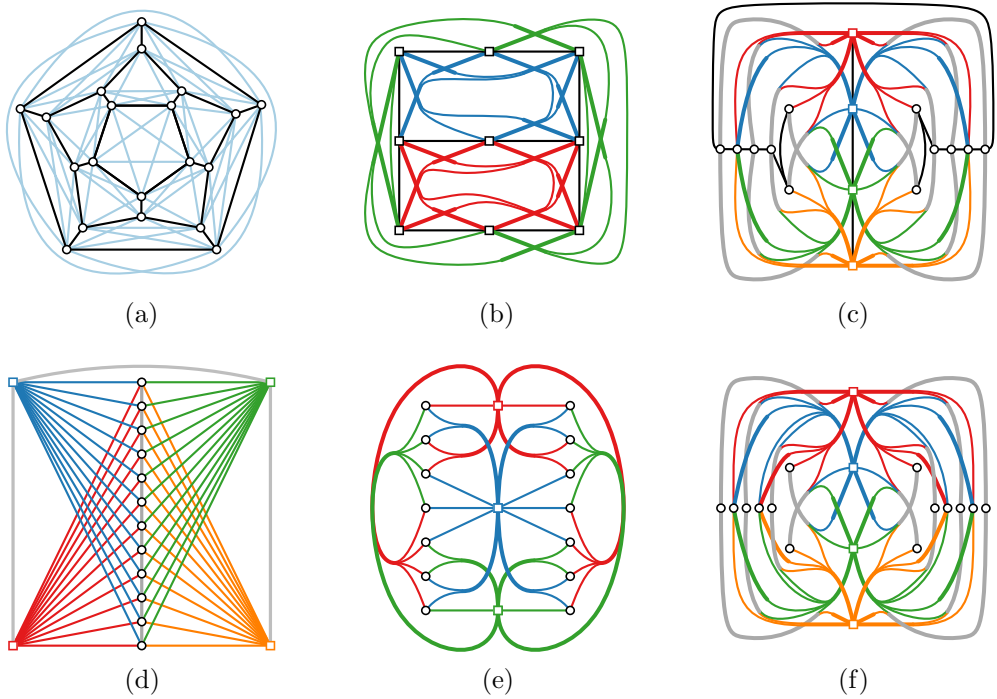

	\centering
    \subfloat[\label{fig:d12}]{\includegraphics[page=2]{relationships}}
    \hfil
    \subfloat[\label{fig:k9-2sided}]{\includegraphics[page=4]{relationships}}
    \hfil
    \subfloat[\label{fig:k412-2sided}]{\includegraphics[page=5]{relationships}}    

    \subfloat[\label{fig:k412}]{\includegraphics[page=3]{relationships}}
    \hfil    
    \subfloat[\label{fig:k314}]{\includegraphics[page=7]{relationships}}
    \hfil
    \subfloat[\label{fig:k4-14-2sided}]{\includegraphics[page=6]{relationships}}
    \caption{
	(a)~A straight-line drawing of graph $D_{12}$ obtained from the dodecahedral graph by adding a pentagram in each of its faces.
	(b)~A $2$-sided \fb drawing of $K_9$.
    (c)~A $2$-sided \fb drawing of graph $\overline{K}_{4,12}$ obtained from the complete bipartite graph $K_{4,12}$ by joining
    on a path the four vertices of its first bipartition set and
	on a second path the $12$ vertices of its second bipartition set.
	(d)~A fan-planar drawing of $\overline{K}_{4,12}$.
	(e)~A $1$-sided \fb drawing of $K_{3,14}$.
	(f)~A $2$-sided \fb drawing of~$K_{4,14}$.}
    \label{fig:relationships}
\end{figure}

In Fig.~\ref{fig:k412-2sided}, we show that the graph $\overline{K}_{4,12}$
obtained from the complete bipartite graph $K_{4,12}$ by joining 
the four vertices of its first bipartition set on a path and the twelve
vertices of its second bipartition set on a second path is $2$-sided \fb. 
As shown in Fig.~\ref{fig:k412}, this graph is 
fan-planar~\cite{DBLP:journals/corr/KaufmannU14}, but not $2$-planar (as it contains $K_{3,11}$ as
a subgraph, which is not $2$-planar; see Lemma~\ref{le:K3-n-kplanar}).
In addition, this particular graph cannot be $1$-sided \fb, as it
contains $62$ edges, while a $1$-sided \fb graph on $16$ vertices cannot have
more than $60$ edges (see Section~\ref{sec:density}).

We now give a proof of our previous claim that $K_{3,11}$ is not $2$-planar;
note that even $K_{3,14}$ is $1$-sided \fb, as shown in Fig.~\ref{fig:k314}.
We also show that $K_{3,10}$ is $2$-planar, by means of a more general proof
(which may be of its own interest) about the existence of $k$-planar drawings
of graphs $K_{3,n-3}$ where $n$ is a function of $k$. Note that in the proof of
Lemma~\ref{le:K3-n-kplanar} we assume that edges incident to the same vertex 
are not allowed to cross, which is a reasonable assumption in the study of 
topological graphs, and consequently of $k$-planar graphs.

\begin{lemma}\label{le:K3-n-kplanar}
For each integer $k \geq 0$, graph $K_{3,4k+2}$ is $k$-planar, while graph $K_
{3,4k+3}$ is not $k$-planar.
\end{lemma}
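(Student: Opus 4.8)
The plan is to prove the two halves separately, since they are of a fundamentally different character: the positive result ($K_{3,4k+2}$ is $k$-planar) is a direct construction, while the negative result ($K_{3,4k+3}$ is not $k$-planar) requires a counting argument.

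For the **positive part**, I would give an explicit $k$-planar drawing of $K_{3,4k+2}$. Place the three vertices of the small bipartition set as $a$, $b$, $c$, and arrange the $4k+2$ vertices of the large set so that the drawing has a regular, symmetric structure. The natural idea is to put two of the three vertices (say $a$ and $c$) on the outer boundary and route their edges to the $4k+2$ vertices without crossings, while the middle vertex $b$ sends its $4k+2$ edges across the structure so that each such edge accumulates crossings in a controlled way. The key is to bound the number of times any single edge is crossed by $k$. I expect the cleanest approach is an inductive or ``layered'' construction: group the $4k+2$ large-side vertices into blocks and show that each edge crosses at most $k$ others. I would verify the crossing count on the edges incident to $b$, which are the ones at risk of being crossed many times, and confirm that the assumption that edges sharing an endpoint do not cross is respected throughout.

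The **hard part will be the negative result**, that $K_{3,4k+3}$ is not $k$-planar. The plan is a crossing-counting argument. Suppose for contradiction that a $k$-planar drawing of $K_{3,4k+3}$ exists, with large-side vertices $v_1, \dots, v_{4k+3}$ and small-side vertices $a$, $b$, $c$. Each $v_i$ together with $a$, $b$, $c$ forms a copy of $K_{3,1}$, i.e.\ a ``claw'' of three edges; more usefully, each pair $v_i, v_j$ together with $a, b, c$ forms a copy of $K_{3,2} = \theta$-graph (three internally disjoint paths between two poles after contracting, or rather a $K_{2,3}$), and any planar drawing of $K_{2,3}$ partitions the plane into faces in a rigid way. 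I would consider the closed curves formed by pairs of paths $v_i a v_j$ and $v_i b v_j$ (and the third via $c$): for two indices $i, j$, the three length-two paths $v_i \to a \to v_j$, $v_i \to b \to v_j$, $v_i \to c \to v_j$ form three closed regions, and any further vertex $v_\ell$ with its three edges to $a, b, c$ must be placed relative to these regions. Counting how edges must cross these separating curves, and summing over all vertices, should yield that the total number of crossings forces some edge to be crossed more than $k$ times, contradicting $k$-planarity. The delicate point—and the main obstacle—will be setting up the counting so that the $+1$ beyond $4k+2$ is exactly what breaks the bound; this typically comes from a parity or pigeonhole step where the $4k+2$ vertices can be split evenly into ``sides'' of the separating curves but $4k+3$ cannot, forcing an extra crossing on an already-saturated edge.

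**Finally**, I would confirm that both arguments respect the stated convention that edges incident to a common vertex do not cross, and check the small base case $k=0$ by hand: $K_{3,2}$ is planar while $K_{3,3}$ is not, which matches the formula and anchors the induction. I expect the upper-bound construction to be routine once the right symmetric layout is found, so the real work lies in making the lower-bound counting tight enough to exclude exactly $K_{3,4k+3}$ and no fewer vertices.
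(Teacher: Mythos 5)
Your proposal is a plan rather than a proof, and the places where you defer the real work are exactly where the content of the lemma lies. For the positive half, the paper does not use your ``two mutually crossing-free stars plus one crossing star'' layout; it explicitly draws $K_{3,2k+1}$ in a half-plane with $u,v,w$ on a line and the $2k+1$ vertices above, routes the edges so that crossings are shared among all three stars, verifies by an explicit case list that every edge has at most $k$ crossings, and then obtains $K_{3,4k+2}$ by mirroring the half-plane drawing and identifying $u,v,w$. Note that in your layout every crossing would involve an edge of $b$, so the total number of crossings is at most $(4k+2)k=4k^2+2k$, which equals the Zarankiewicz crossing number of $K_{3,4k+2}$; every edge of $b$ would have to be crossed \emph{exactly} $k$ times with no slack anywhere, and you give no construction achieving this. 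This half cannot be dismissed as routine.

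The more serious gap is in the negative half. You expect the contradiction to come from ``a parity or pigeonhole step where the $4k+2$ vertices can be split evenly into sides of the separating curves but $4k+3$ cannot.'' That is not the mechanism, and nothing in a $k$-planar drawing forces the large-side vertices to be split evenly. The paper's argument needs two steps that your plan lacks. First, one must \emph{find} a planarly drawn separating cycle: the paper considers the $4k+2$ copies of $K_{2,2}$ on $\{u,v\}\times\{x_1,x_j\}$ and observes that the two edges at $x_1$ can absorb at most $2k$ crossings in total, so at least one such $K_{2,2}$ is crossing-free; your plan implicitly assumes that some $K_{2,3}$ on a pair $v_i,v_j$ is drawn planarly, which is neither justified nor needed. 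Second, the contradiction is a budget argument on that fixed $4$-cycle: its four edges can be involved in at most $4k$ crossings, yet $w$ and, by an iterative region-refinement argument, all of $x_3,\dots,x_{4k+3}$ are forced into the same region it bounds, and each of these $4k+1$ vertices forces at least one crossing on the cycle's edges, giving $4k+1>4k$. The ``$+1$'' thus comes from $4$ edges times $k$ crossings versus $4k+1$ trapped vertices, not from parity, and without these two steps your counting cannot be set up.
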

\begin{proof}
For a complete bipartite
graph~$K_{3,n-3}$, let $U=\{u,v,w\}$ be the set of three vertices in the first
bipartition set and let $V$ be the set of $n-3$ vertices in the second
bipartition set. Also, let $E=U \times V$ be the set of its edges.

We show how to obtain a $k$-planar drawing of graph~$K_{3,2k+1}$ such that
the vertices in~$U$ are drawn on the horizontal line $y=0$, the vertices in~$V$
are drawn on the horizontal line with $y=1$, and each edge in~$E$ is drawn
as a curve completely in the half plane above the horizontal line $y=0$; see
Fig.~\ref{fig:kplanar-K3,n}.

\begin{figure}[t]
  \centering
  \subfloat[\label{fig:kplanar-K3,n}]{\includegraphics[page=5]{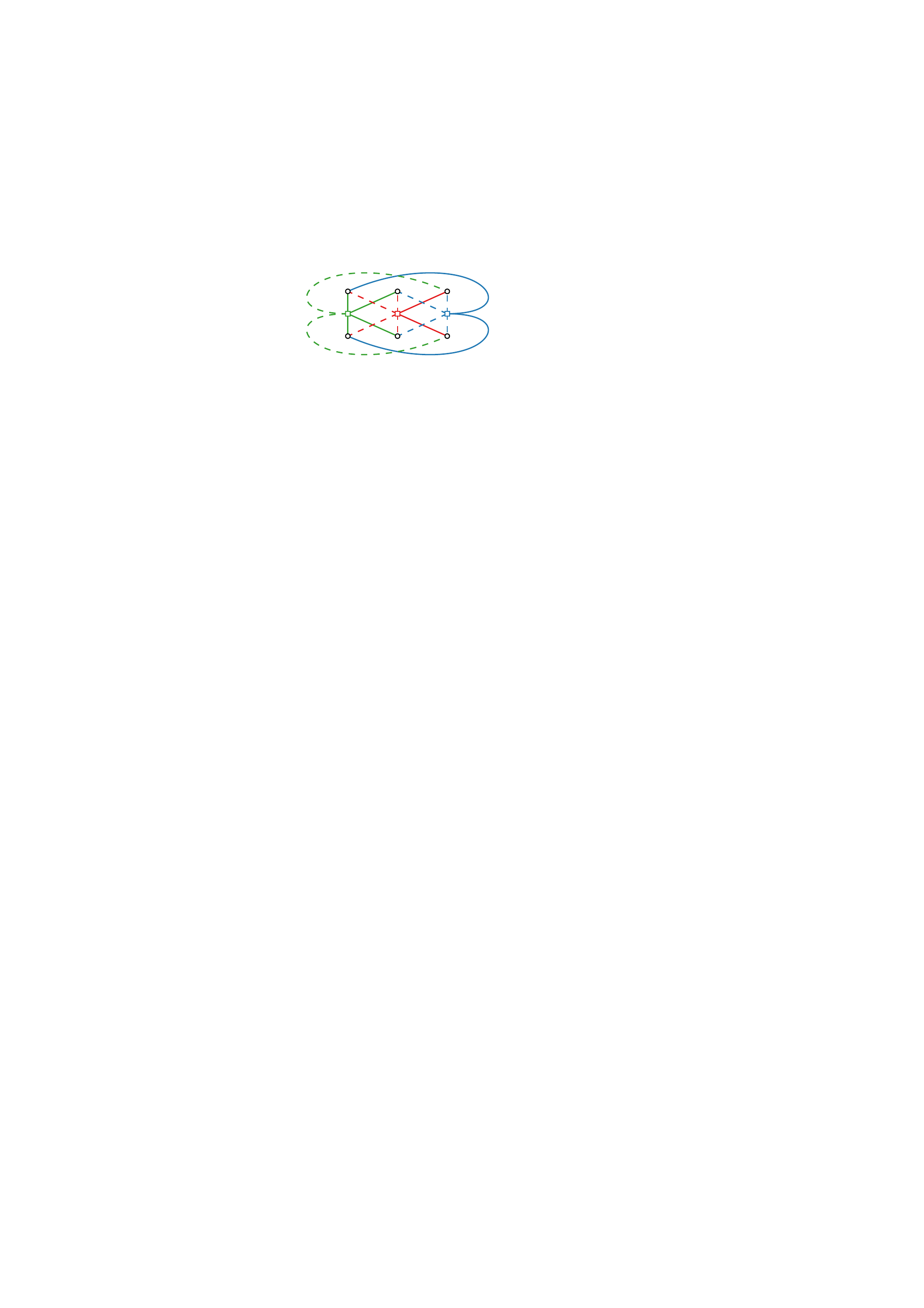}}
  \hfil
  \subfloat[\label{fig:1planar-K3,6}]{\includegraphics[page=2]{kplanar-k3n}}
  \caption{(a) A $k$-planar drawing of $K_{3,2k+1}$ in one half plane, and
    (b) a 2-planar drawing of~$K_{3,10}$.}
\end{figure}

Denote by $a_0,\ldots,a_k,b_1,\ldots,b_k$ the vertices of the second bipartition set of graph $K_{3,2k+1}$. 
We place vertex~$u$ at point~$(-1,0)$, vertex~$v$ at point~$(0,0)$, and vertex~$w$
at point~$(1,0)$. Then, for each $i=0,\dots,k$, we place vertex~$a_i$ at
point $(-\frac{i}{k},1)$. Symmetrically, for each $j=1,\dots,k$, we
place vertex~$b_j$ at point~$(\frac{j}{k},1)$. We draw the edges of $K_{3,2k+1}$
as follows.

\begin{enumerate}[label=E\arabic*.]
\item Each edge~$(v,a_i)$, for $i=0,\ldots,k$ is drawn as a straight-line segment;
\item each edge~$(v,b_j)$, for $j=1,\ldots,k$ is drawn as a straight-line segment;
\item each edge~$(u,a_i)$, for $i=0,\ldots,k$, is drawn as a straight-line segment;
\item each edge~$(w,b_j)$, for $j=1,\ldots,k$, is drawn as a straight-line segment;
\item edge~$(w,a_0)$ is drawn as a straight-line segment;
\item each edge~$(w,a_i)$, for $i=1,\dots,k$, is drawn as a curve that leaves~$a_i$
  from the top, goes to the right around~$b_k$, and enters~$w$ from the right, in
  such a way that $a_0,b_1,\dots,b_k,a_1,\ldots,a_k$ appear in this clockwise order around~$w$;
\item each edge~$(u,b_j)$, for $j=1,\dots,k$, is drawn as a curve that leaves~$b_j$
  from the top, goes to the left around~$a_k$, and enters~$u$ from the left, in
  such~a way that $a_0,a_1,\ldots,a_k,b_1,\dots,b_k$ appear in this counterclockwise order around~$u$.
\end{enumerate}
That way, we will get the following crossings.
\begin{enumerate}[label=C\arabic*.]
\item Edges~$(u,a_k)$, $(v,a_0)$, and $(w,b_k)$ are drawn crossing-free;
\item every edge~$(v,a_i)$ with $1\le i\le k$ crosses exactly every edge~$(u,a_j)$
  with $0\le j\le i-1$, and thus it has at most~$k$ crossings;
\item every edge~$(u,a_j)$ with $1\le j\le k-1$ crosses exactly every edge~$(v,a_i)$
  with $j+1\le i\le k$, and thus it has at most~$k$ crossings;
\item every edge~$(v,b_i)$ with $1\le i\le k$ crosses exactly every edge~$(w,b_j)$
  with $1\le j\le i-1$, plus the edge~$(w,a_0)$, and thus it has at most~$k$ crossings;
\item every edge~$(w,b_j)$ with $1\le j\le k-1$ crosses exactly every edge~$(v,b_i)$
  with $j+1\le i\le k$, and thus it has at most~$k$ crossings; 
\item edge~$(w,a_0)$ crosses every edge~$(v,b_j)$
  with $1\le j\le k$, and thus it has~$k$ crossings; 
\item every edge~$(w,a_i)$ with $1\le i\le k$ crosses exactly every edge~$(u,b_j)$
  with $1\le j\le k$, and thus it has~$k$ crossings;
\item every edge~$(u,b_j)$ with $1\le j\le k$ crosses exactly every edge~$(w,a_i)$
  with $1\le i\le k$, and thus it has~$k$ crossings.
\end{enumerate}

From the analysis above, it follows that no edge has more than $k$ crossings. Hence, the 
constructed drawing of $K_{3,2k+1}$ is $k$-planar. To
obtain a $k$-planar drawing for graph $K_{3,4k+2}$, we create two copies of
the drawing of $K_{3,2k+1}$, mirror one of them at the horizontal line
$y=0$, and identify the vertices~$u,v,w$ in the two drawings.
Fig.~\ref{fig:1planar-K3,6} shows such a $2$-planar drawing for graph $K_{3,10}$. 
This completes the proof of the first part of the statement.

For the second part of the statement, assume to the contrary that for some $k
\geq 0$ there is a $k$-planar drawing~$\Gamma$ of graph $G=K_{3,4k+3}$. As
above, we denote by $U=\{u,v,w\}$ the first bipartition set of $G$ and by
$V=\{x_1,\ldots,x_{4k+3}\}$ its second bipartition set.

We will first show that there is (at least) one subgraph~$G'=K_{2,2}$ of $G$
such that the drawing~$\Gamma'$ of~$G'$ contained in~$\Gamma$ is planar. For
each $j=2,\ldots, 4k+3$, let $G_j$ be the subgraph of $G$ induced by vertices
$u$ and $v$ of the first bipartition set of $G$ and by vertices $x_1$ and $x_j$
of its second bipartition set. Note that all subgraphs $G_2,\ldots,G_{4k+3}$ have
the edges $(u,x_1)$ and $(v,x_1)$ in common, which do not cross each other
in~$\Gamma$, because they are both incident to $x_1$. Analogously, the edges~$(u,x_j)$ and~$(v,x_j)$, with
$2\le j \le 4k+3$, do not cross each other. Thus, in any subgraph~$G_j$, we
can only have a crossing between~$(u,x_1)$ and~$(v,x_j)$, or between~$(v,x_1)$
and~$(u,x_j)$. Since~$\Gamma$ is a $k$-planar drawing, the edges~$(u,x_1)$
and~$(v,x_1)$ can only be crossed~$2k$ times in total. Hence, at least~$2k+2$
of the subgraphs from~$G_2,\ldots,G_{4k+3}$ (i.e.~excluding $G_1$) induce a crossing-free drawing.

Assume w.l.o.g.\ that the drawing~$\Gamma'$ of~$G'=G_2$ contained in~$\Gamma$
is planar. Note that $\Gamma'$ consists of a closed simple curve
through~$u,x_1,v,x_2$; see Fig.~\ref{fig:kplanar-interior-1}. We denote by $\mathcal I$
the bounded region enclosed by this curve, and by~$\mathcal O$ the unbounded
region outside this curve. We will now show that~$w$ can lie neither
in~$\mathcal I$ nor in~$\mathcal O$.

Suppose that~$w$ lies in the bounded region~$\mathcal I$. The case where $w$
lies in the unbounded region~$\mathcal O$ is symmetric. Since~$G'$ has
exactly four edges and~$\Gamma$ is a $k$-planar drawing, the edges of~$G'$
can be involved in at most~$4k$ crossings in total. Since~$w$ is adjacent to
all vertices~$x_3,\ldots,x_{4k+3}$, at least one of them, say $x_3$, has to
lie inside~$\mathcal I$, as in Fig.~\ref{fig:kplanar-interior-1}.
If both edges~$(u,x_3)$ and~$(v,x_3)$ are drawn completely inside~$\mathcal I$,
then they split~$\mathcal I$ into two bounded regions~$\mathcal I'$
and~$\mathcal I''$, delimited by the closed curves through $u,x_1,v,x_3$ and
through $u,x_3,v,x_2$, respectively; see Fig.~\ref{fig:kplanar-interior-2}.
In this case, however, we could apply the same argument as above to say that
there exists at least a vertex that lies in the interior of the same bounded
region as $w$, either $\mathcal I'$ or $\mathcal I''$. Note that the bounded
region we consider in this step is smaller and contains fewer vertices than
$\mathcal I$. Thus, by repeating this argument at most a linear number of
times, we can prove that there exists a vertex $x_j$, with $3 \leq j \leq 4k+3$,
lying inside the same bounded region $\mathcal{I}^*$ as $w$, such that one
of the edges $(u,x_j)$ and $(v,x_j)$ crosses an edge of the graph $G^*=K_{2,2}$
delimiting $\mathcal{I}^*$. To simplify the notation, assume $j=3$, $G^*=G'$,
and $\mathcal{I}^*=\mathcal{I}$; see Fig.~\ref{fig:kplanar-interior-3}.

\begin{figure}[tb]
  \centering
  \subfloat[\label{fig:kplanar-interior-1}]{\includegraphics[page=1]{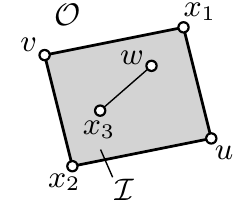}}
  \hfil
  \subfloat[\label{fig:kplanar-interior-2}]{\includegraphics[page=2]{kplanar-interior}}
  \hfil
  \subfloat[\label{fig:kplanar-interior-3}]{\includegraphics[page=3]{kplanar-interior}}
  \caption{Illustration of the proof of Lemma~\ref{le:K3-n-kplanar}:
  (a)~$w$ lies inside~$\mathcal I$,
  (b)~$(u,x_3)$ and $(v,x_3)$ are planar, and
  (c)~$(u,x_3)$ or $(v,x_3)$ is crossed}
  \label{fig:kplanar-interior}
\end{figure}

Hence, there are at most $4k-1$ crossings between the edges of $G'$ and edges
not incident to~$x_3$. This implies that another one of the remaining~$4k$
vertices $x_4,\ldots,x_{4k+3}$ must lie inside $\mathcal I$. By iteratively
applying this argument, we conclude that all vertices $x_3,\ldots,x_{4k+3}$
have to lie inside~$\mathcal I$ and that, for every~$i=3,\ldots,4k+3$, at
least one of the edges~$(u,x_i)$ and~$(v,x_i)$ has to cross an edge of~$G'$.
However, this implies that the~four edges of~$G'$ are involved in at
least~$4k+1$ crossings in total; a contradiction. Thus, $K_{3,4k+3}$ has no
$k$-planar drawing.
\end{proof}

As already mentioned, the complete bipartite graph $K_{4,n-4}$ is fan-planar for
every~$n \geq 4$. In the following we will prove that there exists a value
of~$n$ such that $K_{4,n-4}$ is not $2$-sided \fb, which also proves that fan-planar
 (and hence quasi-planar) graphs do not form a subclass of $2$-sided \fb
graphs. Note that a $2$-sided \fb drawing can be constructed for $K_{4,14}$;
see Fig.~\ref{fig:k4-14-2sided}. Before we proceed with the detailed proof
of our claim, we need two auxiliary lemmas.

\begin{lemma}\label{lem:not-many-common}
  Let $\Gamma$ be a $2$-sided \fb drawing of graph $K_{3,9}$. Then, at least
  two of the vertices of the first bipartition set of $K_{3,9}$ must use more than one fan-bundle in $\Gamma$.
\end{lemma}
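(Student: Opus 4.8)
The plan is to argue the contrapositive. Suppose that at most one of the three vertices of the first bipartition set uses more than one fan-bundle; then at least two of them, say $v$ and $w$, route all their nine edges through a single fan-bundle, $B_v$ and $B_w$ respectively (each of size $9$), with terminals $t_v$ and $t_w$. Writing $U=\{u,v,w\}$ and $V=\{x_1,\dots,x_9\}$, it then suffices to reach a contradiction by showing that the remaining vertex $u$ cannot be adjacent to all of $x_1,\dots,x_9$. Note that $u,v,w$ play symmetric roles in $K_{3,9}$, so fixing $v$ and $w$ as the two single-bundle vertices is without loss of generality.

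The first ingredient is a structural property of the $2$-sided model: since every edge has exactly two fan-bundles (one at each endpoint), each fan-bundle is crossed by at most one other fan-bundle, and the unbundled middle part of every edge is crossing-free, each edge of $\Gamma$ crosses at most two distinct fan-bundles in total. In particular every edge $(u,x_i)$ crosses at most two fan-bundles. The second ingredient is a separating structure built from $B_v$ and $B_w$. The nine unbundled middle parts emanating from $t_v$ are pairwise crossing-free, and likewise those from $t_w$, so together they form two vertex-disjoint plane stars. Completing each spoke to its tip $x_i$ through the fan-bundles anchored at the $x_i$'s yields a subdivision of $K_{2,9}$ with hubs $t_v,t_w$ whose only crossable portions lie in the small neighborhoods of the $x_i$'s. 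Discarding the pendant stems $vt_v$ and $wt_w$, which do not separate the plane (so the possible $B_v$--$B_w$ crossing is irrelevant here), this $K_{2,9}$ splits the plane into nine lens-shaped regions arranged cyclically as $L_1,\dots,L_9$, each vertex $x_i$ lying on the boundary arc shared by two consecutive regions. Since every such arc is uncrossable except near its own vertex $x_i$, an edge can move from a region to an adjacent one only by crossing a fan-bundle anchored at the corresponding $x_i$.

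Placing $u$ in some region, say $L_s$, the first ingredient now bites: each edge $(u,x_i)$ crosses at most two fan-bundles, hence leaves $L_s$ across at most two of the separating arcs, so its other endpoint must lie on a region within cyclic distance two of $L_s$. The vertices on the boundaries of $L_{s-2},\dots,L_{s+2}$ are exactly six consecutive elements of $V$ in the cyclic order, so $u$ can be joined to at most six of the nine vertices of $V$. As $u$ is adjacent to all nine, this is the desired contradiction.

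The main obstacle is the rigorous justification of the separating structure. Because two edges incident to the same first-set vertex are allowed to cross in the $2$-sided model, the fan-bundles anchored at distinct $x_i$'s need not be pairwise disjoint, so the completed $K_{2,9}$ is not literally plane and the ``cyclic sequence of lenses'' cannot be read off from the whole figure. I would resolve this by defining the regions and the cyclic order purely from the genuinely plane forest formed by the two uncrossable stars at $t_v$ and $t_w$, and then proving the key inequality that the cyclic distance an edge $(u,x_i)$ can traverse is bounded by the number of fan-bundles it crosses: each region-change must be charged to the crossing of a distinct fan-bundle at some $x_j$, and the at-most-two budget of each edge, together with the fact that each fan-bundle at an $x_j$ can itself be crossed only once, prevents these ``soft walls'' from being circumvented. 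Establishing this charging scheme, so that the two-crossing budget still confines the endpoint $x_i$ to six consecutive cyclic positions, is where the detailed topological bookkeeping lies.
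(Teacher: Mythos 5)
Your reduction to the contrapositive, your observation that in the $2$-sided model every edge is crossed by at most two fan-bundles (one per end-bundle), and your final count (two wall-crossings confine the far endpoint to six consecutive positions out of nine) are all sound, and the last step closely mirrors how the paper finishes. But the central claim of your argument --- that the nine paths $t_v \to x_i \to t_w$ decompose the plane into nine lens-shaped regions in a well-defined cyclic order --- is exactly the part you defer, and it is not bookkeeping: it is the crux of the lemma. The two stars of unbundled segments at $t_v$ and $t_w$ are indeed plane, but two disjoint plane trees do not separate the plane at all, so the regions can only come from the full walls, including the crossable fan-bundle portions near the $x_i$'s. Those portions can cross each other (a fan-bundle at $x_i$ containing $(v,x_i)$ may cross a fan-bundle at $x_j$ containing $(w,x_j)$, or even one containing $(v,x_j)$), so the cyclic order of the spokes around $t_v$ need not agree with the cyclic order around $t_w$; when walls cross, there is no cyclic sequence of lenses and no well-defined ``cyclic distance'' for your charging scheme to bound. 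Your proposed fix of reading the cyclic order off the plane forest cannot work as stated, since there are two stars giving two potentially incompatible orders and neither alone bounds any region.

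This is precisely the difficulty the paper's proof is organized around: it isolates the crossings between an edge of one hub and an edge of the other (``intersecting pairs''), shows that each such crossing exhausts the budgets of both fan-bundles involved and hence produces a completely uncrossable curve joining the two terminals, and then does a case analysis on the number of such pairs (three or more are immediately fatal; with at most two, the remaining $\geq 5$ clean walls cut the surviving region into $\geq 6$ subregions and the two-bundle budget of the third vertex's edges gives the contradiction). So the trade-off you gesture at --- wall-wall crossings consume the very crossing capacity that would let the third vertex's edges pass --- is the missing idea that has to be made precise before your region count applies. Until you carry out that case analysis (or an equivalent argument reconciling the two cyclic orders), the proof has a genuine gap at its load-bearing step.
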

\begin{proof}
Denote by $U=\{u,v,w\}$ the first bipartition set of $K_{3,9}$ and by
$V=\{x_1,\ldots,x_9\}$ its second bipartition set. Consider the graph $K_{2,9}=\{u,v\} \times V$ 
and assume that there exists a drawing $\Gamma$ of this graph in which $u$ and $v$
are connected to all of $x_1,\dots,x_9$ by means of single fan-bundles $B_u$
and $B_v$, respectively. In other words, there exists a fan-bundle $B_u$ (a fan-bundle $B_v$) anchored at $u$ (at~$v$) whose tips are $x_1,\dots,x_9$.

Consider two edges $(u,x_i)$ and $(v,x_j)$, with $1 \leq i \neq j \leq 9$;
if these edges cross each other in $\Gamma$, we say
that they form an \emph{intersecting pair}; see the two pairs
$\langle (u,x_1),(v,x_2)\rangle$ and $\langle (u,x_3),(v,x_4)\rangle$ in
Fig.~\ref{fig:pair-intersect}. We show that, whatever is the number of intersecting pairs, it is
not possible to add the third vertex $w$ to~$\Gamma$ and connect it to all
vertices $x_1,\dots,x_9$, so to obtain a $2$-sided \fb drawing of~$K_{3,9}$.

The proof is based on the following observation. For any two edges forming an
intersecting pair $\langle (u,x_i),(v,x_j)\rangle$, consider the curve connecting the
terminals of $B_u$ and $B_v$, constructed as follows. First follow $(u,x_i)$ from the terminal
of~$B_u$ till the intersection point of the bundles anchored at $x_i$ and $x_j$ containing $(u,x_i)$ and $(v,x_j)$, respectively; then follow $(v,x_j)$ till the
terminal of $B_v$; see the black dotted lines in Fig.~\ref{fig:pair-intersect}.
By construction, this curve is not crossed by any edge in~$\Gamma$ that is not incident
to either $x_i$ or $x_j$. This already implies that there exist no three intersecting pairs. In this
case, in fact, there exists no placement for~$w$ that allows us to connect it to
all the vertices $x_1,\dots,x_9$ (and in particular to the at least six
vertices among $x_1,\dots,x_9$ involved in the intersecting pairs) without crossing at least one of
such curves.

\begin{figure}[t]
	\centering
	\includegraphics[scale=1.1]{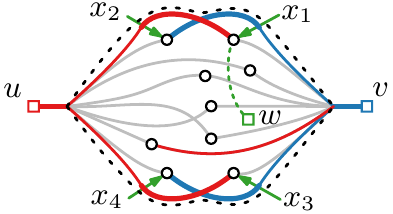}
	\caption{Illustration for the proof of Lemma~\ref{lem:not-many-common}, when
    there exist two intersecting pairs $\langle (u,x_1),(v,x_2)\rangle$ and
    $\langle (u,x_3),(v,x_4)\rangle$.}
	\label{fig:pair-intersect}
\end{figure}

Suppose now that there exist exactly two intersecting pairs, as in
Fig~\ref{fig:pair-intersect}. W.l.o.g., let
$\langle (u,x_1), (v,x_2)\rangle$ and $\langle (u,x_3),(v,x_4)\rangle$ be these pairs. Since the two
curves defined by these two pairs cannot be crossed in $\Gamma$, there exists
one of the two regions, say $R$, delimited by these curves that contains all
the vertices $x_1,\dots,x_9$, as well as vertex $w$, in its interior.

Consider now the five vertices $x_5,\dots,x_9$ and the five paths between $u$
and $v$ passing through these vertices. Observe that the edges of these paths
may cross each other, but the only crossings can be either between two edges
incident to~$u$ or between two edges incident to $v$, as otherwise there
would be an additional intersecting pair. Consider the subregions of $R$
defined by the arrangement of the curves representing these paths. Note that
these subregions are at least six, which happens when all the five paths are
crossing-free. By the previous observation on the possible crossings between
these paths, we can conclude that, if we place $w$ in any of these subregions,
either the edges connecting $w$ to $x_1$ and $x_2$, or those connecting $w$
to $x_3$ and $x_4$ have to cross edges of at least three of the five paths;
see Fig.~\ref{fig:pair-intersect}. This is not possible, since the edges of
two different paths cannot be bundled together and since any edge incident to
$w$ has only two fan-bundles.

The case in which there exists at most one intersecting pair is analogous. In
fact, in this case we can consider the region $R$ as the whole plane, and use
the at least seven paths not involved in the intersecting pair to make the
same argument as above. This concludes the proof of the lemma.
\end{proof}

\begin{lemma}\label{lem:many-terminals}
  In any $2$-sided \fb drawing of graph $K_{3,n-3}$, there is a fan-bundle
  containing at least $(n-2)/8$ edge-segments.
\end{lemma}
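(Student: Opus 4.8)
The plan is to bound the size $s$ of the largest fan-bundle from below by showing that a small value of $s$ would force the entire drawing to be $k$-planar for a small $k$, which $K_{3,n-3}$ cannot be once $n$ is large, by Lemma~\ref{le:K3-n-kplanar}. In other words, the many crossings that $K_{3,n-3}$ necessarily produces have to be ``absorbed'' by bundle-crossings, and since each bundle is crossed at most once, some bundle must be thick.

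First I would bound the number of crossings per edge in terms of $s$. In a $2$-sided \fb drawing an edge $e$ has exactly two fan-bundles, one anchored at each of its endpoints, and its middle part is crossing-free, so every crossing on $e$ occurs along one of these two bundles. Each bundle crosses at most one other bundle, and that bundle has at most $s$ edge-segments; since $e$ meets every strand of a bundle that its own bundle crosses, each of the two events contributes at most $s$ crossings to $e$. Hence every edge is crossed at most $2s$ times, i.e.\ the drawing is $(2s)$-planar. Since a $k$-planar $K_{3,m}$ forces $m\le 4k+2$ (the contrapositive of Lemma~\ref{le:K3-n-kplanar}, using that $K_{3,4k+3}$ is a subgraph of $K_{3,m}$ for all $m\ge 4k+3$), taking $k=2s$ and $m=n-3$ yields $n-3\le 8s+2$, that is, $s\ge (n-5)/8$. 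This is within an additive constant of the claimed $(n-2)/8$, so the remaining work is to sharpen the boundary bookkeeping of Lemma~\ref{le:K3-n-kplanar} (equivalently, to show that the extremal edge carries slightly more than $(n-5)/4$ crossings) to obtain the stated constant. Dually, one could replace Lemma~\ref{le:K3-n-kplanar} by the crossing number of $K_{3,n-3}$, distributing $\mathrm{cr}(K_{3,n-3})$ crossings over the bundle-crossings, but this route only gives a bound with denominator $12$, so the $k$-planar argument is the one matching the denominator $8$.

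The hard part will be the interface with Lemma~\ref{le:K3-n-kplanar}, whose proof explicitly assumes that edges incident to a common vertex do not cross, whereas a $2$-sided \fb drawing may violate this: two edges $(u,x)$ and $(v,x)$ sharing the tip $x$ cross precisely when the fan-bundle of $(u,x)$ at $u$ crosses the fan-bundle of $(v,x)$ at $v$. Thus the $(2s)$-planar drawing produced above is not a priori a topological $k$-planar drawing of the kind required by the lemma. The main obstacle is therefore to reconcile this, either by removing the incident crossings through a local rerouting around the affected tips without raising any edge's crossing count beyond $2s$, or by replaying the $K_{2,2}$-region argument of Lemma~\ref{le:K3-n-kplanar} directly at the level of fan-bundles, using ``each bundle is crossed at most once'' in place of ``each edge is crossed at most $k$ times.'' Either route must also carry the constant tracking needed to pass from $(n-5)/8$ to the asserted $(n-2)/8$.
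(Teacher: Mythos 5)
Your argument is the paper's argument read contrapositively. The paper takes the edge forced by Lemma~\ref{le:K3-n-kplanar} to have at least $1+(n-6)/4=(n-2)/4$ crossings, observes that these crossings are split between its two fan-bundles, so one of the two crosses at least $(n-2)/8$ edges, and then uses the fact that a fan-bundle crosses at most one other fan-bundle to conclude that all of these edges lie in a single opposing bundle, which therefore contains at least $(n-2)/8$ edge-segments. That last step---converting ``some bundle is crossed by many edges'' into ``some bundle \emph{contains} many edge-segments''---appears in your write-up only implicitly, as the reason each edge has at most $2s$ crossings, but it is the same fact, and your bound ``$n-3\le 8s+2$'' is exactly the paper's inequality solved for $s$ instead of for $n$.

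Concerning the two obstacles you flag, neither is resolved by a device you are missing. For the constant: the paper obtains $(n-2)/4$ by writing ``not $k$-planar for $k\le(n-6)/4$, hence some edge has at least $1+(n-6)/4$ crossings,'' i.e., by adding $1$ to the real-valued threshold rather than to its floor; your $s\ge(n-5)/8$ (equivalently, some edge has at least $\lceil(n-5)/4\rceil$ crossings) is the honest integer count, and the two coincide only when $n\equiv 2\pmod 4$. There is no hidden sharpening to find. For the adjacent-crossings issue: you are right that Lemma~\ref{le:K3-n-kplanar} is proved under the assumption that edges sharing an endpoint do not cross, whereas a $2$-sided \fb drawing of $K_{3,n-3}$ can violate this (e.g., $(u,x_i)$ and $(u,x_j)$ cross whenever their bundles anchored at $x_i$ and $x_j$ cross; almost-simplicity only forbids crossings between two bundles of the \emph{same} vertex). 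The paper applies Lemma~\ref{le:K3-n-kplanar} to the \fb drawing without comment, so this gap is present in the published proof as well; your proposal does not omit a repair that the paper supplies. In short, your proof is the paper's proof, stated more cautiously, and the caveats you raise are genuine but are inherited from, not introduced relative to, the original.
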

\begin{proof}
  By Lemma~\ref{le:K3-n-kplanar}, graph $K_{3,n-3}$ is not $k$-planar for
  $k \leq (n-6)/4$. This implies that in any drawing of $K_{3,n-3}$ there is at
  least one edge with at least $1+(n-6)/4=(n-2)/4$ crossings. Since in a
  $2$-sided \fb drawing every edge has crossings only at its two fan-bundles,
  there is one of such fan-bundles that crosses at least $(n-2)/8$ edges. Since
  all these edges must be bundled together, the statement follows.
\end{proof}
\noindent We are now ready to give the detailed proof of our initial claim, i.e., that there exists a value of~$n$ such that~$K_{4,n-4}$ is not $2$-sided \fb.
\begin{theorem}
  Graph $K_{4,n-4}$ is not $2$-sided \fb for $n \geq 571$.
\end{theorem}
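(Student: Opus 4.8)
The plan is to assume for contradiction that a $2$-sided \fb drawing $\Gamma$ of $K_{4,n-4}$ exists with $n \ge 571$, and to extract from it a subconfiguration that violates Lemma~\ref{lem:not-many-common}. Write $U=\{u_1,u_2,u_3,u_4\}$ for the first bipartition set and $V$ for the second, with $|V|=n-4$. The structural fact I would rely on throughout is that deleting vertices and edges from $\Gamma$ never creates new crossings and only shrinks fan-bundles; hence every induced sub-drawing is itself a valid $2$-sided \fb drawing, and each of its fan-bundles is contained in a fan-bundle of $\Gamma$ with the same anchor.

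First I would apply Lemma~\ref{lem:many-terminals} to the sub-drawing induced by $\{u_1,u_2,u_3\}\cup V$, which is a drawing of $K_{3,n-4}$. This yields a fan-bundle $B_1$, anchored (without loss of generality) at $u_1$, whose tip set $S\subseteq V$ satisfies $|S|\ge (n-3)/8$. For $n\ge 571$ this gives $|S|\ge 71$. Next I would nest a second application of the same lemma, but now restricted to $S$: the sub-drawing induced by $\{u_2,u_3,u_4\}\cup S$ is a drawing of $K_{3,|S|}$, and Lemma~\ref{lem:many-terminals} produces a fan-bundle $B_2$, anchored at one of $u_2,u_3,u_4$ (say $u_2$), all of whose tips lie in $S$ and number at least $(|S|+1)/8\ge 72/8=9$. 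Because $B_1$ and $B_2$ are anchored at distinct vertices $u_1\ne u_2$ and every tip of $B_2$ is also a tip of $B_1$, I now have two first-set vertices each joined to nine common vertices of $V$ by a single fan-bundle.

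To close the argument I would restrict $\Gamma$ to $u_1$, $u_2$, a third vertex of $U$ (e.g.\ $u_3$), and nine of the common tips, obtaining a $2$-sided \fb drawing of $K_{3,9}$ in which both $u_1$ and $u_2$ reach all nine second-set vertices through a single fan-bundle. This directly contradicts Lemma~\ref{lem:not-many-common}, which guarantees that at least two of the three first-set vertices must use more than one fan-bundle, leaving room for at most one single-bundle vertex. The threshold $n\ge 571$ is precisely what chains the two nested bounds together: $(n-3)/8\ge 71$ forces $|S|\ge 71$, and $|S|\ge 71$ in turn forces $(|S|+1)/8\ge 9$.

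I expect the main obstacle to be the bookkeeping in the restriction argument rather than the counting. The delicate point is to verify that the fan-bundle extracted in each sub-drawing genuinely corresponds to a fan-bundle of the full drawing $\Gamma$ on a common tip set, so that the nine selected $V$-vertices are simultaneously tips of one single bundle at $u_1$ and of one single bundle at $u_2$; only then does the restricted $K_{3,9}$ drawing have two vertices with a single fan-bundle, which is exactly the hypothesis ruled out by Lemma~\ref{lem:not-many-common}.
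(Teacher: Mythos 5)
Your proof is correct and follows essentially the same route as the paper's: two nested applications of Lemma~\ref{lem:many-terminals} (first to a $K_{3,n-4}$ subgraph, then to the $K_{3,|S|}$ subgraph on the tips of the first large bundle) produce two first-set vertices joined by single fan-bundles to nine common second-set vertices, which contradicts Lemma~\ref{lem:not-many-common} exactly as in the paper. Your explicit remark that restricting the drawing preserves $2$-sided $1$-fan-bundle-planarity and only shrinks fan-bundles without changing their anchors is a point the paper leaves implicit, but otherwise the argument and the arithmetic yielding the threshold $n\ge 571$ coincide.
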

\begin{proof}
  Assume to the contrary that $K_{4,n-4}$ admits a $2$-sided \fb drawing~$\Gamma$
  for some $n\geq 571$. First, consider any subgraph $K_{3,n-4}$ of $K_{4,n-4}$.
  By Lemma~\ref{lem:many-terminals}, there exists a fan-bundle~$B_u$
  anchored at a vertex~$u$ with at least $(n-3)/8$ edge-segments. Since all the
  vertices in the second bipartition set have degree~$3$, vertex~$u$ belongs to the
  first bipartition set.
  Consider now the subgraph $K_{3,(n-3)/8}$ of $K_{4,n-4}$ that is composed of
  the three vertices of the first bipartition different from~$u$ and from
  the~$(n-3)/8$ vertices that are connected to~$u$ by means of~$B_u$. By
  Lemma~\ref{lem:many-terminals}, there exists a fan-bundle~$B_v$ anchored at a
  vertex~$v$ with at least $\frac{1+(n-3)/8}{8}=(n+5)/64$ edge-segments. Again,
  vertex~$v$ belongs to the first bipartition set.

  We have now found $(n+5)/64$ vertices that are connected to~$u$ and to~$v$ by
  means of single fan-bundles~$B_u$ and~$B_v$ in~$\Gamma$. For $n \geq 571$,
  this value is at least $9$, which contradicts Lemma~\ref{lem:not-many-common}.
  Hence, the proof of the theorem follows.
\end{proof}

\section{Density}
\label{sec:density}
In the following section, we consider Tur\'an-type problems for \fb graphs. 
Namely, we ask what is the maximum number of edges that an $n$-vertex $1$-sided or $2$-sided \fb graph can have. 
We provide answers to this question both in the general case of the problem and in the outer and $2$-layer variants. 

\subsection{Density of $1$-sided \fblong graphs}
\label{subsec:densityonesided}

We first study the density of general $1$-sided \fb graphs, and then we appropriately adjust for the outer and $2$-layer variants.

\begin{theorem}\label{thm:density_general}
	A $1$-sided \fb graph with $n \geq 3$ vertices has at most $(13n-26)/3$
  edges, which is a tight bound for infinitely many values of $n$.
\end{theorem}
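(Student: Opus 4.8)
The plan is to prove the upper bound $(13n-26)/3$ via a discharging-style counting argument on the planar structure underlying a $1$-sided \fb drawing, and then to supply a matching lower bound construction for infinitely many $n$.

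\textbf{Upper bound.} First I would fix a $1$-sided \fb drawing $\Gamma$ of $G$ and extract from it an associated planar graph. Since the drawing is $1$-sided, each edge has exactly one fan-bundle, and every fan-bundle crosses at most one other fan-bundle. I would consider the planarization obtained by treating the at-most-one crossing point of each pair of crossing bundles as a dummy vertex, but more usefully I would work with the \emph{planar skeleton} consisting of the crossing-free middle/unbundled parts together with one representative structure per crossing. The key idea is that a single bundle crossing, which involves two bundles of sizes $p$ and $q$, accounts for $p\cdot q$ pairwise edge crossings but is charged only once; this is exactly the saving that lets \fb graphs exceed the $1$-planar density. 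I would classify edges into crossing-free edges and crossed edges (those whose bundle participates in a crossing), and bound the number of edges in each class by charging them to faces of the planar skeleton, using that the unbundled middle parts are pairwise non-crossing and that two bundles at the same vertex do not cross (almost-simplicity). The target constant $13/3$ strongly suggests that the extremal local configuration is a planar triangulation in which certain faces are ``filled'' by a bounded pattern of bundled crossing edges contributing a fixed surplus; I would set up Euler's formula $n-m+f=2$ on the skeleton, assign to each face a weight equal to the number of $G$-edges routed through it, and show the average face weight is bounded so that $m \le (13n-26)/3$.

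The main obstacle, and where the real work lies, is the precise accounting of how many genuine $G$-edges a single crossing configuration can contribute without violating the $1$-sided or almost-simple constraints. Because a crossing is between two \emph{bundles} rather than two edges, a local gadget around one crossing can carry many edges, and I must bound the bundle sizes (and the number of edges that can be packed into the faces incident to a crossing) in terms of the local planar structure. I expect to need a lemma bounding, for a maximal drawing, the size of each bundle and the interaction between adjacent bundles sharing a vertex, analogous to the kite/X-configuration analysis used for $1$-planar graphs but adapted to bundles. Getting the constant exactly $13/3$ rather than something larger will require a careful, possibly case-based, discharging argument that redistributes the surplus from crossing faces to triangular faces.

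\textbf{Lower bound.} For tightness I would exhibit, for infinitely many $n$, a $1$-sided \fb graph with exactly $(13n-26)/3$ edges. The natural approach is to build a planar triangulation as a scaffold and insert into a constant fraction of its faces a fixed optimal gadget (a small \fb drawing, such as the $K_5\setminus e$ or $K_6$-type configurations illustrated in Fig.~\ref{fig:introduction}) that realizes the extremal edge-to-vertex surplus the upper-bound analysis identifies as unavoidable-but-saturable. I would choose the gadget and the base triangulation so that each inserted copy adds a fixed number of vertices and a fixed number of edges matching the slope $13/3$, and then verify by direct construction that the global drawing remains $1$-sided \fb: each new bundle crosses at most one other bundle, no two bundles at a common vertex cross, and all middle segments stay crossing-free. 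Choosing $n$ to range over an arithmetic progression (so that $3 \mid (13n-26)$ and the gadget count is integral) then yields the infinite family attaining the bound.
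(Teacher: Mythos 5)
Your proposal does not yet contain a proof of the upper bound: the step you yourself flag as ``where the real work lies'' --- the precise accounting of how many edges a single bundle crossing can contribute --- is exactly the content of the theorem, and it is left open. The paper's argument supplies this missing piece with a concrete mechanism: for a maximally dense drawing, each pair of crossing fan-bundles $B_u,B_v$ is, by maximality, enclosed by a cycle of uncrossed edges (the base-edge $(u,v)$ together with $(v,u_1),(u_1,u_2),\dots,(v_\nu,u)$ drawn alongside the bundles), and one can rewire the edges of $B_v$ into planar edges inside that cycle, losing at most $2$ edges while the affected region accounts for at least $3$ faces of the resulting planar multigraph. Summing over all crossings against Euler's formula gives $m\le 3n-6+2\lfloor(2n-4)/3\rfloor\le(13n-26)/3$. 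Nothing in your outline isolates this ``at most $2$ extra edges per at least $3$ faces'' ratio, and your guess that the extremal configuration is a filled \emph{triangulation} is wrong: the extremal planar skeleton has all faces of length~$5$, not $3$ (a triangulation leaves no room for extra edges among the three mutually adjacent face vertices).

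The same misidentification undermines your lower bound. Inserting gadgets into the faces of a triangulation cannot reach slope $13/3$: without new vertices a triangular face admits no new edges at all, and with new vertices the vertex-to-edge arithmetic of any $1$-sided \fb gadget (e.g.\ $K_5\setminus e$ contributes $2$ new vertices and only $6$ new edges per triangle, giving slope $3$) falls short. The correct construction starts from a planar graph all of whose faces are pentagons, having $(5n-10)/3$ edges and $(2n-4)/3$ faces, and adds four crossing edges inside each pentagonal face as in Fig.~\ref{fig:k5}, yielding exactly $(5n-10)/3+4(2n-4)/3=(13n-26)/3$ edges. So while your high-level framework (planarize, apply Euler's formula, match with a face-filling construction) points in the right direction, both the quantitative core of the upper bound and the actual extremal family are missing or incorrect as stated.
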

\begin{proof}
  Let $\Gamma$ be a $1$-sided \fb drawing of a \emph{maximally dense} $1$-sided
  \fb graph~$G$ with $n$ vertices, namely, a graph of this class with the
  largest possible number of edges. To estimate the maximum number of edges
  that $G$ may contain, we will first appropriately transform $G$ into a (not
  necessarily simple) maximal planar graph that contains no pairs of \emph{
  homotopic parallel edges}, i.e., both the interior and the exterior regions
  defined by any pair of parallel edges contain at least one vertex of $G$.
  Note that under this assumption, the maximum number of edges of a planar
  multi-graph with $n$ vertices is still $3n - 6$. Since $G$ is a drawn graph, we say
  that $\Gamma$ \emph{contains} an edge $e$ if there exists a drawn edge of $G$
  in $\Gamma$ that is homotopic to $e$.

  Consider two crossing fan-bundles $B_u$ and $B_v$ in $\Gamma$ anchored at
  vertices~$u$ and $v$ of $G$, respectively; see Fig.~\ref{fig:onebundleperedge1}.
  Let $(u,u_1),\ldots,(u,u_\mu)$ and $(v,v_1),\ldots,(v,v_\nu)$ be the edges
  that are bundled in $B_u$ and $B_v$, respectively, in the order in which they
  appear around the terminals of $B_u$ and $B_v$ in $\Gamma$, such that $(u,u_1)$
  and $(v,v_1)$ are the edges that follow $B_u$ and $B_v$ along their
  terminals in clockwise direction. Note that the tips of fan-bundle $B_u$ are not
  necessarily distinct from the tips of fan-bundle $B_v$, that is, for some pairs of
  indices $i$ and $j$ with $1 \leq i \leq \mu$ and $1 \leq j \leq \nu$ it may
  hold that $u_i = v_j$ (e.g., in Fig.~\ref{fig:onebundleperedge1} the third tip
  of $B_u$ is identified with the fourth tip of $B_v$, or in other words $u_3=v_4$).  
  In particular, $v_1 = u_\mu$ holds by the maximality of $G$.

	\begin{figure}[t]
		\centering
		\subfloat[\label{fig:onebundleperedge1}{}]
		{\includegraphics[page=1]{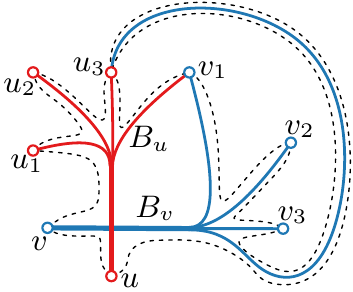}}
		\hfil
		\subfloat[\label{fig:onebundleperedge2}{}]
		{\includegraphics[page=2]{onebundleperedge}}
		\caption{%
		Illustration of the transformation used in the proof of Theorem~\ref{thm:density_general} for the case $\mu=\nu=4$.
		(a)~Two crossing fan-bundles; observe that $v_1=u_4$ and $u_3=v_4$. 
		(b)~Our transformation which leads to two non-homotopic copies of $(u,u_3)$.}
		\label{fig:onebundleperedge}
	\end{figure}

  Consider the edge $(u,v)$ that one can draw in $\Gamma$ as a
  $B_uB_v$-following curve. We refer to this particular edge as the
  \emph{base-edge} of $B_u$ and $B_v$. Since $G$ is maximally dense, graph
  $G$ must contain this edge, as otherwise one could add it in $\Gamma$ without
  violating its $1$-sided \fby or its non-homotopy. Similarly, one can prove
  that~$\Gamma$ contains edges $(v,u_1)$, $(u_1,u_2)$, $\ldots$,
  $(u_{\mu-1},u_{\mu})$, $(u_{\mu},v_1)$, $(v_1,v_2)$, $\ldots$,
  $(v_{\nu-1},v_\nu)$, and $(v_\nu,u)$ that
  are drawn very close either to $B_u$ and $B_v$ or to the unbundled parts of
  the edges incident to $u$ and $v$ (refer to the dotted drawn edges of Fig.~\ref{fig:onebundleperedge1}).
  
  We proceed by applying a simple transformation; see
  Fig.~\ref{fig:onebundleperedge2}. We remove from~$G$ all edges bundled in $B_v$
  and we introduce edges $(u,v_2)$, $\ldots$, $(u,v_{\nu-1})$ drawn
  without crossings in the interior of the region defined by edges
  $(u,v_1)$, $(v_1,v_2)$, $\ldots$, $(v_{\nu-1},v_\nu)$, and $(v_\nu,u)$ in $\Gamma$. Observe
  that this simple transformation does not introduce homotopic parallel edges
  and simultaneously eliminates the crossing between~$B_u$ and $B_v$. However,
  since $\Gamma$ contains edges $(v,v_1)$ and $(v,v_\nu)$, which are not
  contained in the transformed drawing, our transformation leads to a reduction
  by at most two edges. In particular, it is not difficult to see that%
  \begin{inparaenum}[(C.1)]
  \item \label{c:1-planar} if $\nu \neq 1$ and $\mu \neq 1$, then our transformation 
  leads to a reduction by exactly two edges;
  \item \label{c:non-1-planar} otherwise, it leads to a reduction by one edge.   
  \end{inparaenum}
  
  By applying this transformation recursively to every pair of
  crossing fan-bundles, we will obtain a planar drawing $\Gamma'$ of
  a (not necessarily simple) graph~$G'$ on the same set of vertices
  as $G$ that contains no pairs of homotopic parallel edges. Hence,
  graph $G'$ has at most $3n-6$ edges and at most $2n-4$ faces.
  Note that since the edges that are affected by a transformation are
  delimited by uncrossed edges, it follows that they will not be
  ``destroyed'' by another transformation later in our recursive
  procedure.
  
  Now, observe that each transformation of Case~C.\ref*{c:1-planar} produces at least
  three faces in $G'$ and leads to a reduction by exactly two edges
	(see the colored faces of Fig.~\ref{fig:onebundleperedge2}). 
  On the other hand, each transformation of Case~C.\ref*{c:non-1-planar}
  produces two faces in $G'$ and as already mentioned leads to a
  reduction by exactly one edge.
  Hence, graph $G$ contains as many edges as $G'$ plus twice the number of
  transformation of Case~C.\ref*{c:1-planar}, plus the number
  of transformations of Case~C.\ref*{c:non-1-planar}. 
  Let~$f'$ and $f''$ be the number of faces of~$\Gamma'$ created by 
  transformations of Case~C.\ref*{c:1-planar} and C.\ref*{c:non-1-planar}, respectively.
  It follows that~$f'+f''\le 2n-4$. 
  Thus,~$G$ has at most 
  $3n-6 + 2\cdot\lfloor f'/3\rfloor+ f''/2 \leq 3n-6+ 2\cdot\lfloor (2n-4)/3\rfloor \leq (13n-26)/3$~edges.
   
  To show that this upper bound is tight, consider a planar graph
  $\mathcal{P}_{n}$ on $n$ vertices whose faces are all of length~$5$. Hence, $n$ must be appropriately chosen. By
  Euler's formula, graph $\mathcal{P}_{n}$ has exactly $(5n-10)/3$ edges and $(2n-4)/3$ faces.
  In each face, it is possible to add four edges without violating \fby (see,
  e.g.,~Fig.~\ref{fig:k5}); thus, the resulting graph has
  $(5n-10)/3 + 4 \cdot (2n-4)/3 =(13n-26)/3$ edges in total, and the statement follows.
\end{proof}

The next two theorems present tight bounds for the density of $1$-sided \fb
graphs in the outer and in the $2$-layer models. The proofs are based on the
same technique used in the proof of Theorem~\ref{thm:density_general}.

\begin{theorem}\label{thm:density_outer}
  A $1$-sided outer-\fb graph with $n \geq 5$ vertices has at most
  $(8n-13)/3$ edges, which is a tight bound for infinitely many values of $n$.
\end{theorem}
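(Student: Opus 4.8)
The plan is to reuse verbatim the transformation from the proof of Theorem~\ref{thm:density_general} and only redo the counting under the outer constraint. Let $\Gamma$ be a $1$-sided outer-\fb drawing of a maximally dense $1$-sided outer-\fb graph $G$ on $n$ vertices. For every pair of crossing fan-bundles $B_u,B_v$ I would first add, by maximality, the base-edge $(u,v)$ together with the fan-cycle edges $(v,u_1),(u_1,u_2),\dots,(v_\nu,u)$, and then apply the same local rerouting (Fig.~\ref{fig:onebundleperedge2}): delete the edges of $B_v$ and reconnect the middle tips $v_2,\dots,v_{\nu-1}$ to $u$ inside the region bounded by the fan-cycle. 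As before, this removes one crossing and reduces the edge count by two (Case~C.1, when $\mu\neq 1$ and $\nu\neq 1$) or by one (Case~C.2), while creating at least three (resp.\ exactly two) new faces. Iterating over all crossings yields a crossing-free drawing $\Gamma'$ of a graph $G'$ on the same vertex set with no homotopic parallel edges.

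The only genuinely new point --- and the step I expect to be the main obstacle --- is to check that $\Gamma'$ is still an \emph{outer} drawing, i.e.\ that no vertex gets trapped in a bounded region by the rerouting. The rerouted edges $(u,v_j)$ are placed inside the fan-cycle region, which in $\Gamma$ contains no vertex in its interior (it hugs the bundle $B_v$), while the anchor $v$ lies on the opposite side of the crossing and keeps its connection to the unbounded face; hence all vertices remain incident to it. Granting this, $G'$ is outerplanar, so it has at most $2n-3$ edges and at most $n-2$ bounded faces. Since every face created by a transformation is a bounded face of $\Gamma'$, the numbers $f'$ and $f''$ of bounded faces produced by Case~C.1 and Case~C.2 transformations satisfy $f'+f''\le n-2$, in contrast with the bound $2n-4$ of the general case.

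Counting exactly as in Theorem~\ref{thm:density_general}, $G$ has at most
\[
  (2n-3)+2\Bigl\lfloor \tfrac{f'}{3}\Bigr\rfloor+\tfrac{f''}{2}
  \;\le\;(2n-3)+2\Bigl\lfloor \tfrac{n-2}{3}\Bigr\rfloor
  \;\le\;\frac{8n-13}{3}
\]
edges, where the worst case is attained when all bounded faces arise from Case~C.1 transformations. For the matching lower bound I would take, for $n\equiv 2 \pmod 3$, a $2$-connected outerplanar graph whose outer boundary is an $n$-cycle and whose interior is subdivided by chords into $(n-2)/3$ pentagonal faces (each pentagon consisting of one chord and four consecutive boundary edges); such a graph has $(4n-5)/3$ edges. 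Inserting four edges inside each pentagon without violating \fby, exactly as for $K_5\setminus e$ in Fig.~\ref{fig:k5}, adds $4\cdot(n-2)/3$ edges and produces a $1$-sided outer-\fb graph with $(4n-5)/3+4(n-2)/3=(8n-13)/3$ edges, matching the upper bound for infinitely many values of $n$.
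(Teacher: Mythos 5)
Your argument follows the paper's proof of this theorem essentially verbatim: apply the transformation of Theorem~\ref{thm:density_general}, observe that the result is outerplanar with at most $2n-3$ edges and $n-2$ internal faces, and conclude $G$ has at most $(2n-3)+2\lfloor (n-2)/3\rfloor\le (8n-13)/3$ edges; your explicit check that the rerouting keeps every vertex on the outer face is a detail the paper glosses over, and it is a welcome addition. The one place where you go wrong is the lower-bound construction. You require \emph{every} pentagonal face to consist of one chord and four consecutive boundary edges; counting edge--face incidences, this forces $4\cdot\frac{n-2}{3}$ boundary edges against only $n$ available, which is consistent only for $n=8$, so the graph you describe does not exist for larger $n$ and the ``infinitely many values of $n$'' claim would fail as written. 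The fix is easy and recovers the paper's (unspecified) construction: chain the $(n-2)/3$ pentagons so that the two extremal ones use one chord and four boundary edges while each intermediate one uses two chords and three boundary edges; this uses exactly $n$ boundary edges and $(n-5)/3$ chords, and the edge count $(4n-5)/3$ you need follows from Euler's formula independently of how the pentagons are arranged, so the rest of your computation $(4n-5)/3+4(n-2)/3=(8n-13)/3$ stands.
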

\begin{proof}
  Let $\Gamma$ be a $1$-sided outer-\fb drawing of a maximally dense
  $1$-sided outer-\fb graph $G$ with $n$ vertices. We apply the same
  transformation that we applied in the proof of
  Theorem~\ref{thm:density_general}. In this case, the resulting
  drawing $\Gamma'$ is the drawing of an outerplanar graph $G'$ on
  the same set of vertices as~$G$. Hence, graph $G'$ has at most
  $2n-3$ edges and $n-2$ internal faces. Using the same sequence of arguments as
  in the proof of Theorem~\ref{thm:density_general}, it follows that
  $G$ has at most $2n-3+ 2\cdot\lfloor (n-2)/3 \rfloor \leq
  (8n-13)/3$ edges.
  
  To show that this bound
  is tight, consider an outerplanar graph $\mathcal{O}_{n}$ on $n$ vertices
  whose internal faces are all of length~$5$. By Euler's formula, graph
  $\mathcal{O}_{n}$ has $(4n-5)/3$ edges and $(n-2)/3$ internal faces. Since in each internal
  face of $\mathcal{O}_{n}$ it is possible to add four edges without violating
  outer-\fby (see, e.g.,~Fig.~\ref{fig:k5}), the resulting graph has
  $(4n-5)/3 + 4\cdot (n-2)/3 =(8n-13)/3$ edges in total, and the statement follows.
\end{proof}

\begin{theorem}\label{thm:density_layered}
  A $1$-sided $2$-layer \fb graph with $n \geq 5$ vertices has at
  most $(5n-7)/3$ edges, which is a tight bound for infinitely many values of $n$.
\end{theorem}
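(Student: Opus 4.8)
The plan is to reuse the transformation from the proof of Theorem~\ref{thm:density_general}, specialized to the two-layer geometry. I would start with a $2$-layer drawing $\Gamma$ of a maximally dense $1$-sided $2$-layer \fb graph $G$, with the two bipartition sets placed on the two lines. For each pair of crossing fan-bundles $B_u$ and $B_v$, I would apply the same rerouting: delete the edges of one bundle, say $B_v$ with tips $v_1,\dots,v_\nu$, and re-insert the edges $(u,v_2),\dots,(u,v_{\nu-1})$ inside the region freed by the crossing. The key point to check first is that, since $G$ is bipartite and drawn on two layers, each re-inserted edge still joins the anchor $u$ on one line to tips $v_i$ on the other line, so $\Gamma'$ remains a valid crossing-free $2$-layer drawing of a \emph{bipartite} $2$-layer planar graph $G'$ on the same vertex set.

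Next I would bound $G'$. Because $G'$ is bipartite and admits a crossing-free $2$-layer drawing, it contains no $4$-cycle and hence is a caterpillar forest; in particular it has at most $n-1$ edges. For the face accounting I would not use the (single) topological face of a forest, but rather the bounded cells that $\Gamma'$ cuts out of the strip between the two lines; applying Euler's formula to $\Gamma'$ augmented by the two ``spine'' paths running along the lines shows these internal cells number at most $n-2$. Exactly as in Theorem~\ref{thm:density_general}, each resolution with $\mu,\nu\neq 1$ reduces the edge count by $2$ while creating at least $3$ cells, and each remaining resolution reduces it by $1$ while creating $2$ cells. Charging the reductions to the created cells then gives that $G$ has at most $(n-1)+2\lfloor(n-2)/3\rfloor \le (5n-7)/3$ edges, the maximum being attained by using only the first type of resolution.

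For tightness I would exhibit an explicit bipartite family on the line $(5n-7)/3$: the base case is $K_{2,3}$ (with $n=5$ vertices and $6=(5\cdot 5-7)/3$ edges, realizable $1$-sided $2$-layer \fb by bundling the two degree-$3$ vertices so their bundles cross once), and then I would repeatedly attach a constant-size bipartite gadget that introduces one new bundle crossing, adding $3$ vertices and $5$ edges per step, so that an $n$-vertex member has exactly $(5n-7)/3$ edges and its transform is a spanning caterpillar realizing the upper-bound counting with equality.

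I expect two places to require the real work. The main obstacle is guaranteeing that $G'$ stays bipartite, equivalently that no \emph{within-layer} base edge is introduced: this amounts to analyzing whether the two crossing bundles are anchored on the same line (in which case the re-inserted edges are honestly top-to-bottom) or on opposite lines, and arguing that the problematic configuration either cannot increase the edge count or can be rerouted; if within-layer edges were allowed, the host would only be outerplanar and the weaker bound $(8n-13)/3$ of Theorem~\ref{thm:density_outer} would result. The secondary difficulty is the precise strip face-count (getting $n-2$ rather than a slightly larger value) and verifying that the lower-bound gadget is simultaneously bipartite, $2$-layer-realizable, and \fb with a single crossing per bundle.
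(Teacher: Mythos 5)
Your lower-bound construction (a chain of $K_{2,3}$'s glued along shared edges, adding $3$ vertices and $5$ edges per step starting from $K_{2,3}$ with $n=5$ and $6$ edges) is essentially identical to the paper's graph $\mathcal{B}_n$ and is fine. The upper bound, however, has a genuine gap exactly at the point you flag as ``the main obstacle,'' and you do not close it. The transformation of Theorem~\ref{thm:density_general} relies on maximality to supply the base edge $(u,v)$ and the boundary edges $(v,u_1),(u_1,u_2),\ldots,(v_\nu,u)$ that enclose the region in which the replacement edges are drawn; these edges are needed both for the ``reduction by two'' count and for the ``creates at least three faces'' count. In the $2$-layer setting the tips of a bundle anchored at $u$ all lie on the layer opposite to $u$, so the edges $(u_i,u_{i+1})$ are within-layer and simply cannot exist in a bipartite $2$-layer graph; maximality gives you nothing. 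Worse, two crossing fan-bundles can be anchored at vertices on \emph{opposite} layers (this actually happens, e.g.\ in the $K_{2,2}$ drawing discussed in Lemma~\ref{lem:2layer-k22-noleg}, where the crossing bundles may be anchored at $a_1$ and $b_1$), in which case the tips of $B_v$ lie on $u$'s own layer and every re-inserted edge $(u,v_i)$ is within-layer. So the claim that $\Gamma'$ is a bipartite $2$-layer planar (caterpillar) drawing is false as stated, and the subsequent cell-charging has no footing.

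Ironically, you explicitly reject the route that works. The paper's proof is a two-line reduction: add the $n-2$ spine edges along the two layers \emph{first}, obtaining a $1$-sided outer-\fblong graph $G'$ on the same vertex set; apply Theorem~\ref{thm:density_outer} to get $|E(G')|\le(8n-13)/3$; then subtract the $n-2$ added edges to obtain $(8n-13)/3-(n-2)=(5n-7)/3$. You dismiss this because ``the weaker bound $(8n-13)/3$ would result,'' overlooking that the spine edges are not edges of $G$ and must be subtracted back out, which recovers the tight bound exactly. Ordering the steps this way also dissolves your bipartiteness problem, because the outer model has no bipartiteness constraint and all the within-layer boundary edges that the transformation needs are legitimately present (or addable by maximality) there.
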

\begin{proof}
  Let $G$ be a $1$-sided $2$-layer \fb graph with $n$ vertices. One can add $n-2$
  edges in $G$ to connect in two paths the vertices of each bipartition set and
  obtain a new graph~$G'$ that is $1$-sided outer-\fb. Since by
  Theorem~\ref{thm:density_outer} graph $G'$ cannot have more than $(8n-13)/3$ edges, it
  follows that $G$ cannot have more than $(8n-13)/3-(n-2)=(5n-7)/3$ edges. 
  
\begin{figure}[t]
	\centering
	\includegraphics{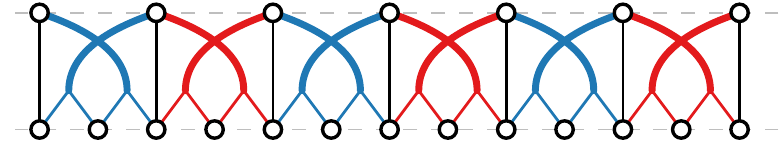}
	\caption{Illustration for the proof of Theorem~\ref{thm:density_layered}.}
	\label{fig:density_layered}
\end{figure}    
  
  A graph $\mathcal{B}_{n}$ with $n$ vertices meeting exactly this
  bound can be easily constructed as follows. Let $n=3k+2$ for some
  positive integer $k$.
  Graph $\mathcal{B}_{n}$ has $k+1$ vertices in its first bipartition set
  and $2k+1$ vertices in its second bipartition set. For each
  $j=1,2,\ldots,k$ the vertices of the first bipartition set of
  $\mathcal{B}_{n}$ with indices~$j$ and $(j+1)$ form a $K_{2,3}$
  with the vertices of the second bipartition set of $\mathcal{B}_{n}$
  with indices $(2j-1)$, $2j$ and $(2j+1)$; see
  Fig.~\ref{fig:density_layered}. Graph $\mathcal{B}_{n}$ is indeed
  $1$-sided $2$-layer \fb, as it consists of $k$ consecutive copies
  of $K_{2,3}$ (which is a $1$-sided $2$-layer \fb graph). To see that
  $\mathcal{B}_{n}$ meets the density bound of this theorem observe that
  each of the $k$ copies of $K_{2,3}$ contributes $6$ edges
  in graph $\mathcal{B}_{n}$ and that consecutive copies of $K_{2,3}$ share
  an edge. Hence, graph $\mathcal{B}_{n}$ has $6k-(k-1)=5k+1$ edges in total,
  and the statement follows.
\end{proof}

\subsection{Density of $2$-sided \fblong graphs}
\label{subsec:densitytwosided}

As opposed to $1$-sided \fb graphs, $2$-sided \fb graphs are not necessarily fan-planar.
In fact, they can be significantly denser than fan-planar graphs, as we will see later in this section.
In the following, we will first present a tight bound on the density of $2$-sided outer-\fb graphs.
We then establish upper and lower bounds for the density of $2$-sided
$2$-layer and $2$-sided general \fb graphs.

We start by presenting a family of $2$-sided outer-\fb graphs with $n$ vertices and $4n-9$
edges. Before doing so, we introduce two definitions. A \emph{flower drawing} of a graph with vertex-set $\{v_1,\dots,v_n\}$ is 
a $2$-sided outer-\fb drawing in which%
\begin{inparaenum}[(i)]
	\item the vertices $v_1,\dots,v_n$ lie on a circle $\mathcal{C}$ in this
    clockwise order,
	\item each vertex $v_i$ has two fan-bundles, a \emph{right} and a \emph{left}
    one; the right (left) fan-bundle is the one that is encountered first (second) when moving clockwise around $v_i$, starting from the outer face (in other words, when seen from the center of $\mathcal{C}$ the left fan-bundle is to the left of the right fan-bundle), and
	\item for each $i=1,\dots,n$, the right fan-bundle of $v_i$ crosses the
    left fan-bundle of $v_{i+1}$ (we assume the indices to be $\bmod{~n}$); see Fig.~\ref{fig:waterlily}.
\end{inparaenum}

A \emph{water lily} is a flower drawing of a graph with $n \geq 9$ vertices
where the terminals of the fan-bundles are partitioned into three sets $S_1$,
$S_2$, and $S_3$, such that %
\begin{inparaenum}[(i)]
	\item each set $S_j$, for $j=1,2,3$, contains at least seven consecutive terminals (e.g., the terminals contained in $S_1$ in Fig.~\ref{fig:waterlily} are the ones spanned by the orange arc),
	\item each pair of sets $S_j$ and $S_k$, with $j \neq k$, have one terminal in common (refer to the terminals that are pointed by the arrows of the arcs indicating $S_1$, $S_2$ and $S_3$ in Fig.~\ref{fig:waterlily}),
    which belongs to the right fan-bundle of a vertex,
	\item the terminal of the right fan-bundle of each vertex $v_i$ is connected to
    the terminal of the left fan-bundles of vertices $v_{i+1}$ and~$v_{i+2}$, and
	\item the terminals in each set $S_j$, 
    for $j=1,2,3$, are connected by a \emph{zigzag-pattern} such that all but two faces have
    degree~3, the other two have degree~4 in order to avoid
    parallel edges; see Fig.~\ref{fig:waterlily}.
\end{inparaenum}

\begin{figure}[t]
	\centering
	\subfloat[\label{fig:waterlily}]
	{\includegraphics[page=7]{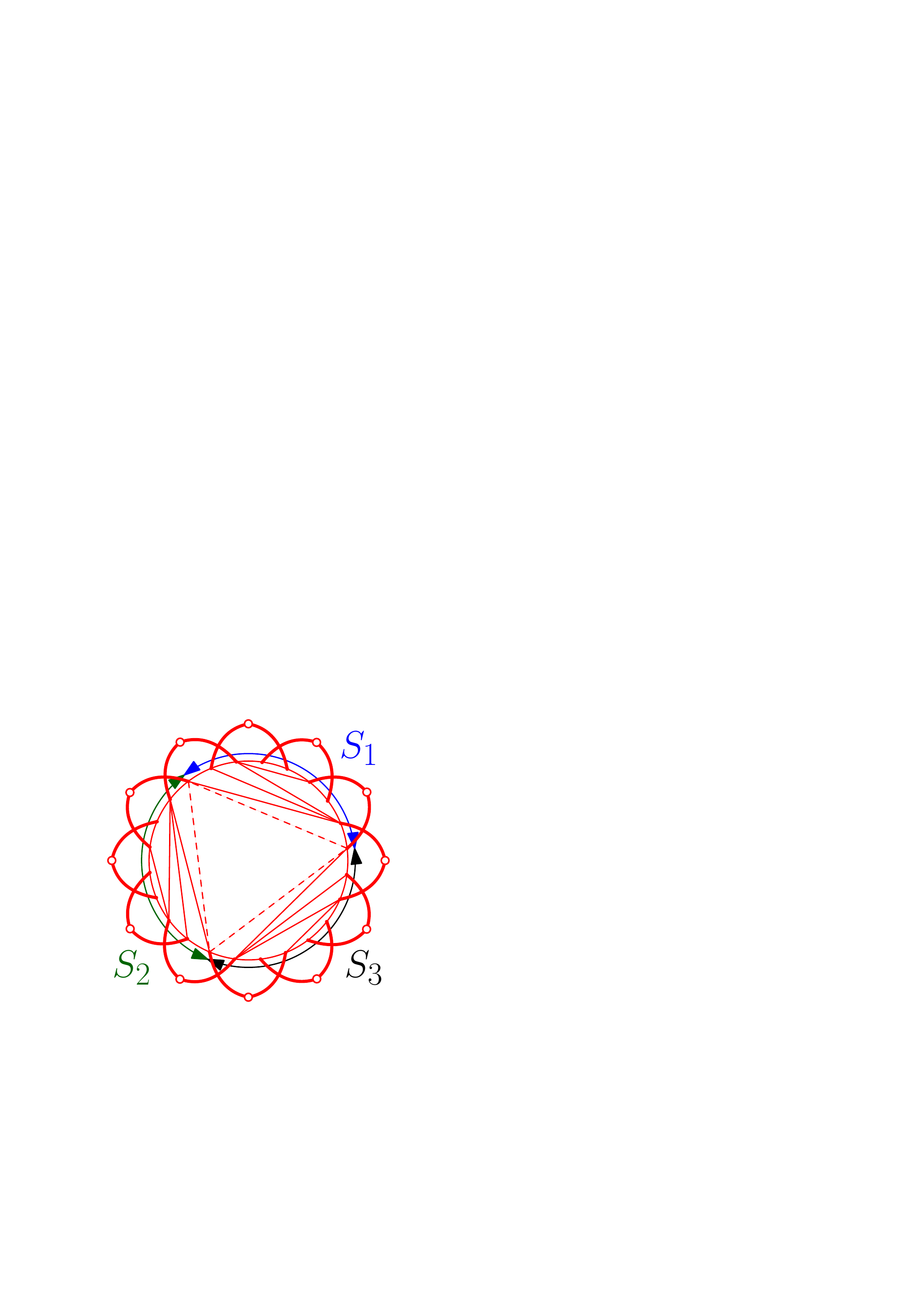}}
	\hfil
	\subfloat[\label{fig:k8}]
	{\includegraphics[page=4]{2sided-density}}
  
	\subfloat[\label{fig:doublewaterlily}]
	{\includegraphics[page=3]{2sided-density}}
	\hfil
	\subfloat[\label{fig:middlebundles}]
	{\includegraphics[page=2]{2sided-density}}
	\caption{Illustration of:
		(a)~a water lily,
		(b)~a $2$-sided \fb drawing of $K_8$,
		(c)~a $2$-sided \fb drawing of a graph with $n$ vertices and $6n-18$ edges for $n=12$, and
		(d)~crossing middle fan-bundles.
		In (b) and (c) the orange edges can be drawn on the outer face of the drawing by using twice as many fan-bundles.}
	\label{fig:samples}
\end{figure}

\begin{lemma}\label{lem:waterlily}
There exist $2$-sided outer-\fb graphs with $n$ vertices and exactly $4n-9$ edges, where $n \geq 9$.
\end{lemma}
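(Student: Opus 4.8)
The plan is to exhibit an explicit family of water lily drawings and count their edges, showing that the construction realizes exactly $4n-9$ edges while respecting the $2$-sided outer-\fb constraints. The construction is already set up by the definition of a \emph{water lily}: $n$ vertices $v_1,\dots,v_n$ on a circle $\mathcal C$, each carrying a right and a left fan-bundle, with the right bundle of $v_i$ crossing the left bundle of $v_{i+1}$. First I would fix such a cyclic arrangement for arbitrary $n \geq 9$ and verify that it is a valid $2$-sided outer-\fb drawing, namely that every fan-bundle is crossed by exactly one other bundle (each right bundle crosses the left bundle of the next vertex, and vice versa), and that all vertices lie on the outer face since they sit on $\mathcal C$.

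Next I would count the edges according to the three types prescribed by the definition. The \emph{backbone} edges come from item~(iii): the terminal of the right fan-bundle of each $v_i$ connects to the terminals of the left fan-bundles of $v_{i+1}$ and $v_{i+2}$, which contributes $2n$ edges across the whole cycle (two per vertex, with indices taken $\bmod~n$). The \emph{zigzag} edges come from item~(iv): the terminals inside each set $S_1$, $S_2$, $S_3$ are triangulated by a zigzag pattern so that all but two of the internal faces of each set have degree $3$. I would count these via Euler's formula applied to each near-triangulated region: a zigzag on $t$ consecutive terminals creating a fan of triangles contributes roughly $2t-3$ edges, and the two degree-$4$ faces (inserted to avoid parallel edges) reduce this slightly. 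Summing over the three sets and subtracting for the three shared terminals of item~(ii) yields the remaining edges, and I expect the totals to combine to exactly $4n-9$.

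The main obstacle will be the careful bookkeeping at the three boundaries where the sets $S_1$, $S_2$, $S_3$ meet. By item~(ii), consecutive sets share exactly one terminal, and by item~(i) each set has at least seven consecutive terminals; so the terminals are partitioned into three arcs with three overlaps, and I must ensure no edge is double-counted at a shared terminal and no parallel (homotopic) edges are accidentally created. This is precisely why the definition forces two faces of degree $4$ rather than $3$ in each $S_j$: those are the slack needed to absorb the boundary adjustments without introducing parallel edges. I would handle this by computing the exact contribution of each arc as a function of its number of terminals, summing, and then using the fact that the three arcs cover all $n$ vertices with exactly three points of pairwise overlap, so that $|S_1|+|S_2|+|S_3| = n+3$. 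Substituting this into the per-arc edge counts should collapse the expression to $4n-9$, completing the proof.
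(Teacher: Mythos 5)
Your overall strategy is the same as the paper's: identify each edge of the water lily with its unbundled part and count those via the outerplanar structure on the fan-bundle terminals. However, there is a concrete error that breaks your final computation. The sets $S_1,S_2,S_3$ partition the \emph{terminals}, and there are $2n$ of them, not $n$: every vertex $v_i$ carries both a left and a right fan-bundle, hence two terminals. With the three pairwise overlaps from item~(ii), the correct identity is $|S_1|+|S_2|+|S_3|=2n+3$, not $n+3$. Carrying your plan through with the right numbers: the $2n$ edges of item~(iii) are exactly the outer cycle of the terminal graph (the cyclic order of terminals is $\dots,\ell_{i+1},r_i,\ell_{i+2},r_{i+1},\dots$, so consecutive terminals are joined precisely by the edges $(v_i,v_{i+1})$ and $(v_i,v_{i+2})$); a zigzag on an arc of $t_j$ terminals contributes $t_j-2$ \emph{new} chords when fully triangulated (your figure $2t-3$ includes the $t-1$ boundary edges already counted among the $2n$, so be careful not to double-count), minus $2$ for the two degree-$4$ faces, i.e.\ $t_j-4$ new edges per arc; summing gives $(2n+3)-12=2n-9$, and the total is $2n+(2n-9)=4n-9$ as desired. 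If you instead substitute $n+3$ as written, the same computation yields $3n-9$, so the claimed collapse to $4n-9$ fails.

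For comparison, the paper packages this more compactly: it considers the graph $H$ whose vertices are the $2n$ terminals and whose edges are the unbundled parts, observes that $H$ is biconnected outerplanar with every internal face triangular except six quadrilaterals (two per $S_j$), and applies the formula that a triangulated biconnected outerplanar graph on $\nu$ vertices has $2\nu-3$ edges, each quadrilateral face costing one edge, giving $2(2n)-3-6=4n-9$. Your arc-by-arc accounting is equivalent once the terminal count is corrected, but the global Euler-type count avoids the boundary bookkeeping at the shared terminals entirely.
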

\begin{proof}
	We prove the statement by showing that in a water lily on $n \geq 9$
  vertices there exist $4n-9$ edges. Namely, consider the graph $H$ whose
  vertices are the terminals of the fan-bundles and whose edges are the
  unbundled parts of the edges of the water lily (drawn plain in
  Fig.~\ref{fig:waterlily}). First, observe that $H$ has $2n$ vertices, since each original
  vertex has one left and one right fan-bundle. Also, $H$ is biconnected and
  outerplanar, by the construction of the water lily. Finally, all the internal
  faces of $H$ are triangular, except for six faces (two for each set $S_j$),
  each of which has degree~$4$. Since a biconnected outerplanar
  graph on $\nu$ vertices in which every internal faces is triangular has $2\nu-3$
  edges, we have that $H$ has $2\cdot(2n)-3-6=4n-9$ edges, and the statement follows.
\end{proof}

Before we proceed with the proof of our upper bound on the edge
density of $2$-sided outer-\fb graphs, we first make a useful
observation.
If the vertices of the graph do not necessarily have to lie on the outer face 
of the drawing, one can draw another set of $2n-9$ edges on the outer face of
a water lily and obtain a $2$-sided \fb drawing of a graph with $6n-18$
edges. Note that this construction corresponds to merging two copies of a
water lily on vertices $v_1,\dots,v_n$ by identifying their vertices and by
keeping only one copy of each edge $(v_i,v_{i+1})$ and $(v_i,v_{i+2})$. With a little effort we
can avoid potential parallel edges that may appear when merging the two
copies. Let $S_1,S_2,S_3$ and $S_1', S_2',S_3'$ be the partitions of the
terminals of the two copies. We require that the terminal shared by $S_j'$
and $S_{j+1}'$ belongs to $S_j$, for each $j=1,2,3$ (where the indices are $\bmod{~3}$);
see Fig.~\ref{fig:doublewaterlily}. In this way, the zigzag patterns on
$S_1,S_2,S_3$ and $S_1', S_2',S_3'$ are edge-disjoint, as desired. We
summarize this observation in the following lemma.

\begin{lemma}\label{lem:2sided-density}
	There exist $2$-sided \fb graphs with $n$ vertices and exactly $6n-18$ edges, where $n \geq 9$.
\end{lemma}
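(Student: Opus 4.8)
The plan is to formalize the ``double water lily'' construction sketched in the paragraph preceding the statement. I would start from a water lily $W$ on vertices $v_1,\dots,v_n$, which by Lemma~\ref{lem:waterlily} is a valid $2$-sided outer-\fb drawing with exactly $4n-9$ edges in which every $v_i$ lies on the circle $\mathcal{C}$ and anchors a left and a right fan-bundle inside $\mathcal{C}$. Since the general $2$-sided model does not require the vertices to lie on the outer face, the key observation is that the exterior of $\mathcal{C}$ is still free to host a second, mirrored water lily $W'$ on the same vertex set, where now each $v_i$ anchors a further left and right fan-bundle pointing outward. The $2$-sided constraint is respected throughout: each edge is bundled at most once at each of its two endpoints (so it lies in at most two bundles, even though a single vertex may now anchor up to four), and every crossing remains a crossing between one right and one left fan-bundle exactly as in a single water lily. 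The inner and outer arrangements cannot interfere, and in particular no two bundles of the same vertex cross, because they lie on opposite sides of $\mathcal{C}$.

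Next I would carry out the edge count. The two copies $W$ and $W'$ have $4n-9$ edges each, but they are designed to share the $2n$ \emph{structural} edges $(v_i,v_{i+1})$ and $(v_i,v_{i+2})$ (item~(iii) of the water-lily definition), which can be drawn along $\mathcal{C}$ and used by both drawings. Keeping a single copy of each shared edge, the union therefore contains $2(4n-9)-2n = 6n-18$ edges, as claimed; equivalently, $W'$ contributes $2n-9$ genuinely new edges on the outer face on top of the $4n-9$ edges of $W$.

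The main obstacle---and the only point requiring care---is ensuring that the merged graph is simple, i.e., that no unintended parallel edges are created when the zigzag patterns of the two copies are superimposed. This is where I would invoke the offset condition: writing $S_1,S_2,S_3$ and $S_1',S_2',S_3'$ for the terminal partitions of $W$ and $W'$, I would require that the terminal shared by $S_j'$ and $S_{j+1}'$ be a terminal of $S_j$ (indices taken $\bmod~3$), as in Fig.~\ref{fig:doublewaterlily}. Rotating the partition of the outer copy relative to the inner one in this way forces the two zigzag patterns to connect distinct pairs of terminals, so that they are edge-disjoint and the only edges common to $W$ and $W'$ are the intended structural ones. With simplicity established, the count of $6n-18$ edges completes the proof.
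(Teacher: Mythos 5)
Your proposal is correct and matches the paper's own argument: the paper likewise obtains the bound by merging an inner and an outer copy of a water lily on the same vertex set, keeping one copy of each edge $(v_i,v_{i+1})$ and $(v_i,v_{i+2})$ (so that $2(4n-9)-2n=6n-18$), and using exactly the same offset condition on the terminal partitions $S_j$ and $S_j'$ to keep the two zigzag patterns edge-disjoint. Your write-up is in fact slightly more explicit than the paper's about why the 2-sided and almost-simplicity constraints are preserved.
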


In the following theorem, we show that a $2$-sided outer-\fb graph with $n$ vertices 
cannot have more edges than a corresponding water lily with $n$ vertices. This immediately 
implies that the bound of $4n-9$ is tight for the edge density of $2$-sided outer-\fb graphs. 

Consider a $2$-sided outer-\fb graph $G$ together with a corresponding drawing $\Gamma$.
Let $v_{1},\ldots,v_{n}$ be the vertices of $G$ as they appear in
clockwise order along the outer face of $\Gamma$.
A vertex of $G$ may be incident to several fan-bundles in $\Gamma$.
We denote by~$B_i^1 \ldots B_i^{\lambda}$ the fan-bundles anchored at a vertex $v_i$, as
they appear in clockwise order around $v_i$ starting from the outer face.
We call $B_i^1$ and $B_i^{\lambda}$ the \emph{right fan-bundle} and the
\emph{left fan-bundle} of $v_i$, respectively; we refer to the remaining fan-bundles
anchored at $v_i$ as \emph{middle fan-bundles}. We assume w.l.o.g.\ that the right
fan-bundle of $v_i$ crosses the left fan-bundle of $v_{i+1}$, and that the edge $(v_i,v_{i+1})$
is  represented with a crossing-free unbundled part connecting their terminals.
Note that indeed this is an assumption w.l.o.g.\ as $v_i$ and $v_{i+1}$ are
consecutive along the outer face, and hence representing the edge $(v_i,v_{i+1})$
in this way does not forbid the presence of any other edge.

For our proof we will need two auxiliary lemmas, which consider a special case.
Namely, every vertex of $G$ has degree at least 5 and it has at most
one middle fan-bundle; also if a vertex has a middle fan-bundle,
then it is not involved in any crossing.

As in the proof of Lemma~\ref{lem:waterlily}, consider the graph $H$ whose
vertices are the terminals of all the left, middle, and right fan-bundles
in~$\Gamma$ and whose edges are the unbundled parts of the edges between such
terminals. If we denote by $k$ the number of vertices of $G$ that have a middle fan-bundle,
then $H$ has $2n+k$ vertices. Since $H$ is clearly outerplanar,
it has at most $4n+2k-3$ edges.
We refer to the edges that connect consecutive terminals along the outerface of $H$ as \emph{outer} edges. 
The remaining edges of $H$ are referred to as \emph{inner} edges.

\begin{lemma}\label{cl:outer-face}
	Graph $H$ has at most $2n-k$ outer edges.
\end{lemma}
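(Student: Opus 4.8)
The plan is to bound the outer edges by analysing the circular order in which the $2n+k$ terminals are encountered while traversing the boundary of the outer face of $H$, and to show that each of the $k$ vertices carrying a middle fan-bundle forces two of these consecutive terminals to be non-adjacent in $H$.

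First I would fix the circular sequence $\sigma$ of all terminals as they appear along the outer face of $H$. Every outer edge joins two terminals that are consecutive in $\sigma$, and between one consecutive pair there can be at most one edge bounding the outer face; hence the outer edges inject into the set of consecutive pairs of $\sigma$. Since $\sigma$ is a cyclic sequence of $2n+k$ terminals, it has exactly $2n+k$ consecutive pairs, so it suffices to exhibit $2k$ distinct consecutive pairs that are \emph{not} joined by an edge of $H$.

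The heart of the argument is the local structure around a vertex $v_i$ carrying a middle fan-bundle. Write $\ell_i$, $m_i$, $r_i$ for the terminals of its left, middle, and right fan-bundles. By the standing assumption of this special case, such a $v_i$ is involved in no crossing, so none of its three bundles is crossed and, moreover, no bundle anchored at another vertex can enter the region bounded by the fan of $v_i$ without crossing one of its three bundles. Consequently the three terminals appear consecutively in $\sigma$, in the order $\ell_i, m_i, r_i$, with $m_i$ strictly between $\ell_i$ and $r_i$; this yields the consecutive pairs $(\ell_i,m_i)$ and $(m_i,r_i)$. Neither can be an edge of $H$: an edge of $H$ is the unbundled part of an edge of $G$ and therefore joins terminals of bundles anchored at two \emph{distinct} vertices, whereas $\ell_i, m_i, r_i$ are all anchored at $v_i$.

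Ranging over the $k$ vertices with a middle fan-bundle, the pairs $(\ell_i,m_i)$ and $(m_i,r_i)$ are pairwise distinct, since each involves the middle terminal $m_i$, which is unique to $v_i$, and $\ell_i\neq r_i$. This produces $2k$ distinct consecutive pairs that are not edges, so $H$ has at most $(2n+k)-2k=2n-k$ outer edges, as claimed. The main obstacle is the geometric step asserting that the terminals $\ell_i,m_i,r_i$ of a crossing-free vertex really do occur consecutively in $\sigma$ with $m_i$ in the middle: this is precisely where the hypothesis that a middle-fan-bundle vertex participates in no crossing is used, since a crossing in one of the two gaps adjacent to $v_i$ would swap terminal positions and interleave a neighbour's terminal between two terminals of $v_i$, destroying the consecutiveness.
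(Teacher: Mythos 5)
There is a genuine gap, and it originates in a misreading of the standing hypothesis. The assumption for this special case is that \emph{the middle fan-bundle} of such a vertex is not involved in any crossing, not that the vertex itself is crossing-free: the left and right fan-bundles of \emph{every} vertex $v_i$ are assumed (w.l.o.g., as set up just before the lemma) to cross the right fan-bundle of $v_{i-1}$ and the left fan-bundle of $v_{i+1}$, respectively, so a vertex that is ``involved in no crossing'' cannot exist in this setting. Under the correct reading, your key geometric claim fails: the terminals $\ell_i,m_i,r_i$ of a middle-bundle vertex $v_i$ do \emph{not} occur consecutively in $\sigma$. Because the left fan-bundle of $v_i$ crosses the right fan-bundle of $v_{i-1}$, the terminal of the latter lands between $\ell_i$ and $m_i$; symmetrically, the terminal of the left fan-bundle of $v_{i+1}$ lands between $m_i$ and $r_i$. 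Hence your $2k$ pairs $(\ell_i,m_i)$ and $(m_i,r_i)$ are not consecutive pairs of $\sigma$ at all, and the injection argument that was supposed to delete $2k$ slots collapses.

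The paper's proof exploits precisely the interleaving you tried to rule out. With the foreign terminal $t_{i-1}$ of $v_{i-1}$'s right fan-bundle sitting between $\ell_i$ and $m_i$, the two consecutive pairs $(\ell_i,t_{i-1})$ and $(t_{i-1},m_i)$ \emph{both} represent the single edge $(v_{i-1},v_i)$ of $G$; since $G$ is simple, at most one of them can be an outer edge of $H$. The symmetric argument on the other side of $m_i$ loses one more outer edge representing $(v_i,v_{i+1})$. Summing over the $k$ middle-bundle vertices gives the loss of $2k$ outer edges out of the at most $2n+k$ available, yielding the bound $2n-k$. If you want to repair your write-up, replace the consecutiveness claim by this duplicate-representation argument; the rest of your framework (counting consecutive pairs along the outer face) is compatible with it.
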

\begin{proof}
  Since $H$ is outerplanar on $2n+k$ vertices, it cannot have more than $2n+k$ outer edges.
  Consider now a vertex $v_i$ of $G$ with a middle fan-bundle in~$\Gamma$. 
  Then, the vertex of $H$ corresponding to the terminal of the right fan-bundle of $v_{i-1}$
  lies between the vertices of $H$ corresponding to the terminals of the left and of the middle
  fan-bundles of $v_i$ along the outer face of $H$, which implies that there exist
  two outer edges of $H$ that represent the same edge $(v_i,v_{i-1})$
  of $G$. Since $G$ is simple, only one of these outer edges can belong to $H$. For the same reason, there
  exist two outer edges of $H$ that represent the same edge $(v_i,v_{i+1})$ of $G$. Since
  there exist in total $k$ vertices with a middle fan-bundle, the lemma follows.
\end{proof}

In the following lemma, we give also an upper bound on the number of inner edges of $H$.
We do so assuming that all the $2n+k$ outer edges of $H$ are
present (i.e., even the ones that correspond to parallel edges
in $G$); in other words, if an edge of $G$ can be represented both
as an outer and as an inner edge of $H$, then we choose the former
option. Note that this assumption is without loss of generality, as
it does not increase the number of inner edges of $H$.	

\begin{lemma}\label{cl:internal-edges}
Graph $H$ has at most $2n+k-9$ inner edges, assuming that
all the $2n+k$ outer edges of $H$ are present.
\end{lemma}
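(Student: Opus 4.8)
The plan is to translate the bound on inner edges into a statement about the combinatorial complexity of the faces of $H$ via Euler's formula, and then to exploit the simplicity of $G$ together with the crossing pattern of the fan-bundles to show that the internal faces of $H$ are collectively ``too large'' to admit a triangulation.

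First I would set up the face count. Since $H$ is outerplanar on $N:=2n+k$ vertices and we assume that all $N$ outer edges are present, the outer boundary of $H$ is a cycle of length $N$, and every inner edge is a chord of this cycle. Writing $c$ for the number of inner edges, Euler's formula gives $N-(N+c)+F=2$, so $H$ has $F=c+2$ faces and hence $c+1$ internal (bounded) faces. As the outer face has degree $N$ and the sum of all face degrees equals $2(N+c)$, the internal faces have total degree $N+2c$. Defining the \emph{defect} $s:=\sum_{f}(\deg(f)-3)$, where the sum ranges over the internal faces of $H$, we obtain
\[
 s=(N+2c)-3(c+1)=N-c-3,
\]
so $c=N-3-s$. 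Consequently the claimed inequality $c\le 2n+k-9$ is \emph{equivalent} to the statement that the total defect satisfies $s\ge 6$; it therefore suffices to exhibit six units of unavoidable defect among the internal faces of $H$.

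To bound the defect from below I would analyse which chords can occur. Every edge of $H$ is the unbundled part of an edge of $G$ and hence joins the terminals of two \emph{distinct} vertices of $G$, because a chord between two terminals of the same vertex would correspond to a loop. Moreover, since $G$ is simple and we adopt the convention that any edge realisable as an outer edge is counted as an outer edge, no chord may represent an edge of $G$ that is already present as an outer edge. The combined effect is that the shortest diagonals are forbidden: if an internal face had two \emph{consecutive} outer edges on its boundary, the chord closing off the enclosed triangle would join the two boundary-neighbours of their common terminal, and this chord is exactly one of the two forbidden kinds. Hence no internal face is a triangle incident to two consecutive outer edges; already this rules out a full triangulation and shows $s>0$. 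The same obstruction, applied to the terminals clustered near a fixed vertex (whose base-edges $(v_{i-1},v_i)$ and $(v_i,v_{i+1})$ are drawn as outer edges), forces genuine quadrilateral faces, precisely the six degree-$4$ faces (two per set $S_1,S_2,S_3$) that a water lily exhibits while meeting the bound with equality.

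The crux — and the step I expect to be most delicate — is to show that these unavoidable large faces account for a defect of \emph{at least} six, rather than merely a positive amount. A purely local argument around one crossing yields only a constant contribution, and naively summing over all $n$ crossings would overcount, since consecutive forbidden quadrilaterals overlap along shared chords. The argument therefore has to be global: one must track how the cyclic sequence of base-edges wraps once around the drawing and argue that the forbidden short diagonals cannot all be absorbed by longer chords, leaving at least six genuinely distinct units of defect. I would organise this by isolating three pairwise non-overlapping regions along the outer cycle — the analogues of $S_1$, $S_2$, and $S_3$ — in each of which the obstruction above is forced to occur twice, so that the defects add up to $6$ without double counting. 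Establishing the independence of these three regions, and using the assumptions of the special case (minimum degree $5$, at most one middle fan-bundle per vertex, and middle fan-bundles being uncrossed) to rule out the degenerate configurations that could otherwise merge two such regions into one, is the technical heart of the proof.
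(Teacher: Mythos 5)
Your Euler-formula reduction is correct and is in fact the same accounting the paper uses implicitly: with all $N=2n+k$ outer edges present, $c$ chords give $c+1$ internal faces of total degree $N+2c$, so the total defect $s=\sum_f(\deg f-3)$ satisfies $c=N-3-s$, and the claim is equivalent to $s\ge 6$. Your observation that no chord can join terminals at distance $2$ along the outer cycle (it would be a self-loop, duplicate an outer edge, or in one configuration witness a redundant middle fan-bundle) matches the paper's first case analysis and correctly rules out a full triangulation. But everything after that is a plan, not a proof, and it is the wrong plan. You propose to find three pairwise disjoint arcs of the outer cycle, each forced to contain two defective faces, ``the analogues of $S_1,S_2,S_3$.'' Those three sets are features of the \emph{extremal} water-lily construction; nothing in the hypotheses forces an arbitrary drawing to decompose this way, and you give no mechanism for locating such arcs or for proving each one carries defect $2$. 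You explicitly defer this as ``the technical heart,'' so the bound $s\ge 6$ is never established.

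The paper closes this gap with a genuinely different global device: the weak dual tree $\mathcal{T}$ of $H$. A leaf of $\mathcal{T}$ is a face with a unique chord $(x,y)$, and an exhaustive case analysis on the outer-cycle distance between $x$ and $y$ (distances $2$, $3$, $4$, $5$, and beyond, using minimum degree $5$, non-redundancy of middle fan-bundles, and simplicity) shows the only surviving configuration is distance $3$, connecting the right fan-bundle of some $v_i$ to the left fan-bundle of $v_{i+3}$; hence every leaf face is a quadrilateral contributing defect at least $1$. A second argument shows the parent of a leaf cannot have degree $2$ in $\mathcal{T}$, which forces at least $3$ leaves; a final case analysis on the number of leaves $\ell$ (the star cases, $4\le\ell\le 5$ with two non-triangular parent faces, and $\ell\ge 6$) delivers the six units of defect. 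If you want to complete your proof you need some replacement for this structural argument; the ``three regions'' heuristic taken from the lower-bound construction will not supply one.
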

\begin{proof}
Since $H$ is
  outerplanar, it follows that $H$ cannot have more than $2n+k-3$
  inner edges. Let $\mathcal{T}$ be the \emph{weak dual} of~$H$,
  i.e., the graph whose vertices are the internal faces of~$H$ and
  whose edges connect pairs of faces sharing an edge in~$H$.

  Since $H$ is biconnected outerplanar, graph $\mathcal{T}$ is a
  tree, and thus it has at least two leaves. Let $f$ be a leaf of
  $\mathcal{T}$ and let $e=(x,y)$ be the unique inner edge incident
  to $f$. Assume w.l.o.g.~that $x$ is a terminal of a fan-bundle of
  vertex $v_i$ of $G$, while $y$ is a terminal of a fan-bundle of a
  vertex of $G$ that follows vertex $v_i$ in the clockwise order of
  the vertices of $G$ along its outer face.
  
	\begin{figure}[tb]
		\centering
		\subfloat[\label{fig:threeFace-case1}{}]
		{\includegraphics[page=1,width=0.3\textwidth]{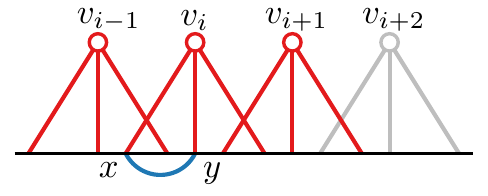}}
		\subfloat[\label{fig:threeFace-case2}{}]
		{\includegraphics[page=2,width=0.3\textwidth]{faces-claim2}}
		\subfloat[\label{fig:threeFace-case3}{}]
		{\includegraphics[page=3,width=0.3\textwidth]{faces-claim2}}
		
		\subfloat[\label{fig:threeFace-case4}{}]
		{\includegraphics[page=4,width=0.3\textwidth]{faces-claim2}}
		\subfloat[\label{fig:threeFace-case5}{}]
		{\includegraphics[page=5,width=0.3\textwidth]{faces-claim2}}
		\caption{
		The five possible cases in which $x$ and $y$ are at
        distance $2$ along the outer face of $H$.}
		\label{fig:threeFace}
	\end{figure}

  The first observation is that $x$ and $y$ cannot be at distance $2$ along the outer face of $H$. 
  For a proof by contradiction, assume that $x$ and $y$ are at distance $2$ along the outer face of $H$. 
  We distinguish the following three cases:
  
  \begin{enumerate}[label={C.}\arabic*]
    \item \label{c:d2.1} Vertex $x$ is the terminal of the left fan-bundle of vertex $v_i$ of $G$. 
    In this case, we distinguish two subcases; vertex $v_i$ has a middle fan-bundle or not 
    (see Figs.~\ref{fig:threeFace-case1} and~\ref{fig:threeFace-case2}, respectively).
    In the former case, edge $(x,y)$ represents a self-loop in $G$, 
    while in the latter case, edge $(x,y)$ is already represented by an outer edge of $H$ (in particular, by the outer edge of $H$ that corresponds to the connection of the right fan-bundle of $v_i$ with the left fan-bundle of $v_{i+1}$).
    
	\item \label{c:d2.2} Vertex $x$ is the terminal of the middle fan-bundle of vertex $v_i$ of $G$.
	Since $x$ and $y$ are at distance $2$, it follows that $y$ is the terminal of the right fan-bundle of vertex $v_i$ of $G$;
	see Fig.~\ref{fig:threeFace-case3}.
	This directly implies that $(x,y)$ represents a self-loop in $G$.
	
	\item \label{c:d2.3}  Vertex $x$ is the terminal of the right fan-bundle of vertex $v_i$ of $G$.
	In this case, we distinguish two subcases; vertex $v_{i+1}$ has a middle fan-bundle or not 
    (see Figs.~\ref{fig:threeFace-case4} and~\ref{fig:threeFace-case5}, respectively).
    In the former case, the middle fan-bundle of $v_{i+1}$ does not contain an edge of $G$ and therefore is redundant, 
    while in the latter case, edge $(x,y)$ is already represented by an outer edge of $H$. 
  \end{enumerate}

  From the above case analysis, it follows that $x$ and $y$ cannot be at distance $2$ along the outer
  face of $H$, which implies that $f$ cannot be a triangular face.
  
  Consider now the cases in which $x$ and $y$ are at distance $3$
  or $4$ along the outer face of $H$; see Fig.~\ref{fig:fdeg34}.
  With a case analysis similar to the one above, we prove that the
  case in which  edge $e$ connects
  the right fan-bundle of $v_i$ with the left fan-bundle of
  $v_{i+3}$, is the only case in which edge $e$%
  \begin{inparaenum}[(i)]
  \item does not represent a self-loop of $G$, or 
  \item does not represent an edge of $G$ that is already represented by any outer edge of $H$, or
  \item does not yield a redundant middle fan-bundle; see Fig.~\ref{fig:fdeg3-7}.
  \end{inparaenum}
  Note that in this case the distance between $x$ and $y$ is
  indeed $3$, and none of $v_{i+1}$ and $v_{i+2}$ has a middle
  fan-bundle.

  	\begin{figure}[p]
		\centering
		\subfloat[\label{fig:fdeg3-1}{Edge $(x,y)$ is represented as outer edge.}]
		{\includegraphics[page=1,width=0.31\textwidth]{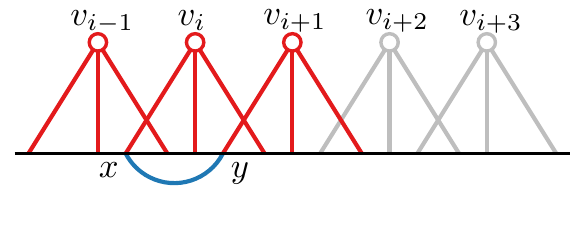}}
		\hfil
		\subfloat[\label{fig:fdeg3-2}{Edge $(x,y)$ represents a self-loop of $G$.}]
		{\includegraphics[page=2,width=0.31\textwidth]{faces-degree3}}
		\hfil
		\subfloat[\label{fig:fdeg3-3}{Edge $(x,y)$ is represented as outer edge.}]
		{\includegraphics[page=3,width=0.31\textwidth]{faces-degree3}}
		\hfil
		\subfloat[\label{fig:fdeg3-4}{Edge $(x,y)$ is represented as outer edge.}]
		{\includegraphics[page=4,width=0.31\textwidth]{faces-degree3}}
		\hfil
		\subfloat[\label{fig:fdeg3-5}{Edge $(x,y)$ is represented as outer edge.}]
		{\includegraphics[page=5,width=0.31\textwidth]{faces-degree3}}
		\hfil
		\subfloat[\label{fig:fdeg3-6}{Edge $(x,y)$ is represented as outer edge.}]
		{\includegraphics[page=6,width=0.31\textwidth]{faces-degree3}}
		\hfil
		\subfloat[\label{fig:fdeg3-7}{The only possible case.}]
		{\includegraphics[page=7,width=0.31\textwidth]{faces-degree3}}
		\hfil
		\subfloat[\label{fig:fdeg4-1}{Edge $(x,y)$ represents a self-loop of $G$.}]
		{\includegraphics[page=1,width=0.31\textwidth]{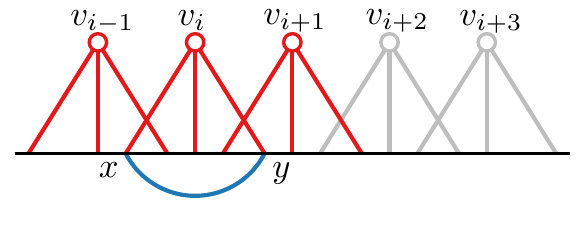}}
		\hfil
		\subfloat[\label{fig:fdeg4-2}{Edge $(x,y)$ is represented as outer edge.}]
		{\includegraphics[page=2,width=0.31\textwidth]{faces-degree4}}
		\hfil
		\subfloat[\label{fig:fdeg4-3}{Edge $(x,y)$ is represented as outer edge.}]
		{\includegraphics[page=3,width=0.31\textwidth]{faces-degree4}}
		\hfil
		\subfloat[\label{fig:fdeg4-4}{The middle fan-bundle of $v_{i+1}$ is redundant.}]
		{\includegraphics[page=4,width=0.31\textwidth]{faces-degree4}}
		\hfil
		\subfloat[\label{fig:fdeg4-5}{Edge $(x,y)$ is represented as outer edge.}]
		{\includegraphics[page=5,width=0.31\textwidth]{faces-degree4}}
		\hfil
		\subfloat[\label{fig:fdeg4-6}{The middle fan-bundle of $v_{i+1}$ is redundant.}]
		{\includegraphics[page=6,width=0.31\textwidth]{faces-degree4}}
		\hfil
		\subfloat[\label{fig:fdeg4-7}{The middle fan-bundle of $v_{i+2}$ is redundant.}]
		{\includegraphics[page=7,width=0.31\textwidth]{faces-degree4}}
		\hfil
		\subfloat[\label{fig:fdeg4-8}{Edge $(x,y)$ is represented as outer edge.}]
		{\includegraphics[page=8,width=0.31\textwidth]{faces-degree4}}		
		\caption{
		All different configurations in which $x$ and $y$ are: 
		(a)-(g) at distance $3$, and
		(h)-(o) at distance $4$ along the outer face of $H$.
		The different subcases arise based on whether 
		$e=(x,y)$ starts from the left, middle or right fan-bundle of vertex $v_i$ of $G$, and 
		on whether the vertices $v_i,\ldots,v_{i+3}$ have a middle fan-bundle or not.
		Note that the case illustrated in (g) is possible.
		For the remaining, the contradiction is given at the corresponding caption of each case.}
		\label{fig:fdeg34}
	\end{figure}
   
  	\begin{figure}[p]
		\centering
		\subfloat[\label{fig:fdeg5-1}{The middle fan-bundle of $v_i$ is redundant.}]
		{\includegraphics[page=1,width=0.31\textwidth]{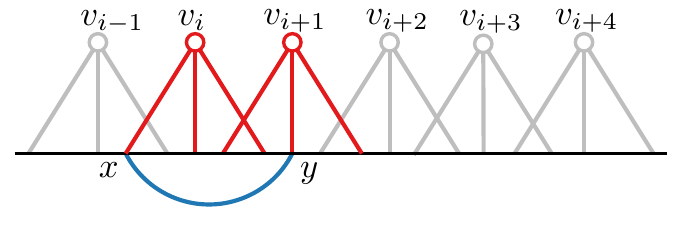}}
		\hfil
		\subfloat[\label{fig:fdeg5-2}{The middle fan-bundle of $v_{i+1}$ is redundant.}]
		{\includegraphics[page=2,width=0.31\textwidth]{faces-degree5}}
		\hfil
		\subfloat[\label{fig:fdeg5-3}{The middle fan-bundle of $v_i$ is redundant.}]
		{\includegraphics[page=3,width=0.31\textwidth]{faces-degree5}}
		\hfil
		\subfloat[\label{fig:fdeg5-4}{Edge $(x,y)$ is represented as outer edge.}]
		{\includegraphics[page=4,width=0.31\textwidth]{faces-degree5}}
		\hfil
		\subfloat[\label{fig:fdeg5-5}{The middle fan-bundle of $v_{i+1}$ is redundant.}]
		{\includegraphics[page=5,width=0.31\textwidth]{faces-degree5}}
		\hfil
		\subfloat[\label{fig:fdeg5-6}{The degree of $v_{i+1}$ is less than $5$.}]
		{\includegraphics[page=6,width=0.31\textwidth]{faces-degree5}}
		\hfil
		\subfloat[\label{fig:fdeg5-7}{The degree of $v_{i+1}$ is less than $5$.}]
		{\includegraphics[page=7,width=0.31\textwidth]{faces-degree5}}
		\hfil
		\subfloat[\label{fig:fdeg5-8}{The middle fan-bundle of $v_{i+1}$ is redundant.}]
		{\includegraphics[page=8,width=0.31\textwidth]{faces-degree5}}
		\hfil
		\subfloat[\label{fig:fdeg5-9}{The middle fan-bundle of $v_{i+2}$ is redundant.}]
		{\includegraphics[page=9,width=0.31\textwidth]{faces-degree5}}
		\hfil
		\subfloat[\label{fig:fdeg5-10}{The middle fan-bundle of $v_{i+1}$ is redundant.}]
		{\includegraphics[page=10,width=0.31\textwidth]{faces-degree5}}
		\hfil
		\subfloat[\label{fig:fdeg5-11}{The degree of $v_{i+2}$ is less than $5$.}]
		{\includegraphics[page=11,width=0.31\textwidth]{faces-degree5}}
		\hfil
		\subfloat[\label{fig:fdeg5-12}{The degree of $v_{i+2}$ is less than $5$.}]
		{\includegraphics[page=12,width=0.31\textwidth]{faces-degree5}}
		\caption{
		All different configurations in which $x$ and $y$ are at distance $5$ along the outer face of $H$. 
		The different subcases arise based on whether 
		$e=(x,y)$ starts from the left, middle or right fan-bundle of vertex $v_i$ of $G$, and 
		on whether the vertices $v_i,\ldots,v_{i+3}$ have a middle fan-bundle or not.
		The contradiction is given at the corresponding caption of each case.}
		\label{fig:fdeg5}
	\end{figure}
  
  	\begin{figure}[tb]
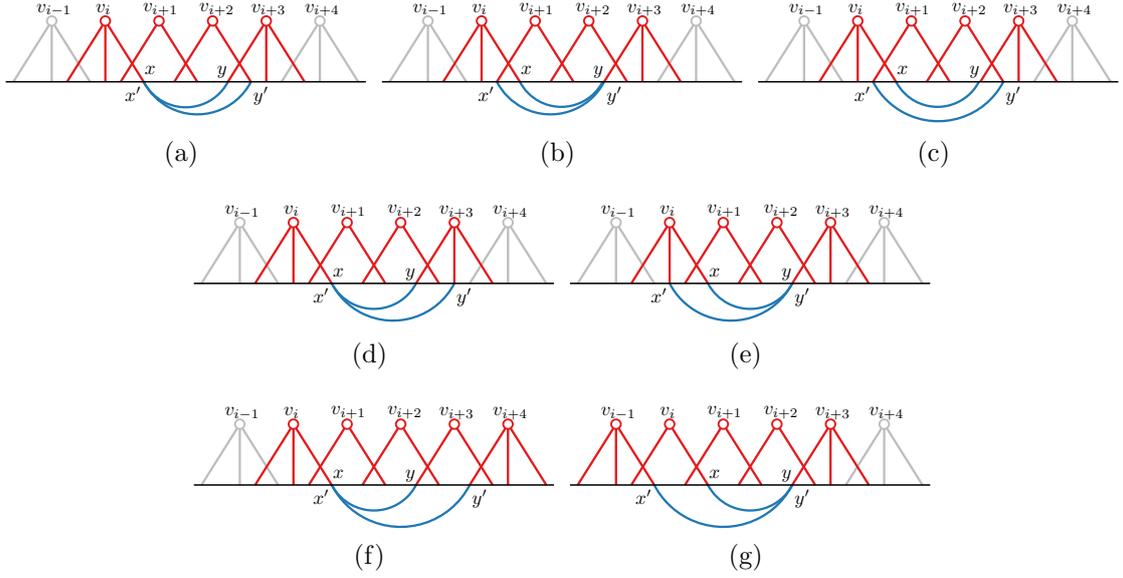

		\centering
		\subfloat[\label{fig:fourFace-t1}{}]
		{\includegraphics[page=11,width=0.33\textwidth]{faces-claim2}}
		\subfloat[\label{fig:fourFace-t2}{}]
		{\includegraphics[page=10,width=0.33\textwidth]{faces-claim2}}
		\subfloat[\label{fig:fourFace-q3}{}]
		{\includegraphics[page=7,width=0.33\textwidth]{faces-claim2}}
		
		\subfloat[\label{fig:fourFace-q1}{}]
		{\includegraphics[page=8,width=0.33\textwidth]{faces-claim2}}					
		\subfloat[\label{fig:fourFace-q2}{}]
		{\includegraphics[page=9,width=0.33\textwidth]{faces-claim2}}

		\subfloat[\label{fig:fourFace-q4}{}]
		{\includegraphics[page=14,width=0.33\textwidth]{faces-claim2}}		
		\subfloat[\label{fig:fourFace-q5}{}]
		{\includegraphics[page=13,width=0.33\textwidth]{faces-claim2}}		
		\caption{
		All different configurations in which $x'$ and $y'$ are: 
		(a)-(b) at distance $4$, and
		(c)-(g) at distance $5$ along the outer face of $H$.}
	\end{figure}

  We now claim that $x$ and $y$ cannot be at distance $5$ (or
  more). In Fig.~\ref{fig:fdeg5}, we illustrate all cases that can
  appear when the distance between $x$ and $y$ is exactly $5$. In all cases but one,
  either a middle fan-bundle is redundant or a vertex of $G$ has
  degree less than~$5$, both of which form a contradiction (note that
  this also holds when $x$ and $y$ are at distance greater than
  $5$). The exceptional case is illustrated in
  Fig.~\ref{fig:fdeg5-4}. In this particular case, if $x$ and $y$
  are at distance $5$, then the contradiction is based on the fact
  that the edge $(x,y)$ is represented as outer edge in $H$.
  On the other hand, if $x$ and $y$ are at distance greater than
  $5$, then $v_{i+1}$ has degree less than~$5$, which is again a
  contradiction. Hence, our claim follows. We can
  therefore assume that $x$ and $y$ are at distance $3$, as in
  Fig.~\ref{fig:fdeg34}.
  
  Let $f'$ be the face that is incident to $e$ and different from $f$.
  We prove that $f'$ cannot be of degree~$2$ in $\mathcal{T}$.
  Suppose for a contradiction that $f'$ has degree~$2$ in $\mathcal{T}$. 
  Hence, there is only one inner edge $e'=(x',y')$ of $H$ incident to $f'$ that is different from $e$.
  Since $x$ and $y$ are at distance $3$ along the outer face of $H$, it follows that 
  $x'$ and $y'$ are at distance at least $4$ along the outer face of $H$.  
  By simplicity, it follows that $x'$ and $y'$ cannot be at distance $4$ along the outer face of~$H$; see
  Figs.~\ref{fig:fourFace-t1} and~\ref{fig:fourFace-t2}.
  Thus, face $f'$ has at least four vertices or equivalently $x'$
  and $y'$ are at distance at least $5$ along the outer face
  of~$H$. Our next claim is that $x'$ and $y'$ cannot be at
  distance $5$. To prove this claim, we distinguish two cases, based
  on whether one of $x'$ and $y'$ coincides with one of~$x$
  and~$y$, or not. In the first case, either $x'=x$ or
  $y'=y$ holds; see Figs.~\ref{fig:fourFace-q1}-\ref{fig:fourFace-q5}. Hence, either the degree of $v_{i+2}$ or
  the degree of $v_{i+1}$ is less than $5$, which contradicts our
  assumption that every vertex has degree at least $5$. 
  In the second case, illustrated in Fig.~\ref{fig:fourFace-q3}, edge $e'$
  represents edge $(v_{i+1},v_{i+2})$ of $G$, which is already
  represented by an outer edge of $H$, and our claim follows.
  Finally, it is not difficult to observe that if $x'$ and $y'$ are
  at distance greater than $5$ along the outer face of $H$, then
  either the degree of $v_{i+2}$ or the degree of $v_{i+1}$ is less
  than $5$, which contradicts our assumption that every vertex has
  degree at least $5$. This concludes the proof that $f'$ cannot be of degree~$2$ in $\mathcal{T}$.

  Let $f_1,\dots,f_\ell$ be the leaves of $\mathcal{T}$. 
  Let also $f_1',\ldots,f_\ell'$ be the parents of $f_1,\dots,f_\ell$ in $\mathcal{T}$, respectively. 
  Since $f_1',\ldots,f_\ell'$ cannot be of degree~$2$ in $\mathcal{T}$, it follows that $\ell \geq 3$.
  Furthermore, if $\ell = 3$, then $\mathcal{T}$ must be a star with three leaves. 
  This implies that $H$ has exactly three inner edges, which is less than $2n+k-9$, since $n \geq 7$.
  Hence, we may assume w.l.o.g.~that $\ell \geq 4$ and therefore $H$ has at most $(2n+k-3)-4 = 2n+k-7$ inner edges.
  Note that if $\ell \geq 6$, then $H$ has at most $(2n+k-3)-6 = 2n+k-9$ inner edges. 
  Hence, we only have to consider the case where $4 \leq \ell \leq 5$.
  We distinguish two cases: $\mathcal{T}$ is a star or not.
  In the first case, $H$ has either four or five inner edges, which is less than $2n+k-9$, since $n \geq 7$.
  In the second case, there exist two faces in $\{f_1',\ldots,f_\ell'\}$ that are different from each other, say $f_1'$ and $f_\ell'$.
  We prove that $f_1'$ cannot be a triangular face in $H$; the proof for $f_\ell'$ is analogous. 
  Note that this proof also completes the proof of our claim, as it directly implies that $H$ cannot have more than 
  $(2n+k-3)-4-2 = 2n+k-9$ inner edges.
  Recall that $f_1'$ has at least two children in $\mathcal{T}$.
  Assume w.l.o.g.~that $f_1$ and $f_2$ are children of $f_1'$ and 
  let $e_1$ and $e_2$ be the unique inner edges incident to $f_1$ and $f_2$, respectively. 
  Since each of $e_1$ and $e_2$ must connect the right fan-bundle of a vertex $v_i$ 
  with the left fan-bundle of vertex $v_{i+3}$ as illustrated in Fig.~\ref{fig:fdeg3-7}, 
  it follows that $e_1$ and $e_2$ cannot share an endpoint. 
  Hence, $f_1'$ is not triangular. 
  This concludes the proof of the lemma.   
\end{proof}

We are now ready to prove the theorem about the edge density of $2$-sided
outer-\fb graphs.

\begin{theorem}\label{thm:2sided-outer-density}
  A $2$-sided outer-\fb graph $G$ with $n$ vertices has at most $4n-9$
  edges, where $n \geq 3$. This is a tight bound for all $n\geq 6$.
\end{theorem}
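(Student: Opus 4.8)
The plan is to prove the two directions separately. For the lower bound, Lemma~\ref{lem:waterlily} already supplies $2$-sided outer-\fb graphs with exactly $4n-9$ edges for every $n \geq 9$, so tightness holds there; for the remaining values $6 \leq n \leq 8$ I would exhibit explicit extremal drawings. For instance $K_6$ has $\binom{6}{2}=15=4\cdot 6-9$ edges, so it suffices to check that $K_6$ admits a $2$-sided outer-\fb drawing, and I would similarly hand-construct $7$- and $8$-vertex examples with $19$ and $23$ edges, respectively. These finitely many base cases are routine once drawn.

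For the upper bound I would argue by induction on $n$, the engine being the reduction to the special case covered by Lemmas~\ref{cl:outer-face} and~\ref{cl:internal-edges}. If $G$ contains a vertex $v$ of degree at most $4$, then the restriction of $\Gamma$ to $G-v$ is a $2$-sided outer-\fb drawing on $n-1$ vertices (removing a vertex keeps the others on the outer face), so by induction $G-v$ has at most $4(n-1)-9$ edges, and re-inserting $v$ adds at most $4$ edges, giving $|E(G)| \le 4(n-1)-9+4 = 4n-9$. For the base of the induction it suffices to check $3 \le n \le 6$ directly, where $\binom{n}{2} \le 4n-9$ holds, so every $n$-vertex graph—and in particular every $2$-sided outer-\fb one—respects the bound.

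It therefore remains to treat the case $n \geq 7$ in which every vertex has degree at least $5$. Here I would first normalize $\Gamma$ so that it satisfies the hypotheses of the two auxiliary lemmas, namely that each vertex has at most one middle fan-bundle and that no middle fan-bundle is crossed. Once $\Gamma$ is in this normal form, I pass to the associated outerplanar graph $H$ on $2n+k$ vertices (the terminals of all fan-bundles), where $k$ is the number of vertices carrying a middle fan-bundle, recalling that each edge of $G$ corresponds to a single unbundled part, i.e.\ an edge of $H$. Splitting the edges of $H$ into outer and inner edges and combining Lemma~\ref{cl:outer-face} (at most $2n-k$ outer edges) with Lemma~\ref{cl:internal-edges} (at most $2n+k-9$ inner edges) yields
\[
  |E(G)| = |E(H)| \le (2n-k) + (2n+k-9) = 4n-9,
\]
where the dependence on $k$ cancels, completing the upper bound.

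The main obstacle is the normalization step. I expect crossed or multiple middle fan-bundles to be eliminable without losing edges: since all vertices lie on the outer face, a crossed middle fan-bundle can be rerouted in the interior to remove its crossing, and—once all middle fan-bundles are uncrossed—two consecutive middle fan-bundles of the same vertex can be merged into one without creating a doubly-crossed bundle. Making these redrawings precise, and verifying that they neither decrease the number of edges nor lower any vertex degree below $5$, is the delicate part. I would phrase it as choosing, among all $2$-sided outer-\fb drawings of a maximally dense $G$, one that minimizes the total number of middle fan-bundles together with their crossings, and then showing that any violation of the special-case hypotheses would permit a strictly better such drawing, contradicting minimality.
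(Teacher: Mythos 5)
Your overall skeleton matches the paper's proof: induction on $n$ with base case $n\le 6$ (where $\binom{n}{2}\le 4n-9$ and $K_6$ is $2$-sided outer-\fb), removal of a vertex of degree at most $4$ in the inductive step, and, in the minimum-degree-$5$ case, the combination of Lemma~\ref{cl:outer-face} and Lemma~\ref{cl:internal-edges} to get $(2n-k)+(2n+k-9)=4n-9$ edges in $H$. Your treatment of tightness is also fine (and for $6\le n\le 8$ arguably more explicit than the paper's, which only invokes Lemma~\ref{lem:waterlily}).

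The genuine gap is exactly the step you flag as ``the delicate part'': the normalization that eliminates crossings between middle fan-bundles. Rerouting a crossed middle fan-bundle to make it crossing-free is not possible in general without losing edges. If the middle fan-bundles of two non-consecutive vertices $v_i$ and $v_j$ cross, the edges carried by those fan-bundles need that crossing to reach their tips: an edge whose two fan-bundles are both uncrossed has a planar drawing, and the edges in question connect vertices that are separated by other parts of the drawing, so removing the crossing destroys those edges rather than redrawing them. A minimality argument over all drawings of a maximally dense $G$ does not rescue this, because there are maximally dense graphs whose every drawing contains such a crossing. The paper handles this case by a different mechanism, and you need some substitute for it: given crossing middle fan-bundles of $v_i$ and $v_j$, one may assume (by maximality) that the base edge $(v_i,v_j)$ is present; this edge, drawn between two non-consecutive outer-face vertices, \emph{splits} $\Gamma$ into two $2$-sided outer-\fb drawings $\Gamma_1$ and $\Gamma_2$ on $n_1$ and $n_2$ vertices with $n_1+n_2=n+2$, both containing $(v_i,v_j)$, and the induction hypothesis applied to both pieces gives at most $(4n_1-9)+(4n_2-9)-1=4n-11<4n-9$ edges. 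With that divide-and-conquer step in place, the only remaining normalization you need is the harmless one (merging the several \emph{uncrossed} middle fan-bundles of a single vertex into one), and your computation with the two auxiliary lemmas then goes through.
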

\begin{proof}
  Our proof is by induction on $n$.
  For the base case, observe that all graphs with $n \leq 6$ vertices have at
  most $n(n-1)/2 \leq 4n-9$ edges. Since the complete graph $K_6$ is $2$-sided
  outer-\fb (see Fig.~\ref{fig:k6}), it follows that all graphs with $n \leq 6$ vertices
  are in fact $2$-sided outer-\fb.

  For the inductive step, assume that $G$ has $n \geq 7$ vertices and
  let $\Gamma$ be a $2$-sided outer-\fb drawing of $G$. 
  Let also $v_{1},\ldots,v_{n}$ be the vertices of $G$ as they appear in
  clockwise order along the outer face of $\Gamma$.
  By the induction hypothesis, all $2$-sided outer-\fb graphs
  with $n' < n$ vertices have at most $4n'-9$ edges. We show that $G$ has at
  most $4n-9$ edges, as well.

  We first consider the case in which there exists a vertex $v_i$, for some $i=1,\ldots,n$, with degree
  at most $4$ in $G$. Since by the induction hypothesis the graph obtained by removing $v_i$ and all its
  incident edges, has at most $4(n-1)-9$ edges, we have that $G$
  has at most $4(n-1)-9+4= 4n - 9$ edges. Thus in the following we will assume
  that each vertex of $G$ has degree at least $5$.

	As discussed above, we assume w.l.o.g.\ that the right
  fan-bundle of $v_i$ crosses the left fan-bundle of $v_{i+1}$,
	and that the edge $(v_i,v_{i+1})$ is  represented with a crossing-free
	unbundled part connecting their terminals.
	
  We first consider the case in which there is a crossing between middle
  fan-bundles of two different vertices $v_i$ and $v_j$; see
  Fig.~\ref{fig:middlebundles}. We can assume that $v_i$ and $v_j$ are not
  consecutive along the outer face of $\Gamma$, as otherwise these crossing middle fan-bundles would isolate the
  other two crossing fan-bundles of~$v_i$ and $v_j$ (the right of~$v_i$ and the
  left of $v_j$, or vice versa), which could then be removed. Also, we can
  assume that the edge $(v_i,v_j)$ belongs to $G$, as otherwise we can add it 
  without violating the $2$-sided outer-\fby of $G$
  (see the dotted edge in Fig.~\ref{fig:middlebundles}). This implies that there
  is a second pair of crossing fan-bundles on the other side of $(v_i,v_j)$, as
  otherwise we can add them (although we possibly do not use them; 
  see the dashed fan-bundles in Fig.~\ref{fig:middlebundles}). Thus, the
  edge $(v_i,v_j)$ splits $\Gamma$ into two $2$-sided outer-\fb drawings
  $\Gamma_1$ and $\Gamma_2$ of two graphs $G_1$ and $G_2$, both containing
  vertices $v_i$ and $v_j$ and the edge between them. Let $n_1$ and $n_2$ be the
  number of vertices in~$G_1$ and $G_2$, respectively; note that $n_1+n_2=n+2$.
  By the induction hypothesis,~$\Gamma_1$ and $\Gamma_2$ have at most $4n_1-9$ and
  $4n_2-9$ edges, respectively. Since $(v_i,v_j)$ belongs to both $\Gamma_1$
  and~$\Gamma_2$, we have that $\Gamma$ has at most
  $(4n_1-9) + (4n_2-9) - 1 = 4(n+2)-19 = 4n-11 < 4n-9$~edges.

  To complete the proof, we consider the case in which no pair of middle fan-bundles
  cross in $\Gamma$. In this case, we can assume w.l.o.g.\ that each vertex is incident to
  at most one middle fan-bundle, as otherwise we could merge all its middle
  fan-bundles into one. Hence, Lemmas~\ref{cl:outer-face} and~\ref{cl:internal-edges}
	apply and we can conclude that $H$ has at most $4n-9$ edges.
This also implies that $G$ has at most $4n-9$ edges, since the only edges that could be drawn
in $\Gamma$ without using fan-bundles are between consecutive vertices $v_i$ and
$v_{i+1}$, but these edges are already in $H$.

The fact that the bound is tight follows from Lemma~\ref{lem:waterlily}.
Hence, the statement follows.
\end{proof}

In the following theorem, we study the edge density of $2$-sided $2$-layer \fb graphs. 
The upper bound is an immediate consequence of Theorem~\ref{thm:2sided-outer-density}. 
The corresponding lower bound is based on a construction similar to the one presented in the proof of Lemma~\ref{lem:waterlily}.

\begin{theorem}\label{thm:2sided-2layer-density}
  A $2$-sided $2$-layer \fb graph with $n \ge 3$
  vertices has at most $3n-7$ edges, while there exist $2$-sided $2$-layer \fb
  graphs with $n \geq 10$ vertices and $2n-4$~edges.
\end{theorem}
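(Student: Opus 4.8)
The plan is to treat the two parts separately. For the upper bound I would reduce to the outer case, exactly as in the proof of Theorem~\ref{thm:density_layered}. Starting from a $2$-sided $2$-layer \fb drawing of $G$, with bipartition sets of sizes $a$ and $b$ (where $a+b=n$) laid out on the two horizontal lines, I would add the $a-1$ edges joining consecutive vertices of the top line and the $b-1$ edges joining consecutive vertices of the bottom line, for a total of $n-2$ new edges. Since $G$ is bipartite these edges are genuinely new and keep the graph simple, and---viewing the two lines as the upper and lower arcs of the outer boundary---each of them connects two consecutive boundary vertices, so it can be drawn crossing-free and unbundled, exactly like the edges $(v_i,v_{i+1})$ in the outer model. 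Thus the augmented graph $G'$ is $2$-sided outer-\fb, so by Theorem~\ref{thm:2sided-outer-density} it has at most $4n-9$ edges, and hence $G$ has at most $4n-9-(n-2)=3n-7$ edges.

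For the lower bound I would build a two-layer analogue of the water lily of Lemma~\ref{lem:waterlily}. I would place the $n$ vertices on the two parallel lines and equip every vertex with a \emph{left} and a \emph{right} fan-bundle directed into the strip between the lines, arranged in the flower-drawing fashion so that the right fan-bundle of each vertex crosses the left fan-bundle of the next one along the boundary; this produces $2n$ terminals in total. I would then join these terminals by crossing-free unbundled segments, following the same zigzag idea as in the water lily, so that the auxiliary graph $H$---whose vertices are the $2n$ terminals and whose edges are the unbundled middle parts of the edges of $G$---is a maximal crossing-free structure inside the strip. As in Lemma~\ref{lem:waterlily} one has $|E(G)|=|E(H)|$; since the unbundled parts connect upper terminals to lower terminals without crossings, $H$ is a two-layer planar (hence near-spanning) graph on $2n$ vertices, carrying up to $2n-1$ edges, from which a constant number must be dropped to avoid parallel edges of $G$ and to respect the two ends of the strip. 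A short count then gives exactly $2n-4$ edges. As in the water lily, the pattern only closes up once there are enough vertices, which is the reason for the hypothesis $n\ge 10$.

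The easy and routine part is the upper bound: the only thing to verify is that the two added paths really embed as crossing-free unbundled outer edges, which is immediate from the boundary order. The delicate part---and the main obstacle---is the lower-bound construction, where one must exhibit explicit left and right fan-bundles together with an explicit routing of the unbundled segments so that (i)~every fan-bundle is crossed by at most one other fan-bundle, (ii)~all unbundled parts are pairwise crossing-free within the two-layer strip, and (iii)~no two edges of $G$ become parallel. These are precisely the conditions handled by the water-lily machinery, and transplanting them to the more rigid two-layer setting is where the work lies. Finally, I would stress that, in contrast to the outer case of Theorem~\ref{thm:2sided-outer-density}, the two bounds $2n-4$ and $3n-7$ do not coincide, so this theorem only locates the true density between these values rather than pinning it down.
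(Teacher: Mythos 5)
Your upper bound is exactly the paper's argument (and the argument of Theorem~\ref{thm:density_layered}): add the two paths along the layers ($n-2$ new edges, genuinely new by bipartiteness, drawable unbundled and crossing-free in the empty half-planes above and below the layers), observe the result is $2$-sided outer-\fb, and invoke Theorem~\ref{thm:2sided-outer-density}. That part is correct and needs no further comment. Your closing remark that the two bounds do not meet is also accurate.

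The lower bound, however, is where your proposal has a genuine gap: you do not actually exhibit the construction, and the specific transplant you propose would fail. In the water lily of Lemma~\ref{lem:waterlily}, the crossing pattern and the zigzag both connect terminals of fan-bundles anchored at \emph{consecutive} vertices along the boundary (the edges $(v_i,v_{i+1})$ and $(v_i,v_{i+2})$ are present, and each block $S_j$ of consecutive terminals is triangulated by a zigzag). In the $2$-layer setting, consecutive vertices along the boundary are, for long stretches, vertices of the \emph{same} bipartition set, which cannot be adjacent; so neither the ``right bundle of $v_i$ crosses left bundle of $v_{i+1}$, and their terminals are joined'' pattern nor the zigzag within a block of consecutive terminals carries over. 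Every unbundled segment must join a terminal of a top-anchored bundle to a terminal of a bottom-anchored bundle, which forces a qualitatively different interleaving, and your edge count ($2n-1$ minus ``a constant number'' equals $2n-4$) is asserted rather than derived. The paper does not attempt this transplant: it simply exhibits an explicit $2$-layer drawing (Fig.~\ref{fig:layeredlily}) in which every vertex has degree~$4$ except for one vertex of degree~$2$ and two vertices of degree~$3$, and obtains $2n-4$ edges by the handshake lemma. To complete your proof you would need to replace the hand-waved ``two-layer water lily'' by such a concrete drawing, verifying conditions (i)--(iii) that you correctly list but do not discharge.
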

\begin{proof}
  Let $G$ be a $2$-sided $2$-layer \fb graph with $n\ge3$ vertices. As in the proof
  of Theorem~\ref{thm:density_layered}, we observe that one can add $n-2$ edges
  in $G$ and obtain a new graph $G'$ that is $2$-sided outer-\fb. Since by
  Theorem~\ref{thm:2sided-outer-density} graph $G'$ cannot have more than $4n-9$
  edges, it follows that $G$ cannot have more than $3n-7$ edges. For the
  corresponding lower bound, observe that all vertices of the graph of
  Fig.~\ref{fig:layeredlily} have degree exactly $4$, except for the vertices
  drawn as filled and non-filled squares, which have degrees~$2$ and $3$, respectively. 
  Hence, this graph has in total $2n-4$ edges. 
\end{proof}

For the general case, we have given in Lemma~\ref{lem:2sided-density} a lower bound on the edge density. 
In the following, we focus on a linear upper bound.

	Consider a $2$-sided \fb drawing $\Gamma$ of a maximally dense graph $G$,
  which contains the maximum number of uncrossed edges. Let $n$ and $m$ be the
  number of vertices and edges of $G$, respectively. To give an upper bound for
  $m$, we observe that each edge of $G$ can be identified by its unbundled part
  in $\Gamma$, which is unique for each edge.

	We proceed by defining a planar auxiliary subgraph $G_p$ of $G$, with $n_p$
  vertices and $m_p$ edges, as follows. Graph $G_p$ has the same vertex set as
  $G$, and so $n_p=n$, and contains all uncrossed edges of $G$ in $\Gamma$.
  Since $\Gamma$ contains a maximum number of uncrossed edges, it follows that
  for each pair of crossing fan-bundles $B_u$ and $B_v$, graph $G_p$ contains
  the base edge $(u,v)$ of $B_u$ and $B_v$ (note that the base edge of $B_u$
  and $B_v$ might occur several times in $G_p$, but such copies are pairwise
  non-homotopic). Hence, by the Euler's formula for planar graphs, it follows that $m_p \leq 3n-6$.

	Next, we create another planar graph $G_p'$, with $n_p'$ vertices and
  $m_p'$ edges, consisting of the vertices of $G$ and the terminals of the
  fan-bundles of $\Gamma$, which we call \emph{terminal vertices}. For each pair of
  crossing fan-bundles $B_u$ and $B_v$ with terminals $t_u$ and $t_v$, graph
  $G_p'$ contains edges $(u,t_v)$, $(t_v,t_u)$, $(t_u,v)$, and either edge $(u,t_u)$
  or edge $(v,t_u)$; see Fig.~\ref{fig:replacement}. We refer to these edges
  as \emph{bridging edges}, since they bridge vertices of the original graph
  with terminal vertices. Finally, for each unbundled part of each edge in
  $\Gamma$, graph $G_p'$ has an edge connecting the corresponding terminal
  vertices of $G_p'$. By construction, graph $G_p'$ is planar. If we denote by
  $t$ the number of terminal vertices of $G_p'$, then $n_p' = n+t$ and
  since~$G_p'$ is planar $m_p' \leq 3(n+t)-6$ holds.

\begin{figure}[t]
  \begin{minipage}[b]{.49\textwidth}
    \centering
    \includegraphics[page=5]{2sided-density}
    \caption{Illustration for the proof of Theorem~\ref{thm:2sided-2layer-density}.}
    \label{fig:layeredlily}
  \end{minipage}
  \hfill
  \begin{minipage}[b]{.46\textwidth}
    \centering
    \includegraphics[page=6]{2sided-density} 
    \caption{Illustration for the proof of Theorem~\ref{thm:general_bound}.}
    \label{fig:replacement}
  \end{minipage}
\end{figure}

  Observe, however, that for each pair of terminal vertices $t_u$ and $t_v$
  corresponding to the terminals of two crossing fan-bundles anchored at two
  vertices $u$ and $v$, respectively, all the four bridging edges incident to
  $t_u$ and $t_v$ are not in correspondence with edges of the original graph $G$. 
  Hence, the number of edges that actually correspond to distinct edges of $G$ is equal to $m_p'$ minus the
  number of bridging edges, which is equal to $2t$ since every two terminal
  vertices determine four bridging edges. This implies that:

	\begin{equation}
	m \leq 3(n+t)-6-2t=3n+t-6
	\label{eq:mup}
	\end{equation}

	Note that the arguments presented so far would already give a linear upper bound
  on the number of edges of $G$; in fact, since we can associate at most four
  terminal vertices to each edge of $G_p$ (as each of these edges can have at
  most two crossing fan-bundles on each side), we have that $t \leq 4 m_p$,
  which gives $t \leq 4\cdot(3n-6)=12n-24$ and thus $m \leq 3n+t-6 \leq 15n-30$.

	In order to improve this bound, we will show in the following that the
  value of $m_p$ is actually significantly smaller than $3n-6$. The general
  idea is that, if $G_p$ contains a small face $f$ (which is always the case if
  $m_p$ is equal or close to $3n-6$), then it is not possible for all the edges
  incident to $f$ to have fan-bundles inside $f$ without having multiple edges
  in $G$; note that this reduces the number of terminal vertices in $G_p'$, and
  hence its number of edges. This is clear, for example, when~$f$ is
  triangular, and thus all the connections that could be represented by fan-bundles
  inside $f$ are already represented by the three edges incident to $f$;
  in this case, in fact, none of these three edges incident to $f$ may have
  fan-bundles inside it. We formalize this concept in the following.

	Consider any (possibly non-simple) $k$-cycle of $G_p$, with $2 \leq k \leq 6$,
  delimiting a face of some connected component of $G_p$ in $\Gamma$. If
  this $k$-cycle also delimits a face of $G_p$, then we call it \emph{empty};
  otherwise we call it \emph{non-empty}. Note that if $k=2$ then the cycle must
	be non-empty (as otherwise we would have a pair of homotopic parallel edges).
	Also note that a non-empty $k$-cycle contains in
  its interior all the vertices and edges of at least another connected
  component of $G_p$. Further, there exists a non-connected face of $G_p$ whose boundary
	consists of this non-empty $k$-cycle and the outer boundaries of all the
	components contained in it.
	We denote by $f_k$ the number of empty $k$-cycles
  and by $\phi_k$ the number of non-empty $k$-cycles in $\Gamma$.
	Hence, $\sum_{k=3}^\infty f_k$ and $\sum_{k=2}^\infty \phi_k$ are the numbers of connected
	and non-connected faces of $G_p$, respectively.
	
	For the following lemma, recall that an edge is accounted twice for a face
	if both its sides are incident to this face.
	
	\begin{lemma}\label{lem:smallFaces}
	For a non-empty $k$-cycle $C$ with $k=2,3,4$, the face $f$ of $G_p$ that is delimited
	by $C$ has at least 5 incident edges.
	\end{lemma}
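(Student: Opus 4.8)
The plan is to bound from below the length of the facial walk of $f$, counting an edge twice whenever both its sides bound $f$ (as happens for an isolated edge or a bridge). Since $C$ is non-empty, $f$ is the non-connected face whose boundary is formed by the $k$ edges of $C$ together with the outer boundary walks of the one or more connected components of $G_p$ trapped in the interior of $C$. Thus it suffices to show that these trapped components contribute enough to bring the total length of the walk to at least~$5$.

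I would first rule out that a trapped component is a single vertex. A vertex $x$ that is isolated in $G_p$ has no incident uncrossed edge of $G$ (such an edge would lie in $G_p$) and anchors no crossing fan-bundle (this would create an incident base edge, which lies in $G_p$ by construction); hence $\deg_G(x)=0$. But the fan-bundle parts lying in the interior of the face containing $x$ are arcs attached to its boundary, and therefore do not enclose $x$, so one could add an uncrossed edge from $x$ to a boundary vertex of its face while preserving \fby, contradicting the maximal density of $G$. Consequently every trapped component contains at least one edge, and its outer boundary walk contributes at least~$2$ to the facial walk of $f$. This already settles the two easy cases: for $k=4$ the length is at least $4+2=6$, and for $k=3$ it is at least $3+2=5$.

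The remaining, and main, case is $k=2$, where the previous counting gives only $2+2=4$. Here I would exploit that, since $G$ is simple, two parallel edges of $G_p$ between vertices $u$ and $v$ cannot both be uncrossed, so at least one edge of $C$ is a base edge, say $e=(u,v)$, coming from a crossing of two fan-bundles $B_u$ and $B_v$. The terminals of $B_u,B_v$ and the tips of these bundles then populate the $f$-side of $e$, and the edges joining $u$, $v$ and these tips in the cyclic order along the crossing configuration are forced into $G_p$ by maximality, exactly as in the transformation used in the proof of Theorem~\ref{thm:density_general}; a short case distinction (according to whether the second edge of $C$ is a base edge or the uncrossed copy of $(u,v)$, and to where the crossing configuration lies) then exhibits at least one such edge on the boundary of $f$, raising the length of the walk to at least~$5$.

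I expect this last case to be the main obstacle. The delicate point is that $f$ may contain fan-bundle parts of $G$ that are not recorded in $G_p$, so maximality cannot be used to add a chord of $f$ naively: a putative new uncrossed chord would subdivide $f$ (contradicting that $f$ is a single face), whereas an already-present but crossed chord contributes its own base edge to $G_p$. In either reading one must show that an additional edge is forced onto the boundary of $f$, and it is precisely this bookkeeping between uncrossed, base, and crossed edges that requires care.
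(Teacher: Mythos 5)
Your overall accounting (facial walk length $=k$ plus the contributions of the trapped components, with edges incident to $f$ on both sides counted twice) matches the paper's, but the step on which your $k=3$ and $k=4$ cases entirely rest --- that no trapped component of $G_p$ can be an isolated vertex --- is not correct as argued, and is in fact false in general. The justification ``the fan-bundle parts lying in the interior of the face containing $x$ are arcs attached to its boundary, and therefore do not enclose $x$'' fails: arcs with both endpoints on the boundary of a face can pairwise cross and thereby enclose an interior region (the diagonals of a convex pentagon enclose its center), and here such arcs are exactly the crossed edges joining terminals of fan-bundles anchored on $C$. Indeed, the paper explicitly notes that a non-empty $5$-cycle may contain \emph{only} isolated vertices in its interior (Fig.~\ref{fig:2sided-cycles-3}), so any argument eliminating isolated trapped vertices uniformly in $k$ must be wrong. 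The real content of the lemma for $k=3,4$ is to show that for these particular values the crossed arcs cannot cut an isolated vertex off from $C$: for $k=3$ an arc between two terminals would represent an edge of $C$ itself and hence cannot exist, and for $k=4$ there are at most two such arcs (the diagonals), which split the interior into regions each touching $C$, so an isolated vertex could be joined to $C$ by a planar edge, contradicting maximality. This is precisely the part your proposal skips. (A smaller issue: your inference $\deg_G(x)=0$ is incomplete, since an edge at $x$ could be crossed only at the fan-bundle of its \emph{other} endpoint, in which case $x$ neither anchors a crossing bundle nor has an uncrossed edge; one needs the tip-connecting edges forced by maximality to exclude this.)

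For $k=2$ your plan is plausible in spirit but does not close the case. If the crossing configuration producing a base edge lies on the side of $(u,v)$ away from $f$ --- which can happen for both parallel edges simultaneously --- none of the forced tip-connecting edges lands on the boundary of $f$, and you are back to analysing the trapped components directly. The paper does exactly that: it shows the interior contains a component that is not an isolated vertex, reduces to a single such component $\sigma$ with only two vertices $w,z$ on its outer face, and argues that the terminal-to-terminal connections inside $f$ (necessarily between bundles anchored at $\{u,v\}$ and at $\{w,z\}$) cut $f$ into at most four regions, each meeting one of $u,v,w,z$, so any isolated vertex could again be connected planarly; the resulting contradictions force the interior to contribute at least three boundary edges. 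Your ``short case distinction'' would have to reproduce essentially all of this bookkeeping, so as written the $k=2$ case is also not established.
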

	
	\begin{proof}
	In order to prove the statement for $k=3,4$, it is sufficient to show that
	at least one connected component of $G_p$ in the interior of $C$ is not an
	isolated vertex. By maximality, in $\Gamma$ there must be a crossing of two
	fan-bundles in the interior of $C$. Let these crossing fan-bundles be anchored
	at vertices $u$ and $v$. Recall that $G_p$ contains the edge $(u,v)$, since it is the
	base edge of this bundle crossing. This implies that either both $u$ and $v$ belong to
	$C$ or none of them belongs to $C$. In the latter case $u$ and $v$ belong to a
	connected component of $G_p$ in the interior of $C$ which is not an isolated vertex and the
	statement follows. So we may assume that for every fan-bundle crossing, the base edge
	is an edge of $C$. Consider the graph $H$ induced by the isolated vertices in the interior
	of $C$ and by the terminals of the fan-bundles whose base edges are edges of $C$.
	By our previous observation this graph is plane. Note that when $k=3$ there is no
	edge in $H$ between two terminals, as otherwise any such edge would represent
	an edge of $C$; a contradiction. On the other hand, when $k=4$, there can be at most
	two edges connecting terminal vertices in $H$ (corresponding to the two diagonals of $C$; 
	see Fig.~\ref{fig:2sided-cycles-1}).
	In both cases, we conclude that there is at least one isolated vertex in the interior of $C$
	that is incident to the outer face of $H$. Hence, this vertex can be connect to a vertex
	of $C$ with a planar edge; a contradiction.
	
	Consider now the case $k=2$. Let $u$ and $v$ be the two vertices belonging to $C$.
	Using the same argument as above, we can conclude that there exists at least a
	connected component of $G_p$ in the interior of $C$ that is not an isolated vertex.
	If this component has at least three edges incident to its outer face then the
	statement follows. The same holds if there exists more than one component that
	is not an isolated vertex. Hence we can conclude that there exists a single component $\sigma$
	that is not an isolated vertex, and that either $\sigma$ is an edge or its outer face is
	a pair of parallel edges. In both cases, $\sigma$ has only two vertices $w$ and $z$
	incident to its outer	face. If there is no isolated vertex in $f$ then $u,v,w$ and $z$ are the only
	vertices incident to $f$. This contradicts the fact that the pairs $\langle u,v\rangle$
	and $\langle w,z\rangle$ belong to different components of $G_p$, since at least
	one of the vertices $w$ or $z$ can be connected to $u$ or $v$ by a planar edge.
	So we can assume that there exists an isolated vertex incident to $f$.
	Since the isolated vertices are not incident to fan-bundles and since edges $(u,v)$ and
	$(w,z)$ are planar, the only possible connections between two terminals in $f$ are
	between a terminal of a fan-bundle anchored at $u$ or $v$ and a terminal of a fan-bundle
	anchored at $w$ or $z$. These connections split $f$ into at most 4
	regions (refer to the gray colored regions of Fig.~\ref{fig:2sided-cycles-2}). 
	Since each of these regions contains on its boundary at
	least one of the vertices $u,v,w$ or $z$, it is always possible to connect an
	isolated vertex to one of these four vertices by a planar edge; a contradiction.
	
	We conclude the proof by noting that for $k=5$ it is possible to have a non-empty $k$-cycle that contains
	in its interior only isolated vertices; for an illustration refer to Fig.~\ref{fig:2sided-cycles-3}.
	\end{proof}
	
	\begin{figure}[tb]
		\centering		
		\subfloat[\label{fig:2sided-cycles-1}{}]
		{\includegraphics[page=2]{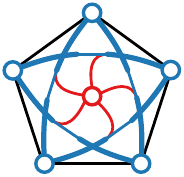}}
		\hfil
		\subfloat[\label{fig:2sided-cycles-2}{}]
		{\includegraphics[page=3]{2sided-cycles}}
		\hfil
		\subfloat[\label{fig:2sided-cycles-3}{}]
		{\includegraphics[page=1]{2sided-cycles}}
		\caption{Illustrations for the proof of Lemma~\ref{lem:smallFaces}.}
	\end{figure}

  In the following we will assume that every empty $k$-cycle with $3 \leq k \leq 6$
  has no terminal vertex in its interior. We observe that for $k\ne3$
  this results in an underestimation of the number of edges, which we will
  compensate in the final computation by considering each of the cases
  independently.

	This assumption implies that the number $t$ of terminal vertices may be smaller
  than $4m_p$, and in particular it can be expressed as
	\begin{equation}
    t \leq 4m_p - 6f_3 - 8f_4 - 10f_5 - 12 f_6
	\label{eq:dup}
	\end{equation}
	Further, we can also express $m_p$ as a function on the number of the $k$-
  cycles. In particular, by using the fact that $2 m_p$ equals the sum of the size
	of all faces of $G_p$, and by using Euler's formula for disconnected planar graphs
  $m_p = n + f_p -1 -c_p$, where $f_p$ denotes the number of faces of $G_p$ and
  $c_p$ denotes the number of its connected components, we get:
	
	\begin{align*}
	&\mathrel{\phantom{+}} 3 f_3 + 4 f_4 + 5 (f_5 + \phi_2 + \phi_3 + \phi_5) + 6 (f_6 + \phi_4 + \phi_6) \\
	&+ 7 (f_p - (\phi_2 + f_3 + \phi_3 + f_4 + \phi_4 + f_5 + \phi_5 + f_6 + \phi_6)) \\
	&\leq 2 m_p = 2n + 2f_p -2 -2c_p
	\end{align*}
	where we use the coefficient 5 for $\phi_2$ and $\phi_3$ and the coefficient 6 for $\phi_4$
	due to Lemma~\ref{lem:smallFaces}. This yields:
	\begin{equation}
	5f_p - 4 f_3 - 3 f_4 - 2 (f_5 + \phi_2 + \phi_3 + \phi_5) - (f_6 + \phi_4 + \phi_6) \leq 2n -2 -2c_p
	\label{eq:fp}
	\end{equation}
	Observe that $c_p \geq \phi_2 + \phi_3 + \phi_4 + \phi_5 + \phi_6$, since each connected
  component of $G_p$ can be used to identify at most one face as non-empty. Thus,
  replacing $c_p$ in Eq.~\ref{eq:fp} we obtain:
	
	\begin{align*}
	&\mathrel{\phantom{\leq}} 5f_p - 4 f_3 - 3 f_4 - 2 (f_5 + \phi_2 + \phi_3 + \phi_5) - (f_6 + \phi_4 + \phi_6) \\
	&\leq 2n -2 -2(\phi_2 + \phi_3 + \phi_4 + \phi_5 + \phi_6)
	\end{align*}
	which yields:
	\[f_p \leq \frac{1}{5} (2n -2 + 4 f_3 + 3 f_4 + 2 f_5 + f_6)\]
	Applying again Euler's formula $m_p \leq n + f_p -2$ (using that $c_p \geq 1$), we obtain:
	\[m_p \leq \frac{1}{5} (7n - 12 + 4 f_3 + 3 f_4 + 2 f_5 + f_6)\]
	By Eq.~\ref{eq:dup} we have:
	\[t \leq \frac{4}{5} (7n - 12 + 4 f_3 + 3 f_4 + 2 f_5 + f_6) - 6f_3 - 8f_4 - 10f_5 - 12 f_6,\]
	which implies:
	\[t \leq \frac{1}{5}(28n- 48 - 14 f_3 - 28 f_4- 42 f_5 - 56 f_6)\]
	Hence, by Eq.~\ref{eq:mup} we might provide a bound for $m$, which is unfortunately underestimated, as we observed above:
	\[m \leq 3n + t - 6 = \frac{1}{5}(43n- 78 - 14 f_3 - 28 f_4- 42 f_5 - 56 f_6)\]
To compensate the underestimation of the number of edges, we conclude our
discussion by studying how many crossing edges can be drawn in the interior
of an empty $k$-cycle, for $k=3,\dots,6$. Namely, empty $3$-cycles (that is,
triangular faces) cannot have any edge in their interior, as discussed above.
Empty $4$-cycles can have at most two edges, namely those connecting
vertices at distance~$2$ along the $4$-cycle. For the number of edges of empty $k$-cycles
with $k=5,6$, we use as an upper bound the number of edges in the complete graph on $k$ vertices
minus $k$. We
thus have five edges for $k=5$ and ten edges for $k=6$. Hence, the final bound for
the number of edges of $G$ is:
	\[m \leq 3n + t - 6 + 2f_4 + 5 f_5 + 10 f_6 = \frac{1}{5}(43n- 78 - 14 f_3 - 18 f_4- 17 f_5 - 6 f_6)\]
	Hence, $G$ cannot have more than $(43n- 78)/5$ edges. Combining with the lower
	bound we proved in Lemma~\ref{lem:waterlily}, we obtain the following theorem.
	
\begin{theorem}\label{thm:general_bound}
  A $2$-sided \fb graph with $n \ge 3$ 
  vertices has at most $(43n - 78)/5$ edges, while there exist $2$-sided \fb
  graphs with $n \geq 9$ vertices and $6n-18$~edges.
\end{theorem}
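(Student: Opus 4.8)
The lower bound needs no new work: Lemma~\ref{lem:2sided-density} already exhibits, for every $n \geq 9$, a $2$-sided \fb graph with exactly $6n-18$ edges. So the whole task is to establish the upper bound $m \leq (43n-78)/5$, and the plan is a double-counting argument carried out through two auxiliary planar graphs attached to a maximally dense drawing $\Gamma$ that realizes as many uncrossed edges as possible.

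First I would build the planar subgraph $G_p$ containing every uncrossed edge of $\Gamma$ together with the base edge of each crossing pair of fan-bundles; Euler's formula gives $m_p \leq 3n-6$. Then I would refine $\Gamma$ into the planar graph $G_p'$ obtained by promoting every fan-bundle terminal to an actual vertex and drawing each unbundled segment as an edge. Counting the four \emph{bridging edges} attached to each crossing pair and subtracting them yields the base inequality $m \leq 3n + t - 6$ of Eq.~\ref{eq:mup}, where $t$ is the number of terminal vertices. The crude estimate $t \leq 4m_p$ already forces linearity, but is far too weak, so the real content is to prove that $m_p$ lies well below $3n-6$.

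To squeeze $m_p$ down I would exploit Lemma~\ref{lem:smallFaces}: any non-empty $k$-cycle with $k \leq 4$ bounds a face incident to at least five edges, which forbids the dense small faces that a near-triangulation would have. Writing the sum-of-face-sizes identity $2m_p = \sum(\text{face sizes})$ with the small non-empty cycles weighted as dictated by the lemma, combining it with Euler's formula for disconnected planar graphs and with the component bound $c_p \geq \phi_2 + \phi_3 + \phi_4 + \phi_5 + \phi_6$, and solving for $f_p$ and hence for $m_p$ in terms of the empty-cycle counts $f_3,\dots,f_6$, gives an expression that I can feed through Eq.~\ref{eq:dup} and Eq.~\ref{eq:mup}. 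This produces a preliminary bound of the shape $m \leq \tfrac{1}{5}(43n-78-14f_3-28f_4-42f_5-56f_6)$.

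The step I expect to be the main obstacle is the final reconciliation, because this preliminary bound is an \emph{underestimate}: deriving it required assuming that empty $k$-cycles carry no interior terminal vertices, which fails for $k \geq 4$. The plan is to compensate by adding back the maximum number of crossing edges that can actually be packed inside an empty $k$-cycle --- zero for $k=3$, at most two for $k=4$, five for $k=5$, and ten for $k=6$ --- giving $m \leq 3n+t-6+2f_4+5f_5+10f_6$. The delicate point is that these corrections must be balanced against the negative coefficients already attached to $f_3,\dots,f_6$ so that each of them remains non-positive; only then does the dependence on the face counts vanish and the bound collapse to the clean $m \leq (43n-78)/5$.
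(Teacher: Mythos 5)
Your proposal is correct and follows essentially the same route as the paper: the same two auxiliary planar graphs $G_p$ and $G_p'$, the bridging-edge count giving $m \leq 3n+t-6$, the refinement of $m_p$ via Lemma~\ref{lem:smallFaces} and Euler's formula for disconnected planar graphs, and the final compensation of $2f_4+5f_5+10f_6$ whose coefficients stay non-positive. No substantive differences to report.
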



\section{NP-completeness}
\label{sec:npcompleteness}

In this section, we prove that the problem of testing whether a graph $G$ with a
given rotation system~$R$ admits a $1$-sided or a $2$-sided \fb drawing
preserving~$R$ is NP-complete. We present the reduction for the $1$-sided model in detail, and we only highlight the differences for the $2$-sided model.

\begin{theorem}\label{thm:completeness}
Given a graph $G$ and a fixed rotation system $R$ of $G$, it is NP-complete to decide whether $G$ admits a $1$-sided \fb drawing preserving $R$.
\end{theorem}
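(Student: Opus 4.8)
The statement asks us to prove NP-completeness of deciding whether a graph $G$ with a fixed rotation system $R$ admits a $1$-sided \fb drawing preserving $R$. Membership in NP is the easy direction: I would argue that a \fb drawing can be described combinatorially (a planarization recording, for each edge, its fan-bundle assignment at one endpoint, the cyclic order of bundles at each vertex, and the at-most-one crossing between each pair of bundles), that this certificate has polynomial size, and that its validity---consistency with $R$, the $1$-sided restriction, and the one-crossing-per-bundle condition---can be checked in polynomial time. The real work is the hardness reduction.

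\textbf{The reduction.}
Since the paper emphasizes \threepartition in its preamble and the rotation system is fixed, I would reduce from \threepartition (which is strongly NP-complete), a standard source for packing-style graph drawing reductions where a budget of ``slots'' must be filled exactly. The plan is to build, for a given \threepartition instance with integers $a_1,\dots,a_{3m}$ summing to $mB$, a gadget graph $G$ together with a rigid rotation system $R$ that forces a specific ``frame'' structure in any valid \fb drawing. The frame would consist of $m$ \emph{bucket gadgets}, each offering exactly capacity $B$ worth of bundle-crossings, and $3m$ \emph{item gadgets}, where the item for $a_i$ can be realized in the drawing only by routing $a_i$ many edges through a single bucket's available crossing region. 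The fixed rotation system is the crucial lever: by pinning down the cyclic edge order at every vertex, I can force that each item gadget's edges must all be bundled at a single designated vertex (exploiting that the drawing is $1$-sided, so each edge has exactly one fan-bundle) and that the only crossing-free way to absorb them is inside one bucket. A feasible \fb drawing preserving $R$ would then exist if and only if the items partition into triples each summing to $B$.

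\textbf{Gadget engineering and the main obstacle.}
The key technical device is to use the ``$1$-planarity'' constraint---each bundle crosses at most one other bundle---to meter capacity: a bucket gadget should have exactly one bundle slot that an incoming item-bundle may cross, so that distinct items competing for the same bucket would force a bundle to be crossed twice, violating \fby. I would realize the buckets and items using dense subconfigurations (in the spirit of $K_5\setminus e$ from Fig.~\ref{fig:k5}, which is the atomic $1$-sided gadget) wired together by rigid connector paths whose rotation is fixed so that they cannot be rerouted. The forward direction (a valid $3$-partition yields a drawing) is a direct construction: place each triple's three items into their shared bucket and check the crossing count locally. The hard part will be the reverse direction: showing that \emph{every} \fb drawing preserving $R$ induces a valid partition. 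This requires proving that the rotation system, combined with the $1$-sided and one-crossing-per-bundle restrictions, leaves no ``cheating'' topology---no way to split one item's edges across buckets, no way to exceed a bucket's capacity by exploiting the outer face or nested regions, and no parallel-edge or homotopy trick that smuggles in extra crossings. Establishing this rigidity---enumerating the limited combinatorial choices the fixed rotation permits and ruling each out---is where the bulk of the proof, and the genuine difficulty, will lie. Finally, I would remark that the same frame adapts to the $2$-sided model by doubling the bundle slots at each vertex and adjusting the bucket capacities accordingly, which is the promised ``highlight the differences'' step.
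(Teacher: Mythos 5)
Your high-level strategy coincides with the paper's: membership in NP via a polynomial-size combinatorial certificate, and hardness via a reduction from \threepartition that uses the fixed rotation system for rigidity and the one-crossing-per-bundle rule to meter capacities. However, the proposal is a plan rather than a proof, and the one concrete mechanism you do describe is backwards. You propose to force ``each item gadget's edges to all be bundled at a single designated vertex'' and to charge a bucket $a_i$ units when that item crosses it. But a single fan-bundle crosses at most one other fan-bundle, so if the $a_i$ edges of an item share one bundle, absorbing them costs exactly one bundle-crossing \emph{independently of $a_i$} --- the integer $a_i$ is simply not encoded, and the reduction collapses. The paper does the opposite: the quantity $a_i$ is represented by $a_i$ \emph{pairwise independent} (disjoint) vertical edges in a ``sparse cell,'' which by definition cannot share a fan-bundle; a transversal path crossing that cell must therefore spend $a_i$ of its own edges (one bundle each). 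The budget lives on the path side: each of the $m$ transversal paths has length $(3m-3)K+B$ with $K$ large (e.g.\ $K=B^2$), every path must cross all $3m$ columns, dense cells cost $K$ each, so at most $3m-3$ dense cells are affordable and the remaining budget $B$ must pay for exactly three sparse cells; a global count over all $m$ paths (using $\sum a_i = mB$ and that no two paths can share a cell) forces each triple to sum to exactly $B$.

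The second genuine gap is rigidity. You appeal to ``rigid connector paths whose rotation is fixed so that they cannot be rerouted,'' but fixing the rotation at the vertices of a path does not prevent other edges from crossing it --- that is exactly the freedom the reduction must control. The paper's device is the \emph{barrier gadget}: a chain of copies of a small graph $H$ whose prescribed rotation forces every inner edge to already participate in a fan-bundle crossing inside the gadget, so (by the at-most-one-crossing rule) no external path can cross it, and biconnectivity forces any path entering through a boundary edge to leave through the same edge. Four such barriers form a wall enclosing the construction, and further barriers (``obstacles'') separate the cells within each column. Without an analogue of this gadget, your reverse direction --- that every drawing preserving $R$ yields a valid partition --- has no starting point, and you yourself flag this as the unproven ``bulk of the proof.'' So the proposal identifies the right source problem and the right constraint to exploit, but both load-bearing components (the $a_i$-metering via independent edges and the uncrossable barrier) are missing or misdesigned.
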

\begin{proof}
  Membership in NP can be proved as for fan-planarity~\cite{BekosCGHK14}, which is in turn inspired by the corresponding proof for the crossing number~\cite{gj-cnnpc-83}.
  
We prove the NP-hardness by means of a reduction from problem \threepartition. The idea is
  based on a general scheme proposed by Bekos et al.~\cite{BekosCGHK14} to prove the
  NP-completeness of the fan-planarity problem with a fixed rotation system.
  Recall that an instance $\langle A,B \rangle$ of \threepartition consists of an integer~$B$
  and of a set $A = \{a_1, a_2, \ldots, a_{3m} \}$ of $3m$ integers
  such that $a_i \in (\frac{B}{4}, \frac{B}{2})$, for $i=1,2,\ldots, 3m$,
  and $\sum_{i=1}^{3m} a_i=mB$. Problem \threepartition asks
  whether $A$ can be partitioned into $m$ subsets $A_1, A_2, \ldots, A_m$, each
  of cardinality $3$, such that the sum of the numbers in each subset is
  exactly~$B$. Note that \threepartition is strongly NP-hard~\cite{gj-cigtnpc-79}.
  So, we may assume w.l.o.g.~that $B$ is bounded by a polynomial in~$m$.

	Given an instance $\langle A,B \rangle$ of \threepartition, we show how to
  construct in polynomial time an instance $\langle G, R \rangle$ of our
  problem such that there is a solution for $\langle A,B \rangle$ if and only if $G$ admits
  a $1$-sided \fb drawing preserving $R$.

	Central in our transformation is the so-called \emph{barrier gadget}. To describe this gadget, we first introduce a graph $H$ composed of seven vertices
  $a,b,c,d,e,f,g$; refer to Fig.~\ref{fig:2k23}. Graph $H$ contains cycle $(a,b,c,d,e,f)$,
  which is called \emph{boundary cycle} and whose edges are the \emph{boundary edges} of $H$, and two edges $(c,g)$ and $(f,g)$.  Also, for each vertex $u \in \{c,f,g\}$ and for each vertex
  $v \in \{a,b,d,e\}$, graph $H$ contains edge $(u,v)$. The rotation system of $H$ is such that
  the boundary cycle delimits its outer face in any drawing respecting this
  rotation system, while all the other edges (which are called \emph{inner}
  edges) are routed in its interior, as in Fig.~\ref{fig:2k23}. We refer to
  vertices $a$, $e$, and~$f$ as \emph{left-sided} and to $b$, $c$, and $d$ as
  \emph{right-sided}.

	To construct an $n$-vertex barrier gadget with $n \geq 7$, we employ
  $\lfloor (n-3)/4 \rfloor$ copies of the graph $H$, which we glue with each
  other by identifying the left-sided vertices of one copy with the right-sided
  vertices of the next copy; see Fig.~\ref{fig:barrier-gadget}. We fix the rotation system of the barrier gadget so that for each vertex, the edges belonging to the same copy of $H$ are consecutive around it.   
  We will use the
  barrier gadget in order to constrain the routes of some specific paths of~$G$.

	\begin{figure}[t]
		\centering
		\subfloat[\label{fig:2k23}{}]
		{\includegraphics[page=3]{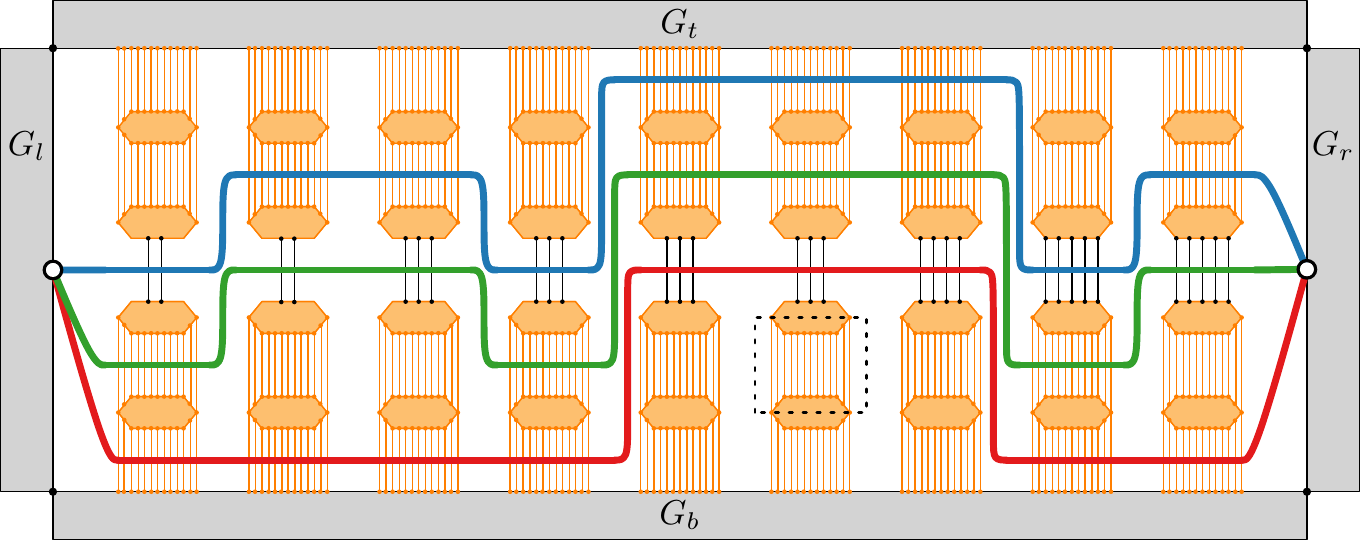}}
		\hfil
		\subfloat[\label{fig:barrier-gadget}{}]
		{\includegraphics[page=2]{3P-reduction}}

		\subfloat[\label{fig:reduction_outline}{}]
		{\includegraphics[scale=.8,page=1]{3P-reduction}}
		\hfill
		\caption{(a)~The graph $H$ used in the construction of the barrier gadget,
      which is illustrated in~(b). In~(c) we provide the whole scheme of the
      reduction from \threepartition, for the case in which $m=3$,
      $A=\{2,2,2,3,3,3,4,5,6\}$, and $B = 10$. The transversal paths are routed
      according to the following solution of \threepartition:
      $A_1=\{2,3,5\}$, $A_2=\{2,3,5\}$, and $A_3=\{3,3,4\}$.
			The hexagonal regions are the obstacles; the dotted rectangle contains all
			the vertical edges of a (bottom) cell.}
		\label{fig:reduction}
	\end{figure}

	Consider now a biconnected $1$-sided \fb graph $G$ with rotation system~$R$ that contains as a
  subgraph a barrier gadget $G_b$. Let $\Gamma$ be any $1$-sided \fb drawing
  of~$G$ respecting~$R$. Observe that, by the choice of the rotation system, the boundary
  edges of~$G_b$ do not cross any other edge of~$G_b$, while all the inner
  edges have at least one crossing with another inner edge, except possibly for
  those incident to $g$. In particular, the inner edges incident to $a$ must
  share a fan-bundle anchored at~$a$, and those incident to $b$ must share a
  fan-bundle anchored at $b$, and these two fan-bundles must cross; analogously,
  two fan-bundles anchored at $d$ and $e$ must cross. This implies that no
  path~$\pi$ of~$G \setminus G_b$ can enter inside the boundary cycle of $G_b$ and
  cross an inner edge of $G_b$ in~$\Gamma$. On the other hand, if path $\pi$
  enters inside the boundary cycle of~$G_b$ without crossing any inner edge, then it must cross
  the same boundary edge a second time to exit this cycle (due to the biconnectivity of $G$).
  In other words, if a path $\pi$ enters $G_b$ in $\Gamma$, then it must exit it by
  using the same boundary edge, which is equivalent to not entering it at all.

	We construct an instance $\langle G, R \rangle$ of our problem based on an
  instance $\langle A,B \rangle$ of \threepartition as follows. We start our
  construction with the \emph{wall gadget}, which consists of a cyclic chain of
  four barrier gadgets $G_t$, $G_r$, $G_b$, and $G_\ell$ that surrounds the whole
  construction; see Fig.~\ref{fig:reduction_outline}. The barrier gadgets $G_t$
  and $G_b$ are called \emph{top} and \emph{bottom beams}, respectively, and
  contain exactly $4 \cdot (3mK+1) +3$ vertices each, where~$K$ is a large
  integer number, e.g., $K = B^2$. The barrier gadgets $G_\ell$ and $G_r$ are called
  \emph{left} and \emph{right} walls, respectively, and have only~11
  vertices each. In other words,~$G_t$ and~$G_b$ contain $3mK+1$ copies of $H$,
  while $G_\ell$ and $G_r$ contain only two copies of $H$. By the choice of the
  rotation system $R$ and of the vertices shared by two consecutive barrier
  gadgets, we may assume that $3mK$ vertices of each of $G_t$ and $G_b$, and
  one vertex of each of~$G_\ell$ and~$G_r$, are incident to the \emph{interior of
  the wall}, that is, the closed region delimited by the wall~gadget.

	The top and bottom beams are ``bridged'' to each other by a set of $3m$
  \emph{columns}; see Fig.~\ref{fig:reduction_outline} for an illustration of
  the case $m = 3$. Each \emph{column} contains $2m-1$ \emph{cells}, where a
  cell consists of a set of pairwise disjoint edges, called \emph{vertical edges}
  of that cell (see, e.g., the edges that are contained in the dotted rectangle in
	Fig.~\ref{fig:reduction_outline}).
	There are $m-1$ \emph{top cells}, one \emph{central cell},
  and~$m-1$ \emph{bottom cells}. Cells of the same column are separated by~$2m-2$
  barrier gadgets, called \emph{obstacles}, which have $4\cdot (K-1)+3$
  vertices each (see the hexagonal regions in Fig.~\ref{fig:reduction_outline}).
	The number of vertical edges of each of the $3m$ central cells
  depends on the elements of instance $A$. In particular, for
  $i =1,2, \ldots, 3m$, the central cell $C_i$ of the $i$-th column has exactly $a_i$ vertical
  edges connecting its delimiting obstacles. Each of the remaining cells has~$K$
  vertical edges. Hence, each of the top and bottom cells contains
  significantly more vertical edges than any central cell. We say that
  central cells are \emph{sparse}, while the top and the bottom cells are~\emph{dense}.

	The left and the right walls are ``bridged'' to each other by a set of $m$
  pairwise internally disjoint paths $\pi_1,\pi_2,\ldots,\pi_m$, called
  \emph{transversal paths}, which all originate from the same vertex of the left
  wall, called \emph{origin}, and terminate at the same vertex of the right
  wall, called \emph{destination}. Each of these paths has length $(3m-3)K+B$.

	Regarding the choice of the rotation system $R$, we define a cyclic order of
  the edges around each vertex that conforms with the following constraints.
	\begin{enumerate}[label=C.\arabic*:]
		\item all inner edges of each barrier gadget lie in the interior of its boundary cycle,
		\item the wall gadget is embedded such that $3mK+2$ vertices of each top
      and bottom beam and four vertices of each left and right wall are incident
      to the interior of the wall,
		\item all columns can be embedded in the interior of the wall without crossing each other,
		\item the vertical edges of each cell can be embedded without crossing each other, and
		\item the order of the edges of the transversal paths around the origin is
      the reverse of the corresponding order around the destination, which
      guarantees that the transversal paths can avoid crossing each other.
	\end{enumerate}
	
	This concludes our construction, which is clearly polynomial in $m$, 
	since we have assumed that $B$ is bounded by a polynomial in $m$.

	We now prove the equivalence, which is mainly based on the observation that
  each transversal path has to cross exactly $3$ sparse cells and exactly $3m-3$
  dense cells in any $1$-sided \fb drawing. This is due to the following fact.
  Since each transversal path has length $(3m-3)K+B$, it can cross at most $3m-3$
  dense cells in order to connect the origin to the destination. On the other
  hand, since no two different paths can cross the same cell in any $1$-sided
  \fb drawing, we have that if any transversal path crosses fewer than $3m-3$
  dense cells, then there must be another one that crosses more than $3m-3$ of
  these cells, and the claim follows.

	Suppose that the set $A$ admits a partition into subsets $A_1,A_2,\ldots,A_m$,
  each composed of three integers summing up to $B$. If one omits the
  transversal paths, then it is easy to compute a $1$-sided \fb drawing $\Gamma$
  of $G$ preserving $R$. It is essentially a drawing like the one depicted in
  Fig.~\ref{fig:reduction_outline}, where columns are next to each other
  in the interior of the wall. To complete the drawing, we embed the transversal
  paths $\pi_1,\pi_2,\ldots,\pi_m$ of $G$ in the partial drawing of $G$
  constructed so far under the following requirements:
	\begin{enumerate}[label=R.\arabic*]
		\item \label{r:1} transversal paths $\pi_1,\pi_2,\ldots,\pi_m$ do not cross each other,
		\item \label{r:2} transversal paths $\pi_1,\pi_2,\ldots,\pi_m$ do not cross any barrier gadget,
		\item \label{r:3} each cell is traversed by at most one transversal path (as otherwise $1$-sided \fby would be deviated), and
		\item \label{r:4} each transversal path passes through exactly $3$ sparse cells and $3m-3$ dense cells.
	\end{enumerate}

	We obtain a drawing satisfying these requirements as follows. For $j=1,2,\ldots,m$,
  let $A_j=\{a_\kappa, a_\lambda,a_\mu\}$, where
  $1 \leq \kappa, \lambda, \mu \leq 3m$. Then, in the drawing~$\Gamma$, the
  path $\pi_j$ will cross the $\kappa$-th, $\lambda$-th, and $\mu$-th vertical
  columns of $G$ through sparse cells, and
  the remaining vertical columns of $G$ through dense cells. Hence,
  Requirement~\ref*{r:4} is satisfied. The routing of the remaining transversal paths
  through the $\kappa$-th vertical column is done as follows. By construction,
  there exist $m-1$ cells above and $m-1$ cells below the sparse cell of the
  $\kappa$-th vertical column (all of which are dense). Hence, there exist at
  least as many available dense cells as transversal paths to route at each
  side of the sparse cell of the $\kappa$-th vertical column. Hence, we can
  route the remaining transversal paths through the $\kappa$-th vertical column
  such that Requirements \ref*{r:1}--\ref*{r:3} are also satisfied. The
  corresponding routings through the $\lambda$-th and $\mu$-th vertical columns
  of $G$ are symmetric. This implies that the drawing $\Gamma$ of $G$ is indeed
  $1$-sided \fb and preserves $R$.

	Suppose now that $G$ admits a $1$-sided \fb drawing $\Gamma$ preserving the
  rotation system~$R$. As already mentioned, each of the transversal paths
  crosses exactly~$3$ sparse cells and exactly $3m-3$ dense cells. In addition,
  $1$-sided \fby ensures that no two transversal paths pass through the same
  cell. With these two properties, we can construct a solution
  $A_1,A_2,\ldots,A_m$ of instance $\langle A, B \rangle$ of \threepartition as follows. Assume
  that path $\pi_j$ crosses the $\kappa$-th, $\lambda$-th, and $\mu$-th vertical
  columns of $\Gamma$ through sparse cells, where $1 \leq \kappa, \lambda, \mu \leq 3m$.
  Then, the $j$-th partition set $A_j$ of instance $\langle A, B \rangle$ of \threepartition will contain
  integers $\{a_\kappa,a_\lambda, a_\mu \}$. Since
  $a_\kappa + a_\lambda + a_\mu= B$, the solution constructed this way is indeed
  a solution of \threepartition for the instance~$\langle A,B \rangle$. This
  concludes our NP-hardness reduction.
\end{proof}

We observe that the NP-completeness of $2$-sided \fby with a given rotation
system can be proved as in Theorem~\ref{thm:completeness} with the following
modifications. Since each edge of the transversal path can be crossed twice
in the $2$-sided model, we double the number of vertical edges in the dense
and sparse cells. To avoid that two transversal paths cross the same cell, we
enforce that consecutive pairs of edges in the same cell cross; see
Fig.~\ref{fig:2sided-npcompleteness-regions}. For the barrier gadget, we use the graph of
Fig.~\ref{fig:main_2sided-npcompleteness-gadget}, which by the choice of the
rotation system cannot be crossed by any transversal path. We summarize these
observations in the following theorem.

\begin{figure}[t]
  \centering
  \subfloat[\label{fig:2sided-npcompleteness-regions}]{\includegraphics[page=2]{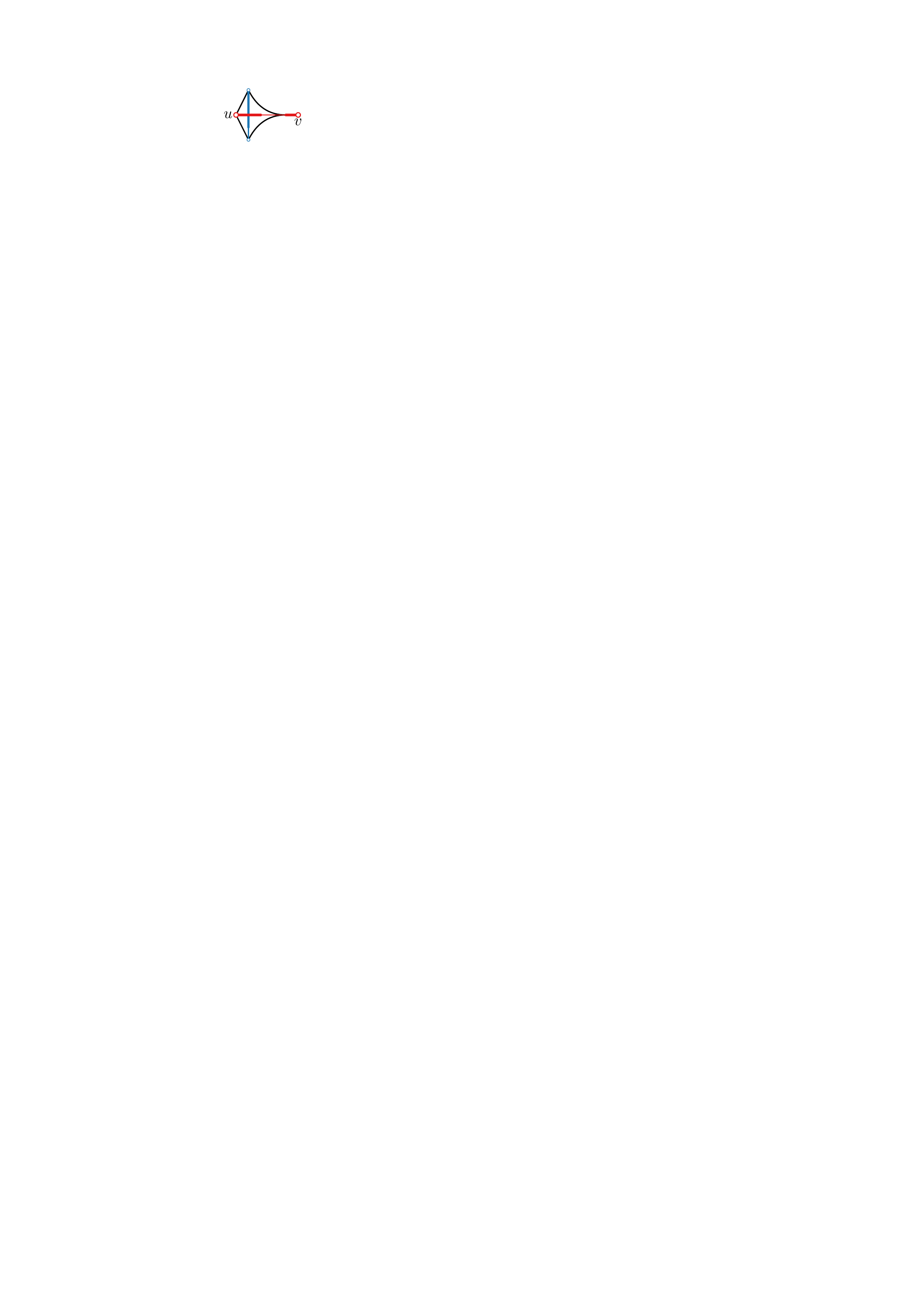}}
  \hfil
  \subfloat[\label{fig:main_2sided-npcompleteness-gadget}]{\includegraphics[page=6]{3P-reduction}}
  \caption{
  (a) Edges in the~same cell in the $2$-sided model, and
  (b) the barrier gadget in the $2$-sided models.}
\end{figure}

\begin{theorem}\label{thm:2sided-completeness}
Given a graph $G$ and a fixed rotation system $R$ of $G$, it is NP-complete to determine whether $G$ admits a $2$-sided \fb drawing preserving $R$.
\end{theorem}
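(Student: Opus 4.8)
The plan is to adapt the reduction from Theorem~\ref{thm:completeness} rather than construct a new one from scratch, exploiting the fact that the overall architecture---the wall gadget, the columns with sparse and dense cells, the transversal paths, and the equivalence argument based on \threepartition---transfers verbatim once two local modifications are carried out. First I would revisit the counting argument that forces each transversal path to cross exactly $3$ sparse and $3m-3$ dense cells. The crucial quantity is how many times a single edge of a transversal path may be crossed: in the $1$-sided model each edge participates in at most one bundle and hence can be crossed at most once, whereas in the $2$-sided model each edge has two fan-bundles and can therefore be crossed twice. To keep the path-length budget $(3m-3)K+B$ meaningful in exactly the same way, I would double the number of vertical edges in every cell (both dense and sparse), so that traversing a cell costs the same number of crossings per path as before relative to the available slack.

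Second, I would re-establish the property that no two transversal paths can pass through the same cell, which in the $1$-sided proof was guaranteed automatically by \fby. Since doubling the vertical edges and allowing two crossings per edge would otherwise let two paths share a cell, I would enforce crossings between consecutive pairs of vertical edges within each cell, as illustrated in Fig.~\ref{fig:2sided-npcompleteness-regions}; these internal crossings consume the crossing budget of the vertical edges so that only a single transversal path can still be routed through the cell without violating the $2$-sided restriction. Third, I would replace the barrier gadget by the graph of Fig.~\ref{fig:main_2sided-npcompleteness-gadget}, whose rotation system forces all its inner edges into crossing fan-bundle pairs that fully ``block'' the interior; I would verify that, exactly as in the $1$-sided case, no transversal path can cross an inner edge of such a barrier and that entering the boundary cycle of a barrier gadget is equivalent to not entering it at all. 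Membership in NP follows by the same argument as for fan-planarity, cited in Theorem~\ref{thm:completeness}.

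With these three modifications in place, the forward and backward directions of the equivalence are essentially identical to those in Theorem~\ref{thm:completeness}: a \threepartition solution yields a routing of the transversal paths satisfying the analogues of Requirements~\ref*{r:1}--\ref*{r:4}, and conversely any $2$-sided \fb drawing preserving $R$ forces each path through exactly three sparse cells whose column indices sum to $B$, yielding a valid partition. The main obstacle I anticipate is not the global structure but the local verification that the new barrier gadget and the doubled, internally-crossing cells behave correctly under the $2$-sided rules---specifically, confirming that the forced internal crossings genuinely exhaust the two-crossing budget of the vertical edges so that a second transversal path is still excluded from a shared cell, and that no unintended extra routing freedom is introduced by the second fan-bundle available to each edge. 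Because of the strong parallelism with the $1$-sided construction, I would present only these differences in detail and invoke Theorem~\ref{thm:completeness} for the remaining, unchanged parts of the argument.
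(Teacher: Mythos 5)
Your proposal matches the paper's own argument almost verbatim: the paper likewise obtains the result by modifying the reduction of Theorem~\ref{thm:completeness}, doubling the vertical edges in each cell to account for the two crossings per edge, forcing consecutive pairs of vertical edges in a cell to cross so that no two transversal paths share a cell, and substituting the barrier gadget of Fig.~\ref{fig:main_2sided-npcompleteness-gadget}. The approach and all three local modifications are identical, so the proposal is correct and follows the same route as the paper.
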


\section{Recognition and drawing algorithms}
\label{sec:recognition}

In this section, we present recognition and drawing algorithms for biconnected $1$-sided
$2$-layer \fb graphs, maximal $1$-sided $2$-layer \fb graphs, and triconnected
$1$-sided outer-\fb graphs. We also give a complete characterization of general $1$-sided $2$-layer
\fb graphs.

\subsection{$1$-sided $2$-layer \fblong graphs.}
\label{subsec:recog:2layer}

In this subsection, we present linear-time recognition and drawing algorithms for 
biconnected $1$-sided $2$-layer \fb graphs and maximal $1$-sided $2$-layer \fb graphs.
Since a $1$-sided $2$-layer \fb graph is by definition $2$-layer fan-planar, 
naturally our results build upon known results by 
Binucci et al.~\cite{DBLP:conf/gd/BinucciCDGKKMT15} for $2$-layer fan-planar graphs, 
who showed that a biconnected bipartite graph is maximal $2$-layer
fan-planar if and only if it is a \emph{snake}, i.e., a chain of
graphs~$G_1,\ldots,G_k$ such that each~$G_i$ is a complete bipartite
graph~$K_{2,h_i},h_i\ge 2$ that shares a pair of vertices, called
\emph{merged vertices}, with~$G_{i+1}$, and no vertex is shared by more
than two graphs; see Fig.~\ref{fig:2layer-snakes} for an illustration. Furthermore, they also showed that a biconnected bipartite graph
is $2$-layer fan-planar if and only if it is a spanning subgraph of a snake. 
Hence, every biconnected $2$-layer \fb graph
has to be a spanning subgraph of a snake. However, not every snake is $1$-sided
$2$-layer \fb, as we demonstrate in the following lemma.

\begin{figure}
  \centering
  \subfloat[\label{fig:2layer-snake}]{\includegraphics[page=1]{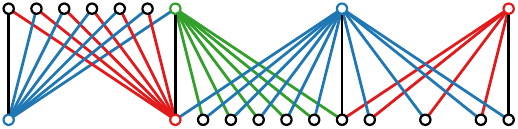}}
  \hfil
  \subfloat[\label{fig:2layer-babysnake}]{\includegraphics[page=2]{2layer-snakes}}
  \caption{Illustration of (a)~a snake and (b)~a baby snake.}
  \label{fig:2layer-snakes}
\end{figure}

\begin{lemma}\label{lem:$2$-layer-k24}
The complete bipartite graph $K_{2,3}$ is $1$-sided $2$-layer \fblong, 
while the complete bipartite graph $K_{2,4}$ is not $1$-sided $2$-layer \fblong.
\end{lemma}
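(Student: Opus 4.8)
The plan is to handle the two parts separately, the first by an explicit construction and the second by showing that the unavoidable crossings of $K_{2,4}$ force an impossible single bundle crossing.

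For the positive part, I would exhibit a drawing. Put $\{u,v\}$ on the top line with $u$ left of $v$, and $\{x_1,x_2,x_3\}$ on the bottom line in this left-to-right order. In any $2$-layer drawing the pairs $(u,x_i),(v,x_j)$ must cross exactly when their endpoints interleave, i.e.\ when $i>j$, so here the only forced crossings are $(u,x_2)\times(v,x_1)$, $(u,x_3)\times(v,x_1)$ and $(u,x_3)\times(v,x_2)$. I would then bundle $(u,x_2),(u,x_3)$ into a single fan-bundle $B_u$ anchored at $u$, and $(v,x_1),(v,x_2)$ into a single fan-bundle $B_v$ anchored at $v$, and route $B_u,B_v$ so that they cross exactly once, with the two edges toward the common tip $x_2$ leaving the crossing on opposite sides so that they reach $x_2$ without crossing each other. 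The remaining edges $(u,x_1)$ and $(v,x_3)$ are involved in no forced crossing and are drawn crossing-free. In this drawing every edge has a single fan-bundle and each of $B_u,B_v$ is crossed only once, so the drawing is $1$-sided $2$-layer \fb.

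For the negative part I would argue by contradiction, assuming a $1$-sided $2$-layer \fb drawing $\Gamma$ of $K_{2,4}$ with $\{u,v\}$ on the top line ($u$ left of $v$) and $x_1,\dots,x_4$ on the bottom line from left to right. As above, the forced crossings are precisely the $6$ pairs $(u,x_i)\times(v,x_j)$ with $i>j$, and since the unbundled parts are crossing-free each such crossing lies inside the fan-bundles of both participating edges. The key step is then to pin down the bundle structure. Edge $(u,x_4)$ participates in the three crossings with $(v,x_1),(v,x_2),(v,x_3)$; because its fan-bundle is crossed by at most one other fan-bundle, these three partner edges must all lie in one common bundle, which—being incident to the single common vertex $v$—is anchored at $v$ and contains $(v,x_1),(v,x_2),(v,x_3)$. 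Symmetrically, the three crossings of $(v,x_1)$ force the fan-bundle of $(u,x_4)$ to be anchored at $u$ and to contain $(u,x_2),(u,x_3),(u,x_4)$. Since each edge has exactly one fan-bundle, this identifies a single fan-bundle crossing between $B_u\supseteq\{(u,x_2),(u,x_3),(u,x_4)\}$ and $B_v\supseteq\{(v,x_1),(v,x_2),(v,x_3)\}$, whose tip-sets share the \emph{two} vertices $x_2$ and $x_3$.

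The final step, which I expect to be the main obstacle, is to show that a single fan-bundle crossing cannot realize this configuration. The crossing must produce $(u,x_2)\times(v,x_1)$ and $(u,x_4)\times(v,x_3)$, yet it must avoid crossings between the two adjacent pairs $(u,x_2)/(v,x_2)$ and $(u,x_3)/(v,x_3)$ (forbidden, since $1$-sided \fb drawings are fan-planar and hence adjacent edges do not cross) as well as between the non-interleaving pair $(u,x_2)/(v,x_3)$ (which cannot cross in a $2$-layer drawing). I would make this precise by tracking the cyclic order of the edges of $B_u$ and $B_v$ around the unique crossing point, using the $B_uB_v$-following curve from $u$ to $v$ together with the fact that the unbundled fan-out parts from the two terminals down to the bottom line are crossing-free. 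The point is that at most one common tip can be ``absorbed'' at a single crossing without either forcing an adjacent pair to cross or destroying one of the required crossings (this is exactly why the overlap-one situation of $K_{2,3}$ is realizable); two common tips, $x_2$ and $x_3$, therefore cannot both be accommodated, giving the desired contradiction and showing that $K_{2,4}$ is not $1$-sided $2$-layer \fb.
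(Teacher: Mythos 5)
Your strategy differs from the paper's on the second (and harder) half of the statement. The paper first proves that the $1$-sided $2$-layer \fb drawing of $K_{2,3}$ is topologically \emph{unique} (the two crossing fan-bundles must be anchored at the two top-layer vertices, since anchoring them at the two outer bottom-layer vertices would cut off the middle one), and then shows by a short placement case analysis that a fourth bottom vertex cannot be inserted into that unique drawing. You instead work directly on $K_{2,4}$: from the forced interleaving crossings you pin down the bundle structure, namely $B_u\supseteq\{(u,x_2),(u,x_3),(u,x_4)\}$ and $B_v\supseteq\{(v,x_1),(v,x_2),(v,x_3)\}$ with $B_u$ and $B_v$ crossing each other, and then argue that this single bundle crossing cannot realize the required crossing pattern. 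Your positive construction for $K_{2,3}$ coincides with the paper's drawing (existence is all that is needed here, so omitting uniqueness is fine), and your derivation of the forced bundle structure of $K_{2,4}$ is correct and is a legitimate alternative to the paper's uniqueness-plus-insertion argument.

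The gap is in the final step, which you yourself flag as the main obstacle and leave as a sketch. One of the three constraints you invoke --- that the adjacent pairs $(u,x_2)/(v,x_2)$ and $(u,x_3)/(v,x_3)$ may not cross because $1$-sided \fb drawings are fan-planar --- is not available in this model: in your own $K_{2,3}$ drawing the edges $(u,x_2)$ and $(v,x_2)$ both run through the single crossing point of $B_u$ and $B_v$, so the adjacent pair sharing the common tip does cross there. If adjacent crossings were genuinely forbidden, the same reasoning would rule out $K_{2,3}$ and contradict the first half of the lemma. Fortunately, the constraint you mention only in passing is the one that actually closes the proof: $(u,x_2)\in B_u$ and $(v,x_3)\in B_v$ share no endpoint, and since each of them consists of a segment of its bundle plus a crossing-free unbundled part, the two curves intersect \emph{exactly once}, at the crossing point of $B_u$ and $B_v$. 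But their endpoints do not interleave along the boundary of the strip ($u$ precedes $v$ on the top layer and $x_2$ precedes $x_3$ on the bottom layer), so any two such arcs drawn between the two layers must cross an even number of times --- a parity contradiction that follows immediately from your bundle-structure step. I would replace the sketched cyclic-order analysis and the vaguer ``at most one common tip can be absorbed'' claim by this single observation; with that substitution your route is complete and, arguably, cleaner than the paper's.
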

\begin{proof}
  Let $\{a_1,a_2\}$ and $\{b_1,b_2,b_3\}$ be the two partition sets of $K_{2,3}$.
  Topologically, there is exactly one 
  $1$-sided $2$-layer \fb drawing of $K_{2,3}$ such that $x(a_1)<x(a_2)$ and
  $x(b_1)<x(b_2)<x(b_3)$, which is illustrated in Fig.~\ref{fig:2layer-k23}.
  The reason is that any $2$-layer drawing of $K_{2,3}$ is not crossing-free, 
  which implies that $a_1$ and $a_2$ are the anchors of two fan-bundles
  $B_{a_1}$ and $B_{a_2}$ that cross. Note that the crossing can potentially be
  realized by two fan-bundles $B_{b_1}$ and $B_{b_3}$ anchored at $b_1$ and $b_3$, respectively. 
  However, in this case $B_{b_1}$ and $B_{b_3}$ would prevent any connection to $b_2$.
  
\begin{figure}[b]
  \centering
  \includegraphics[page=1]{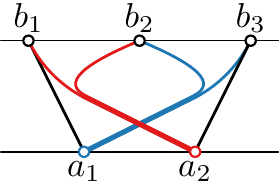}
  \caption{A $1$-sided $2$-layer \fb drawing of $K_{2,3}$.}
  \label{fig:2layer-k23}
\end{figure}
  
  We now prove that the complete bipartite graph $K_{2,4}$ is not $1$-sided $2$-layer \fb.
  Let $\{a_1,a_2\}$ and $\{b_1,b_2,b_3,b_4\}$ be the two partition sets of $K_{2,4}$.
  Since the complete bipartite graph $K_{2,3}$ induced by $a_1$, $a_2$, $b_1$, $b_2$, and $b_3$
  has a unique $1$-sided $2$-layer \fb drawing, it suffices to prove that it is not
  possible to add vertex~$b_4$ to this drawing and to connect it to both~$a_1$ and~$a_2$
  without violating its $1$-sided $2$-layer \fby; we assume as above that $x(a_1)<x(a_2)$ and
  $x(b_1)<x(b_2)<x(b_3)$. 
  By symmetry, we only have to consider two cases: $x(b_4)<x(b_1)$ and $x(b_1)<x(b_4)<x(b_2)$.
  In the first case, $b_4$ cannot be connected to~$a_2$, 
  as this connection would cross two fan-bundles incident to~$a_1$.
  In the second case, $b_4$ cannot be connected to~$a_1$,
  as this connection would cross an unbundled part of an edge incident to~$a_2$. 
\end{proof}

Since Binucci et al.~\cite{DBLP:conf/gd/BinucciCDGKKMT15} showed that a biconnected bipartite graph
is $2$-layer fan-planar if and only if it is a spanning subgraph of a snake,
Lemma~\ref{lem:$2$-layer-k24} immediately leads to a characterization of biconnected $2$-layer 
\fb graphs; see Lemma~\ref{lem:2layer-bicon-baby-snakes}. 
We say that a snake is a \emph{baby snake} if each graph in its chain is a~$K_{2,2}$ or a $K_{2,3}$; see Fig.~\ref{fig:2layer-babysnake} for an example. 

\begin{lemma}\label{lem:2layer-bicon-baby-snakes}
A biconnected bipartite graph is $2$-layer \fblong if and only if it is a spanning subgraph of a baby snake.
\end{lemma}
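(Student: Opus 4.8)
The plan is to prove both implications, building on the characterization of Binucci et al.\ and on Lemma~\ref{lem:$2$-layer-k24}. The ``if'' direction is the easier one. First I would observe that a spanning subgraph of a $2$-layer \fb graph is again $2$-layer \fb: one simply deletes the corresponding edges from a witnessing drawing, which can only remove crossings and bundles, never create them. Hence it suffices to show that every baby snake is itself $2$-layer \fb. Each block of a baby snake is a $K_{2,2}$ or a $K_{2,3}$, and both are $1$-sided $2$-layer \fb by Lemma~\ref{lem:$2$-layer-k24} (with $K_{2,2}$ being a subgraph of $K_{2,3}$). I would then glue the blocks: draw each block $G_i$ inside its own horizontal slab, placing its two high-degree top-layer vertices and its bottom-layer vertices in the order prescribed by Fig.~\ref{fig:2layer-k23}, and align consecutive slabs so that the merged top vertex and the merged bottom vertex shared by $G_i$ and $G_{i+1}$ coincide on the slab boundary.

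In the drawing of each block the only crossing is the one between the two fan-bundles anchored at its two top vertices, and these bundles remain inside the slab of their block. Since distinct blocks occupy disjoint slabs, no new crossing appears, and each edge is still bundled at only one of its endpoints; thus the glued drawing is $1$-sided $2$-layer \fb, as required.

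For the ``only if'' direction, let $G$ be a biconnected bipartite $2$-layer \fb graph. Since a $1$-sided $2$-layer \fb graph is $2$-layer fan-planar, the result of Binucci et al.\ makes $G$ a spanning subgraph of some snake $S=G_1,\dots,G_k$ with $G_i=K_{2,h_i}$. The key extra ingredient is that $G$ contains no $K_{2,4}$: otherwise, restricting a witnessing drawing of $G$ to such a subgraph would give a $1$-sided $2$-layer \fb drawing of $K_{2,4}$, contradicting Lemma~\ref{lem:$2$-layer-k24}. I would then combine this with biconnectivity. In a snake, every bottom-layer vertex of a block $G_i$ that is not a merged vertex has degree exactly two in $S$, with both edges incident to the two top vertices $a_i,a_{i+1}$ of $G_i$; as a biconnected graph on at least three vertices has minimum degree two, all these edges survive in $G$. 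Hence $a_i$ and $a_{i+1}$ share at least $h_i-2$ common neighbours in $G$, and the $K_{2,4}$-freeness forces $h_i-2\le 3$. A sharper count of the common neighbours of $a_i$ and $a_{i+1}$, now also taking into account the merged bottom vertices of $G_i$, shows that their edges to $a_i,a_{i+1}$ cannot all be present, which is what allows me to split every block of width $4$ or $5$ at its merged vertices into $K_{2,2}$ and $K_{2,3}$ pieces and thereby re-cover $G$ by a baby snake.

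The main obstacle is exactly this final re-covering step. The $K_{2,4}$-freeness and the degree argument immediately bound the \emph{interior} width of every block, but turning a biconnected, $K_{2,4}$-free spanning subgraph of a snake into a spanning subgraph of a genuine \emph{baby} snake requires careful bookkeeping of which boundary edges at the merged vertices actually survive in $G$, so that the re-decomposition into $K_{2,2}$ and $K_{2,3}$ blocks is consistent across the whole chain and no vertex is left shared by more than two blocks. By contrast, the two steps of the ``if'' direction (inheritance by spanning subgraphs and the slab-gluing) are routine, so the crux of the lemma is ensuring that the forward direction delivers a baby snake and not merely a $K_{2,4}$-free snake.
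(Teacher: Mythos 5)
Your overall route is the one the paper itself gestures at (it states the lemma as an immediate consequence of the Binucci et al.\ characterization together with Lemma~\ref{lem:$2$-layer-k24}), and your ``if'' direction is fine. The genuine gap is in the ``only if'' direction, precisely where you yourself locate ``the main obstacle'': the re-covering step is never carried out, and the mechanism you sketch for it does not work as stated. If you split a block $K_{2,h_i}$ with $h_i\in\{4,5\}$ ``at its merged vertices into $K_{2,2}$ and $K_{2,3}$ pieces'', every piece that still covers edges to both $a_i$ and $a_{i+1}$ must keep $\{a_i,a_{i+1}\}$ as its $2$-side (the interior bottom vertices are adjacent to both by your own degree argument). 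Then (i) the union of the pieces is again a $K_{2,h}$ with $h\ge 4$, so the resulting ``baby snake'' contains $K_{2,4}$ and is itself not $2$-layer \fb --- your ``if'' direction could not certify it; and (ii) $a_i$ would belong to both pieces and to the neighbouring block $G_{i-1}$, violating the requirement that no vertex of a snake lies in more than two blocks, while the merged pair of the two pieces would consist of two vertices of the same bipartition side, which is not a legal snake adjacency.

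The correct repair is of a different nature: the deficient merged bottom vertex $c$ of $G_i$ (the one that, by your counting, misses an edge to $a_i$ or to $a_{i+1}$ in $G$) must be evicted from $G_i$ entirely and re-attached to the adjacent block, with a freshly chosen merged pair that uses one of the interior bottom vertices of $G_i$ as the shared $h$-side vertex; one must then check that this can be done consistently along the whole chain, at both ends of a width-$5$ block simultaneously, and without pushing the adjacent block above width $3$. That bookkeeping is the actual mathematical content of the forward direction. As written, your argument establishes only that $G$ is a biconnected, $K_{2,4}$-free spanning subgraph of a snake, which is strictly weaker than the statement being proved.
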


A direct consequence of the aforementioned characterization is that we can recognize 
(and in the case of an affirmative answer also draw) these graphs, 
by employing the corresponding recognition (and drawing, respectively) algorithm by Binucci et al.~\cite{DBLP:conf/gd/BinucciCDGKKMT15}.
We summarize this observation in the following theorem.

\begin{theorem}\label{thm:2layer-bicon-recognition}
  Biconnected $1$-sided $2$-layer \fblong graphs can be recognized and drawn in linear time.
\end{theorem}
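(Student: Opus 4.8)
The plan is to reduce recognition to the characterization of Lemma~\ref{lem:2layer-bicon-baby-snakes} and to reuse the linear-time machinery of Binucci et al.~\cite{DBLP:conf/gd/BinucciCDGKKMT15}. First I would check in linear time that the input graph~$G$ is bipartite and biconnected; if either test fails, $G$ is not a biconnected $2$-layer \fb graph and we reject. I would then run the algorithm of Binucci et al.\ to decide whether $G$ is $2$-layer fan-planar, that is, a spanning subgraph of a snake. Since every baby snake is a snake, a negative answer lets us reject immediately; a positive answer returns, in linear time, a snake $S$ with $G \subseteq S$ together with its chain of blocks $G_1,\dots,G_k$ and its merged-vertex pairs, which I will use only as a scaffold for the subsequent local check.

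By Lemma~\ref{lem:2layer-bicon-baby-snakes}, what remains is to decide whether $G$ is a spanning subgraph of a \emph{baby} snake, i.e.\ one all of whose blocks are $K_{2,2}$ or $K_{2,3}$. The natural local certificate is the absence of a $K_{2,4}$ subgraph: any $1$-sided $2$-layer \fb drawing of $G$ restricts to a $1$-sided $2$-layer \fb drawing of every $K_{2,4}$ subgraph (its two high-degree vertices lie on one layer and the four others on the opposite layer, and deleting edges only shrinks bundles and removes crossings), which is impossible by Lemma~\ref{lem:$2$-layer-k24}. Hence $K_{2,4}$-freeness, equivalently the nonexistence of two same-layer vertices with four common neighbors, is necessary. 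This can be tested in linear time using $S$ as a guide: any $K_{2,4}$ in $G$ appears with its two degree-four vertices as the poles of a block of~$S$, so I would scan each block $G_i$ and count, in $G$, the common neighbors of its two poles, rejecting as soon as some pole pair reaches four.

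For the drawing I would process the chain $G_1,\dots,G_k$ from left to right. Each $K_{2,2}$ block is realized as the planar $2$-layer quadrilateral and each $K_{2,3}$ block as the unique $1$-sided $2$-layer \fb drawing of Lemma~\ref{lem:$2$-layer-k24} (Fig.~\ref{fig:2layer-k23}). Consecutive blocks are concatenated by identifying their merged vertices, which occupy the rightmost position of one block and the leftmost position of the next on their respective layers; edges of $S$ absent from $G$ are simply not drawn, and omitting edges never introduces a crossing. Each block is handled in constant time, so the whole drawing is produced in linear time.

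The main obstacle is the \emph{sufficiency} of the $K_{2,4}$-free test, because $G$ is only a spanning subgraph of a snake and such a completion need not be unique: a sparse $G$ may lie inside a snake with an oversized block and yet also inside a genuine baby snake, so one cannot merely compare the block sizes of the snake returned by Binucci et al.\ against~$3$. To close this gap I would exploit biconnectivity: every many-side vertex interior to a block must be adjacent to both poles of that block (otherwise it has degree~$1$), so a block with at least four interior vertices already contains $K_{2,4}$; the remaining boundary vertices of an oversized block are adjacent to a single pole inside the block and can therefore be reabsorbed into the neighbouring blocks, yielding a baby-snake completion of any $K_{2,4}$-free graph. Making this reabsorption precise, and handling the small degenerate chains separately, is the step that requires the most care.
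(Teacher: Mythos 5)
Your high-level plan---run the snake machinery of Binucci et al.\ and then decide whether $G$ also fits into a \emph{baby} snake---is the same route the paper takes via Lemma~\ref{lem:2layer-bicon-baby-snakes}, and you correctly isolate the delicate point (a snake completion of $G$ need not be a baby snake even when some baby-snake completion might exist). However, the test you propose to close that gap, $K_{2,4}$-subgraph-freeness, is necessary but \emph{not} sufficient, so your algorithm accepts graphs that are not $1$-sided $2$-layer \fblong. Concretely, let $S$ be the snake with blocks $B_1=\{p_1,p_2\}\times\{q_1,q_2,q_3\}$, $B_2=\{p_2,p_3\}\times\{q_3,\dots,q_7\}$, $B_3=\{p_3,p_4\}\times\{q_7,q_8,q_9\}$, glued along the edges $(p_2,q_3)$ and $(p_3,q_7)$, and set $G=S\setminus\{(p_2,q_3),(p_3,q_7)\}$. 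This $G$ is a spanning subgraph of $S$, bipartite, and biconnected (suppressing the degree-$2$ vertices $q_i$ gives a multigraph whose underlying simple graph is $K_4$ minus $p_1p_4$). It is $K_{2,4}$-free: every $q_i$ has degree $2$, and no two of $p_1,\dots,p_4$ have more than three common neighbours---in particular the poles $p_2,p_3$ of the oversized block share only $q_4,q_5,q_6$. So your algorithm accepts. Yet $G$ cannot be a spanning subgraph of any baby snake: a vertex of a baby snake lies in at most two blocks, each contributing at most three neighbours of which the shared merged partner is counted twice, so baby snakes have maximum degree $5$, whereas $\deg_G(p_2)=6$. By Lemma~\ref{lem:2layer-bicon-baby-snakes}, $G$ is not $1$-sided $2$-layer \fblong.

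The failure sits exactly in the ``reabsorption'' step you flag as requiring care. The merged non-poles $q_3$ and $q_7$ of $B_2$ are each adjacent to only one pole of $B_2$ in $G$, but biconnectivity forces precisely the edges $(q_3,p_3)$ and $(q_7,p_2)$ (removing $p_2$, respectively $q_7$'s other pole, must not disconnect the chain), and these edges cannot be covered by the neighbouring blocks, which do not contain $p_3$ (respectively $p_2$). Hence the middle block genuinely needs five non-poles even though its poles have only three common neighbours in $G$: the obstruction to baby-snake-ness is the block size forced by the structure of $G$, not the presence of a $K_{2,4}$ subgraph. To repair the argument you must test the forced block sizes of the snake decomposition itself (for biconnected inputs this decomposition is pinned down by the algorithm of Binucci et al., which is what the paper leans on), or at least add further necessary conditions such as the maximum-degree bound above; the $K_{2,4}$ test alone does not suffice.
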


In the remainder of this subsection, we relax biconnectivity and require maximality. Binucci et
al.~\cite{DBLP:conf/gd/BinucciCDGKKMT15} showed that a bipartite graph is maximal
$2$-layer fan-planar if and only if it is a \emph{stegosaurus}, that is, a chain
of snakes that are connected at \emph{common cutvertices}, where each common
cutvertex is incident to exactly two snakes, plus a set of
degree-1 vertices, called \emph{legs}, each of which is attached to a common cutvertex; see Fig.~\ref{fig:2layer-stegosaurus} for an illustration. 
The following lemma has been proven by Binucci
et al.~\cite{DBLP:conf/gd/BinucciCDGKKMT15}, but the proof also works without
modification for our model.

\begin{figure}
  \centering
  \subfloat[\label{fig:2layer-stegosaurus}]{\includegraphics[page=1]{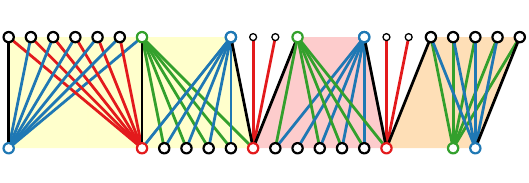}}
  \hfil
  \subfloat[\label{fig:2layer-babystegosaurus}]{\includegraphics[page=2]{2layer-stegosaurus}}
  \caption{Illustration of (a)~a stegosaurus and (b)~a baby stegosaurus.}
  \label{fig:2layer-stegosauri}
\end{figure}

\begin{lemma}[Binucci et al.~\cite{DBLP:conf/gd/BinucciCDGKKMT15}]\label{lem:2layer-independent}
In any $1$-sided $2$-layer \fblong drawing, no biconnected component of a graph can be crossed by an independent edge.
\end{lemma}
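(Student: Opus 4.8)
The plan is to argue by contradiction, exploiting that in a $1$-sided \fb drawing every crossing is a crossing of two fan-bundles and every fan-bundle is crossed at most once. So, suppose some independent edge $e=(p,q)$ crosses the biconnected component~$B$; since the drawing is $2$-layer and bipartite, $p$ lies on one layer and $q$ on the other. As the drawing is $1$-sided, $e$ has a single fan-bundle $B_e$, which is crossed by at most one other fan-bundle. Hence all edges of $B$ that are crossed by $e$ lie in one and the same fan-bundle $B^{*}$; since these edges belong to $B$, the bundle $B^{*}=B_w$ is anchored at a vertex $w\in V(B)$, and every edge of $B$ crossed by $e$ is incident to~$w$.

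The next step is a topological separation argument. I would view the drawing of $e$ as a simple arc joining the two layer-lines; such an arc splits the strip between the layers into a left region $\mathcal{L}$ and a right region $\mathcal{R}$ (this needs no monotonicity of $e$). I claim that no edge of $B$ avoiding $w$ can cross $e$: if such an edge crossed $e$, then its bundle would be the unique partner of $B_e$, namely $B^{*}=B_w$, forcing the edge to be incident to $w$---a contradiction. Consequently $B-w$, which is connected because $B$ is biconnected, lies entirely in $\mathcal{L}$ or entirely in $\mathcal{R}$. Since $e$ does cross $B$, vertex $w$ has at least one neighbor on the opposite side, and that neighbor belongs to $B-w$; therefore $B-w$ sits on the side opposite to $w$, the vertex $w$ is the only vertex of $B$ on its side, and \emph{every} edge of $B$ incident to $w$ crosses $e$ and hence lies in $B_w=B^{*}$. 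In particular $w$ is an extreme (leftmost or rightmost) vertex of $B$ and $B_w$ is its complete fan.

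It remains to show that $B_w$ is already crossed \emph{inside} $B$, which yields the desired contradiction: $B_w$ would then be crossed both by $B_e$ and by this internal partner. Here I would invoke Lemma~\ref{lem:2layer-bicon-baby-snakes}: the component $B$ is a spanning subgraph of a baby snake, so the extreme vertex $w$ belongs to an end block that is a $K_{2,2}$ or a $K_{2,3}$. Each such block is not $2$-layer planar and therefore contributes a bundle crossing in $\Gamma$; a short inspection of the admissible $1$-sided bundlings of these two blocks shows that this crossing must involve $B_w$. Indeed, for $K_{2,3}$ the only feasible $1$-sided bundling is the one anchoring the two fans at the degree-$3$ (spine) vertices, since the alternative would force one fan-bundle to cross two others; and for $K_{2,2}$ either admissible bundling saturates the bundle carrying the edges of $w$. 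Thus $B_w$ crosses another fan-bundle of $B$, and adding the crossing with $B_e$ violates the at-most-one-crossing rule---completing the contradiction.

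The main obstacle is precisely this last saturation step: a priori $e$ might cross a fan-bundle of $B$ that is itself crossing-free within $B$, in which case no contradiction arises. Ruling this out is where biconnectivity, the $1$-sided restriction, and the restriction of the blocks to $K_{2,2}$ and $K_{2,3}$ all enter simultaneously; the key sub-claim is the infeasibility of the ``$h$-side'' bundling of $K_{2,3}$. Minor points to dispatch along the way are that $w$ may be a spine or a degree-$2$ vertex (both lead to a saturated $B_w$), that the two cases $w\in\mathcal{L}$ and $w\in\mathcal{R}$ are symmetric, and that any edges leaving $B$ through $w$ as a cut vertex do not interfere with the separation argument, as they are not edges of $B$.
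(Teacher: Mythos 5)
The paper does not actually prove this lemma: it imports it verbatim from Binucci et al., the justification being that every $1$-sided \fb drawing is in particular a fan-planar drawing, so their argument for $2$-layer fan-planar drawings applies without modification. Your proof is therefore necessarily a different, self-contained route. Its first two steps are correct and are essentially the fan-planarity argument made explicit: since $e$ has a single fan-bundle which may be crossed by at most one other fan-bundle, all edges of $B$ crossed by $e$ lie in one bundle $B_w$ anchored at some $w\in V(B)$; and since $B-w$ is connected and none of its edges may cross $e$, the arc of $e$ separates $w$ from all of $B-w$, whence every edge of $B$ incident to $w$ crosses $e$ and belongs to $B_w$, and $w$ is extremal among the vertices of $B$ on its layer.

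The gap is in your third step. You invoke Lemma~\ref{lem:2layer-bicon-baby-snakes} to place $w$ in an ``end block'' isomorphic to $K_{2,2}$ or $K_{2,3}$, but this is not justified: $B$ is only a \emph{spanning subgraph} of a baby snake, so the edges of the putative block need not all be present in $B$ (e.g.\ $C_6$ is a $2$-connected spanning subgraph of two glued copies of $K_{2,2}$ yet contains no $K_{2,2}$), and nothing you have established ties the combinatorial end block of the abstract supergraph to the vertex that happens to be extremal in the particular drawing $\Gamma$; making that correspondence precise would essentially require re-proving the rigidity of snake drawings. Fortunately the detour is unnecessary, because the saturation of $B_w$ follows from biconnectivity alone. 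Assume w.l.o.g.\ that $w$ lies to the left of the point where $e$ meets its layer, so every other vertex of $B$ on that layer lies to the right of that point. Since $B$ is $2$-connected, $w$ has two neighbours $x_1,x_2$ in $B$ on the opposite layer, say with $x(x_1)<x(x_2)$, and $x_1$ has a neighbour $y\neq w$ in $B$. Then $y$ lies on $w$'s layer to the right of $w$, so the arc of the edge $(w,x_2)$ separates $y$ from $x_1$ and the edge $(x_1,y)$ must cross $(w,x_2)$. Hence $(w,x_2)$ is crossed both by $e$ and by $(x_1,y)$, two edges sharing no endpoint; but in the $1$-sided model every edge crossing $(w,x_2)$ must belong to the unique fan-bundle that crosses $B_w$ and therefore all such edges must share that bundle's anchor vertex. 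This contradiction completes the proof without any appeal to the snake decomposition.
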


By Lemma~\ref{lem:2layer-independent}, it follows that the biconnected components
in a $2$-layer \fb drawing are placed next to each other without crossings.
In the following lemma, we describe the structure of maximal $1$-sided $2$-layer \fb graphs without legs.
To this end, we need the following definition. 
We call a stegosaurus a \emph{baby stegosaurus} if its snakes are baby snakes and
if it contains no legs; see Fig.~\ref{fig:2layer-babystegosaurus} for an example. Note that a baby stegosaurus can be drawn $1$-sided $2$-layer
\fb by just drawing its snakes independently, then connecting them
via their common cutvertices.

\begin{lemma}\label{lem:2layer-baby-stegosaurus}
If we remove the legs of a maximal $1$-sided $2$-layer \fblong graph, then we obtain a baby stegosaurus.
\end{lemma}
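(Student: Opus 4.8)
The plan is to analyze $G$ block by block and then reassemble the global picture, using throughout that a $1$-sided $2$-layer \fb graph is in particular $2$-layer fan-planar. First I would fix a $1$-sided $2$-layer \fb drawing of $G$ and recall the skeleton provided by Lemma~\ref{lem:2layer-independent}: the biconnected components are laid out side by side and are never crossed by an independent edge. This forces the block-cut structure of $G$ to be a caterpillar — the nontrivial blocks form a chain glued at common cutvertices, and a third block sharing a cutvertex $c$ would have to be nested against the two blocks already flanking $c$, producing a crossing by an independent edge; hence each common cutvertex lies in exactly two blocks. The only trivial blocks are then the pendant (bridge) edges, i.e.\ the legs, so after deleting the legs the resulting graph $G'$ is precisely the chain of nontrivial blocks glued at common cutvertices, which is exactly the stegosaurus skeleton of Binucci et al.~\cite{DBLP:conf/gd/BinucciCDGKKMT15}. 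What remains is to identify each block.

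The core step is to show that every nontrivial block $B$ of $G'$ is a full baby snake. Being a biconnected $1$-sided $2$-layer \fb graph, $B$ is by Lemma~\ref{lem:2layer-bicon-baby-snakes} a spanning subgraph of some baby snake $\hat B$ on the same vertex set, so each of its pieces is a $K_{2,2}$ or a $K_{2,3}$ and cannot be ``upgraded'' to contain a $K_{2,4}$ by Lemma~\ref{lem:$2$-layer-k24}. Suppose, for contradiction, that $B \neq \hat B$, and let $e \in \hat B \setminus B$. Since $\hat B$ is a baby snake it admits a $1$-sided $2$-layer \fb drawing, and hence so does $B \cup \{e\} \subseteq \hat B$. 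Here I would invoke Lemma~\ref{lem:2layer-independent} once more: because the components of $G$ occupy independent slots, I can replace the drawing in the slot of $B$ by a drawing of $B \cup \{e\}$ while leaving the legs and the neighboring blocks (attached at the cutvertices of $B$) untouched. This yields a $1$-sided $2$-layer \fb drawing of $G \cup \{e\}$, contradicting the maximality of $G$. Therefore $B = \hat B$ is a baby snake, and since edges partition into blocks and distinct blocks meet only at cutvertices, $G'$ is a chain of baby snakes connected at common cutvertices with no legs, i.e.\ a baby stegosaurus.

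The hard part will be the maximality argument of the second paragraph: I must be certain that completing a single block to its baby snake is a genuinely \emph{local} operation, which is exactly what Lemma~\ref{lem:2layer-independent} guarantees, and that the inserted edge $e$ neither duplicates an existing edge nor spoils the bipartite two-layer structure — both are immediate since $e$ already occurs in the baby snake $\hat B$. A minor point to check along the way is that deleting the legs cannot create new degree-$1$ vertices inside a block: every vertex of a $K_{2,2}$ or $K_{2,3}$ has degree at least $2$, so the degree-$1$ vertices of $G$ are precisely the legs, and their removal leaves the internal structure of each baby snake intact.
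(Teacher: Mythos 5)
Your block-by-block plan is reasonable, and your second paragraph (locally completing each biconnected component to a full baby snake via Lemma~\ref{lem:2layer-bicon-baby-snakes} and the independence of the blocks' slots from Lemma~\ref{lem:2layer-independent}) actually spells out a step that the paper only asserts. However, there is a genuine gap at the sentence ``The only trivial blocks are then the pendant (bridge) edges, i.e.\ the legs.'' Nothing you have established rules out a bridge that is \emph{not} a pendant edge: a trivial block $(u,v)$ joining two nontrivial components, or a pendant path of length at least two. Such a configuration is perfectly consistent with your caterpillar picture --- $u$ and $v$ each lie in exactly two blocks --- yet after deleting the degree-one vertices it would leave a graph that is not a chain of baby snakes glued at \emph{common cutvertices}, hence not a baby stegosaurus. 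Your ``each common cutvertex lies in exactly two blocks'' argument simply does not touch this case, and your closing ``minor point'' about degrees inside $K_{2,2}$ and $K_{2,3}$ components only applies once you already know every trivial block is attached to a nontrivial one.

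Ruling out such bridges is in fact the entire content of the paper's proof, and it is a maximality argument you are missing: by Lemma~\ref{lem:2layer-independent} the two components separated by a bridge are drawn without crossing each other; if the bridge is planar, one can add a new edge between the two components that crosses the bridge, and if the bridge is crossed by a fan-bundle, one can connect the origin of that fan-bundle to the other component by crossing the bridge. Either way an edge can be added, contradicting the maximality of $G$. You need to supply this (or an equivalent) argument before you may assume that every trivial block is a leg; once that is in place, the rest of your proof goes through.
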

\begin{proof}
Since every biconnected component of a maximal $1$-sided $2$-layer \fb graph is a baby snake, the lemma holds as long as 
there exist no bridges. If this is not the case, by Lemma~\ref{lem:2layer-independent} any two
components separated by a bridge are drawn without crossing each other. If
the bridge is planar, then we can connect the two components by another edge
that crosses the bridge. On the other hand, if the bridge is crossed by a fan-bundle, then we can
connect the origin of this fan-bundle to the other component by crossing the bridge.
In both cases, we obtain a contradiction to the graph's maximality.
\end{proof}

Note that in a maximal $2$-layer fan-planar graph, there exist no legs. In fact, Binucci et al.~\cite{DBLP:conf/gd/BinucciCDGKKMT15} showed that a leg contained in a $2$-layer fan-planar graph is incident to a $K_{2,h}$, which in turn can be augmented to a $K_{2,h+1}$ by adding an additional edge without affecting fan-planarity. In our case, however, a~$K_{2,3}$ cannot be augmented to a~$K_{2,4}$ in the presence of a leg (due to Lemma~\ref{lem:$2$-layer-k24}), and therefore Lemma~\ref{lem:2layer-baby-stegosaurus} does not immediately yield a characterization of maximal $1$-sided $2$-layer \fb graphs. So, in the following, we investigate to which vertices the legs of a maximal $1$-sided $2$-layer \fb graph can be attached. 

To this end, let $G$ be a maximal $1$-sided $2$-layer \fb graph and let $\Gamma$ be a $2$-layer \fb drawing of $G$. 
Since by Lemma~\ref{lem:2layer-baby-stegosaurus}, graph $G$ is a baby stegosaurus containing legs. We refer to the $K_{2,2}$ and $K_{2,3}$ subgraphs composing the baby snakes of $G$ as \emph{components} of $G$.    

\begin{figure}[t]
    \centering
    \includegraphics[page=2]{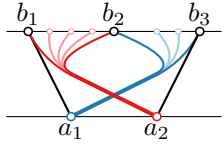}
    \caption{Legs inside a $K_{2,3}$.}
    \label{fig:2layer-k23-legs}
\end{figure}

First, consider a~$K_{2,3}$ component of $G$ with partitions $\{a_1,a_2\}$ and $\{b_1,b_2,b_3\}$ and assume w.l.o.g.~that $x(a_1)<x(a_2)$ and $x(b_1)<x(b_2)<x(b_3)$ in drawing $\Gamma$; see Fig.~\ref{fig:2layer-k23}. There are no legs attached to $b_1$, $b_2$
and~$b_3$ that lie between~$a_1$ and~$a_2$ in $\Gamma$, because the interval between $a_1$ and~$a_2$ is ``blocked'' by the fan-bundles anchored at~$a_1$ and~$a_2$. However, $G$ may have any number of legs attached to~$a_1$ and~$a_2$ that lie between $b_2$ and $b_3$, and between $b_1$ and $b_2$, respectively; see Fig.~\ref{fig:2layer-k23-legs}. 

In the following lemma, we focus on legs attached to vertices of a~$K_{2,2}$ component of $G$.

\begin{lemma}\label{lem:2layer-k22-noleg}
There exist no leg in $G$ that is attached to a vertex that belongs to a $K_{2,2}$ component of $G$.
\end{lemma}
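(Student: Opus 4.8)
The plan is to argue by contradiction using the maximality of $G$, adapting the leg-augmentation argument of Binucci et al.~\cite{DBLP:conf/gd/BinucciCDGKKMT15}. In the fan-planar setting, a leg attached to any $K_{2,h}$ can be absorbed by adding a single edge to form a $K_{2,h+1}$, which is why maximal $2$-layer fan-planar graphs have no legs at all. In our model this augmentation is blocked \emph{precisely} at the step $K_{2,3}\to K_{2,4}$, since $K_{2,4}$ is not $1$-sided $2$-layer \fb (Lemma~\ref{lem:$2$-layer-k24}); this is exactly what allows legs to survive on the partition set of size two of a $K_{2,3}$ component. The content of the present lemma is that the obstruction does not appear one step earlier: the augmentation $K_{2,2}\to K_{2,3}$ is always possible because $K_{2,3}$ \emph{is} $1$-sided $2$-layer \fb (again Lemma~\ref{lem:$2$-layer-k24}). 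So I assume for contradiction that some leg $\ell$ is attached to a vertex of a $K_{2,2}$ component $Q$ of $G$, and I exhibit an edge whose addition keeps $G$ $1$-sided $2$-layer \fb.

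First I would fix a $1$-sided $2$-layer \fb drawing $\Gamma$ of $G$ and normalise the local picture using the symmetries of $K_{2,2}$: reflecting $\Gamma$ across the horizontal mid-line exchanges the two layers, and a left-to-right reflection exchanges the two spine vertices, so without loss of generality $\ell$ is attached to a top vertex $a_1$, where $Q$ has top vertices $a_1,a_2$ with $x(a_1)<x(a_2)$, bottom vertices $b_1,b_2$ with $x(b_1)<x(b_2)$, and its unique crossing is realised by a pair of crossing fan-bundles carrying $(a_1,b_2)$ and $(a_2,b_1)$. By Lemma~\ref{lem:2layer-independent} no independent edge can cross the biconnected component $Q$, so the whole analysis is local around $Q$ and its leg, with no interference from the rest of the baby stegosaurus. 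I would then determine the admissible positions of $\ell$ on the bottom line by a short case analysis over the gaps defined by $b_1,b_2$ (left of $b_1$, between $b_1$ and $b_2$, right of $b_2$), further split according to whether the crossing bundles of $Q$ are anchored at $\{a_1,a_2\}$ or at $\{b_1,b_2\}$, discarding every case in which $(a_1,\ell)$ would be forced to cross a fan-bundle of $Q$ or an unbundled part of an edge incident to $a_2$. This mirrors the blocking analysis already carried out for legs of a $K_{2,3}$ component (cf.\ Fig.~\ref{fig:2layer-k23-legs}).

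For each surviving position I would add the edge $(\ell,a_2)$, turning $Q\cup\{\ell\}$ into a $K_{2,3}$ with partition sets $\{a_1,a_2\}$ and $\{b_1,b_2,\ell\}$. Adding this single edge neither removes any vertex nor alters which vertices $Q$ shares with its neighbours in the baby snake, so the modified block is a $K_{2,3}$ occupying the slot previously held by $Q$ and the chain remains a baby snake; since $\ell$ has degree one I may, if necessary, slide it into the gap prescribed by the unique drawing of $K_{2,3}$ (Lemma~\ref{lem:$2$-layer-k24}, Fig.~\ref{fig:2layer-k23}) without disturbing the other edges or legs at $a_1$. Invoking Lemma~\ref{lem:2layer-independent} once more to ensure this purely local replacement creates no crossings with the remaining components, I conclude that $G+(\ell,a_2)$ is $1$-sided $2$-layer \fb, contradicting the maximality of $G$; hence no such leg exists. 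I expect the main obstacle to lie in this last verification: one must check, in every admissible configuration of $\ell$ and of the crossing bundles, that $(\ell,a_2)$ can actually be routed (added to the fan-bundle anchored at $a_2$ without giving that bundle a second crossing, and without crossing an unbundled part incident to a merged vertex), and that sliding $\ell$ into the canonical $K_{2,3}$ gap stays compatible with any legs that the earlier analysis already placed at $a_1$ or $a_2$. Reducing this to the finitely many local configurations and ruling each one out is where the real work lies.
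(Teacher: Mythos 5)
There is a genuine gap. Your case split on the anchors of the crossing pair of $Q$ ($\{a_1,a_2\}$ versus $\{b_1,b_2\}$, which are mirror images of each other) misses the mixed configuration in which the two crossing fan-bundles are anchored at $a_1$ and $b_1$ (one vertex per layer); the paper shows this is one of only two essentially different possibilities, and it is exactly the hard one. In that configuration a leg $\ell$ of $a_1$ lying between $b_1$ and $b_2$ cannot be joined to $a_2$ at all: the edge $(\ell,a_2)$ would have to cross $(a_1,b_2)$, whose unique fan-bundle (the one at $a_1$, since each edge has only one fan-bundle in the $1$-sided model) already carries its single allowed crossing with the bundle at $b_1$. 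So the augmentation $K_{2,2}\to K_{2,3}$ that drives your contradiction is simply not available there, and the ``finitely many local configurations'' you defer to cannot all be ruled out by adding $(\ell,a_2)$.

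The second problem is your locality claim. Lemma~\ref{lem:2layer-independent} prevents independent edges from crossing $Q$, but $a_1$ and $b_1$ may be merged vertices shared with adjacent $K_{2,2}$ or $K_{2,3}$ components (or cutvertices), and their fan-bundles and legs interact with those neighbours. This is why the paper, in the mixed-anchor case, does not augment at all: it argues that the configuration cannot occur, by relocating legs into neighbouring components and, when $a_1,b_1$ belong to another $K_{2,2}$ with the same structure, iterating along a chain of such components until it terminates at a cutvertex, a $K_{2,3}$, or the leftmost/rightmost vertex of a layer, only then deriving a contradiction. Your proposal works for the $\{a_1,a_2\}$-anchored case (where the paper likewise adds an edge from the leftmost leg to $a_2$), but without the mixed case and the accompanying non-local chain argument the proof is incomplete.
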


\begin{proof}
Consider a~$K_{2,2}$ component of $G$ with partition sets $\{a_1,a_2\}$ and $\{b_1,b_2\}$, such that~$x(a_1)<x(a_2)$
and $x(b_1)<x(b_2)$ holds in $\Gamma$; see Figs.~\ref{fig:2layer-k22-legs-1} and~\ref{fig:2layer-k22-legs-2}.
Note that any $2$-layer drawing of $K_{2,2}$ is not crossing-free. 
Hence, it must contain two fan-bundles that cross. 
Assume w.l.o.g.~that one fan-bundle is anchored at~$a_1$. 
Then, the other fan-bundle is either anchored at~$a_2$ or at~$b_1$, since anchoring it at~$b_2$ is not possible. 

\begin{figure}[b]
	\centering
	\subfloat[\label{fig:2layer-k22-legs-1}]{\includegraphics[page=1]{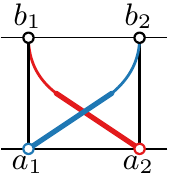}}
	\hfil
	\subfloat[\label{fig:2layer-k22-legs-2}]{\includegraphics[page=2]{2layer-k22-legs}}
	\hfil
	\subfloat[\label{fig:2layer-k22-legs-3}]{\includegraphics[page=3]{2layer-k22-legs}}
	\hfil
	\subfloat[\label{fig:2layer-k22-legs-4}]{\includegraphics[page=4]{2layer-k22-legs}}
	\caption{(a)-(b) The two ways to attach the fan-bundles inside a $K_{2,2}$,
    (c) all legs are attached to $a_1$ and $a_2$, and
    (d) all legs are attached to $a_1$ and $b_1$.}
\end{figure}

We will first prove that there exist no leg in $G$ that lies between $a_1$ and $a_2$, or between $b_1$ and $b_2$ in $\Gamma$. 
For a proof by contradiction, assume that there exists such a leg. 
We first consider the case in which the two fan-bundles are anchored at $a_1$ and $a_2$; see Fig.~\ref{fig:2layer-k22-legs-3}. 
In this case, there exists no leg incident to $b_1$ and~$b_2$ that lies between $a_1$ and $a_2$.
Hence, there exists at least one leg attached to either $a_1$ or $a_2$, say to the former, that lies between $b_1$ and~$b_2$ in~$\Gamma$.
Then, the maximality of $G$ is contradicted, as it is possible to add an edge between the leftmost such leg and $a_2$ without violating the \fby of~$\Gamma$.

We now consider the case in which the two fan-bundles are anchored at $a_1$ and $b_1$; see Fig.~\ref{fig:2layer-k22-legs-4}.
Observe that there exists no leg in $G$ attached to $a_2$ or $b_2$ that lies between $b_1$ and $b_2$, and between $a_1$ and $a_2$, respectively.
We can further assume w.l.o.g.~that each of $a_1$ and $b_1$ has at least one leg that lies between $b_1$ and $b_2$, and between $a_1$ and $a_2$, respectively. In fact, if one of these two vertices, say $b_1$, does not have any such leg, then we may assume that the second fan-bundle is not anchored at $b_1$ but at $a_2$, which is a case that has already been considered.     

First, assume $b_1$ and $b_2$ are the only neighbors of $a_1$ that belong to some component of $G$. 
Then, either $b_1$ is a cutvertex or $a_1$ is the leftmost vertex on its layer that is not a leg of $b_1$. 
In both cases, we move the legs of $b_1$ to the left of $a_1$; see Fig.~\ref{fig:2layer-a1-noleg-1}.
Then, $b_1$ has no leg that lies between $a_1$ and $a_2$, which contradicts our assumption. 

\begin{figure}[t]
	  \subfloat[\label{fig:2layer-a1-noleg-1}]{\includegraphics[scale=0.9,page=1]{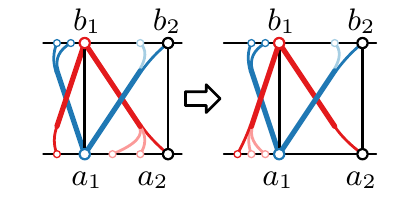}}
  	  \hfil
	  \subfloat[\label{fig:2layer-a1-noleg-2}]{\includegraphics[scale=0.9,page=2]{2layer-a1-noleg}}
    \hfil
	  \subfloat[\label{fig:2layer-a1-noleg-3}]{\includegraphics[scale=0.9,page=3]{2layer-a1-noleg}}
	  \caption{Proof that there is no leg attached to $a_1$ that lies between $b_1$ and $b_2$:
	  (a)~$b_1$ and $b_2$ are the only neighbors of $a_1$ that belong to some component of $G$,
	  (b)~$a_1$ and $b_1$ belong to a $K_{2,3}$ component, and
	  (c)~$a_1$ and $b_1$ belong to a second $K_{2,2}$ component.}
	  \label{fig:2layer-a1-noleg}
\end{figure}

Now, consider the case that $b_1$ and $b_2$ are not the only neighbors of $a_1$ that belong to some component of $G$.
Assume first that $a_1$ and $b_1$ also belong to a~$K_{2,3}$ component; see Fig.~\ref{fig:2layer-a1-noleg-2}.
Assume w.l.o.g.~that $a_1$ belongs to the partition set of the $K_{2,3}$ components containing the two vertices.
We move the legs of~$a_1$ that lie between $b_1$ and $b_2$ inside the $K_{2,3}$ component; see Fig.~\ref{fig:2layer-a1-noleg-2}. 
Then, as before, $a_1$ has no leg that lies between $b_1$ and $b_2$, which contradicts our assumption.
It remains to consider the case in which $a_1$ and $b_1$ belong to a second $K_{2,2}$ component; see Fig.~\ref{fig:2layer-a1-noleg-3}.
Let $a_2'$ and $b_2'$ be the additional vertices of this $K_{2,2}$ component.
If the only legs inside the second $K_{2,2}$ component, if any, are also attached only to $a_1$ and $b_1$, 
then we can move all the legs attached to $a_1$ inside the first $K_{2,2}$ component and all the legs attached to $b_1$ inside the second $K_{2,2}$ component, which again contradicts our assumption; see Fig.~\ref{fig:2layer-a1-noleg-3}.
It follows that the legs of the second $K_{2,2}$ component must be attached only to~$a_2'$ and $b_2'$.

By applying the above arguments for the second $K_{2,2}$ component, we either obtain a contradiction or we conclude that $a_2'$ and $b_2'$ belong to a third $K_{2,2}$ component with the same properties. By repeating the same argument, we will eventually obtain a chain of $K_{2,2}$ components, all with the same properties. At the end of this chain, there must be either a cutvertex, or a~$K_{2,3}$ component, or the leftmost (or the rightmost) vertex of one of the two layers. Thus, one of the previous cases applies in order to derive a contradiction. This completes the proof that there is no leg between $a_1$ and $a_2$, or between $b_1$ and $b_2$.     

To conclude the proof of the lemma, assume that there is a leg attached to a vertex of our $K_{2,2}$ component, say $a_1$, that does not lie between $b_1$ and $b_2$. Since there are legs neither between $a_1$ and $a_2$ nor between $b_1$ and $b_2$, we can move this leg between $b_1$ and $b_2$, which yields one of the cases that we have already considered before. 
\end{proof}

From the above discussion, it follows that a leg can only be attached to
\begin{enumerate}[noitemsep,label=(\roman*)]
\item the leftmost or rightmost vertex of a snake that is not a common cutvertex, if it belongs to a~$K_{2,3}$ component, or
\item a common cutvertex that belongs to two~$K_{2,3}$ components, or
\item a vertex that belongs to two~$K_{2,3}$ components, and in each of them belongs to the partition set containing two vertices.
\end{enumerate}

\begin{figure}
  \centering
  \includegraphics[page=3]{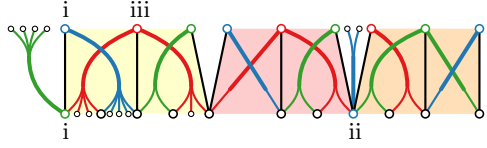}
  \caption{The baby stegosaurus from Fig.~\ref{fig:2layer-babystegosaurus} with big legs. Each eligible vertex has at least one leg and the letters correspond to the cases that allow a leg to exist.}
  \label{fig:2layer-babystegosaurusbiglegs}
\end{figure}

We refer to such a leg as \emph{big leg}; see Fig.~\ref{fig:2layer-babystegosaurusbiglegs} for an example. This gives rise to the following simple recognition and drawing algorithm.

\begin{theorem}
Maximal $1$-sided $2$-layer \fblong graphs can be recognized and drawn in linear time.
\end{theorem}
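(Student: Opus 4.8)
The plan is to convert the structural characterization built up in Lemmas~\ref{lem:2layer-baby-stegosaurus} and~\ref{lem:2layer-k22-noleg}, together with the preceding enumeration of the three admissible positions for big legs, into a linear-time combinatorial test. Taken together, these results state that a graph $G$ is a maximal $1$-sided $2$-layer \fb graph if and only if deleting all its degree-$1$ vertices yields a baby stegosaurus $G'$ whose links are complete $K_{2,2}$'s and $K_{2,3}$'s, and every deleted degree-$1$ vertex is a big leg, that is, it is attached to one of the three admissible kinds of vertices of $G'$. Since a baby stegosaurus is $1$-sided $2$-layer \fb (draw each baby snake and glue them at their common cutvertices) and big legs can be appended at their admissible slots without creating crossings, it only remains to recognize this structure and to output the corresponding drawing, both in linear time.

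For recognition, I would first locate and delete all degree-$1$ vertices in $O(n)$ time, recording for each leg its unique neighbor, and call the remaining graph $G'$. I would then compute the block-cut tree of $G'$ in linear time and verify that each block is a complete baby snake. Because a snake is a chain $G_1,\dots,G_k$ of complete bipartite graphs $K_{2,h_i}$ sharing merged vertices, and the recognition (together with the underlying link decomposition) of snakes is already provided by Binucci et al.~\cite{DBLP:conf/gd/BinucciCDGKKMT15}, it suffices to run their procedure on each block, additionally rejecting whenever some link is incomplete or has $h_i\notin\{2,3\}$. Confirming the global stegosaurus shape then reduces to checking that the block-cut tree is a path of snakes in which each common cutvertex is shared by exactly two snakes, which is a single traversal of the block-cut tree.

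Once $G'$ is certified to be a baby stegosaurus, I would classify each recorded leg by its neighbor $v$ against the three admissible cases --- $v$ is an extremal non-cutvertex of a snake lying in a $K_{2,3}$ link; $v$ is a common cutvertex shared by two $K_{2,3}$ links; or $v$ lies in two $K_{2,3}$ links and in each of them belongs to the size-$2$ side --- rejecting any leg that fits none of them. All the data needed for this test (the type of each link, the identity of the merged and cut vertices, and the left-to-right order along each layer) is produced while recognizing the snakes and can be stored so that each leg is classified in $O(1)$ time. Maximality then needs no separate test: by Lemma~\ref{lem:2layer-independent} no independent edge can cross a block, so no edge may join vertices of different blocks beyond the shared cutvertices; each link being a complete $K_{2,2}$ or $K_{2,3}$ forbids adding edges inside a block; and Lemma~\ref{lem:2layer-k22-noleg} together with the big-leg conditions rules out upgrading any leg. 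Hence a graph passing all these checks is exactly a maximal $1$-sided $2$-layer \fb graph.

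For the drawing, I would place the vertices of every baby snake on the two layers in the order dictated by its chain of links, drawing each $K_{2,2}$ and each $K_{2,3}$ as in its essentially unique $1$-sided $2$-layer \fb drawing (see Lemma~\ref{lem:$2$-layer-k24} and Fig.~\ref{fig:2layer-k23}), then concatenate the snakes horizontally and identify consecutive ones at their common cutvertices. Each big leg is finally inserted into the free slot exposed by its admissible position, exactly as in Fig.~\ref{fig:2layer-k23-legs}. Every vertex and edge is touched a constant number of times, so the whole procedure is linear. The step I expect to be most delicate is the bookkeeping for the big-leg test: one must thread enough combinatorial information through the snake recognition --- link types, merged and cut vertices, and the linear order along each layer --- to separate the three admissible leg positions from all inadmissible ones in constant time per leg, and to make the reconstructed drawing realize precisely those positions.
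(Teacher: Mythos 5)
Your proposal is correct and follows essentially the same route as the paper: strip the degree-one vertices, recognize the remaining baby stegosaurus blockwise via the snake-recognition algorithm of Binucci et al., verify that every removed vertex is a big leg attached at one of the three admissible positions, and reassemble the drawing by concatenating the baby snakes and reinserting the legs. Your additional remarks on the block-cut-tree bookkeeping and on why maximality needs no separate test only make explicit what the paper leaves implicit.
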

\begin{proof}
We first remove all legs. 
By Lemma~\ref{lem:2layer-baby-stegosaurus}, the resulting graph has to be a baby stegosaurus. 
We split it at its cutvertices and use the recognition and drawing algorithm for snakes by Binucci et al.~\cite{DBLP:conf/gd/BinucciCDGKKMT15}.
For each of the snakes, we can easily check whether it is a baby snake. 
In the negative case, we reject the instance. 
Otherwise, we glue the baby snakes together at their cutvertices. 
By Lemma~\ref{lem:2layer-k22-noleg}, we only have to check whether the legs that we removed at the begining of our algorithm are big legs, which can be done in linear time. 
In the negative case, we reject the instance.
Otherwise, we draw each of them either between the two $K_{2,3}$ components it belongs to 
or at the leftmost or rightmost vertex on one of the layers, if this belongs to a~$K_{2,3}$ component.
\end{proof}

\subsection{Triconnected $1$-sided outer-\fblong graphs.}
\label{subsec:recog:outer}

In this section, we present a linear-time algorithm for the
recognition of triconnected $1$-sided outer-\fb graphs, which in the
case of a positive instance also computes a corresponding $1$-sided
outer-\fb drawing. To do so, we will first present some important
properties of triconnected $1$-sided outer-\fb graphs. We start with
a property of biconnected (and hence of triconnected) $1$-sided
outer-\fb graphs.


\begin{lemma}\label{lem:3con-nonplanar}
Let $G$ be a biconnected $1$-sided outer-\fblong graph. 
A $1$-sided outer-\fblong drawing $\Gamma$ of $G$ can be augmented (by adding edges) 
into a $1$-sided outer-\fblong drawing $\Gamma'$ in which all edges on
the outer face of $\Gamma'$ are planar.
%
\end{lemma}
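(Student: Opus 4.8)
The plan is to turn the outer boundary into a crossing-free Hamiltonian cycle: I would add to $\Gamma$ all the ``hull edges'' joining consecutive vertices along the outer boundary, drawing each of them as an uncrossed curve that hugs the boundary from the outside. In the resulting drawing $\Gamma'$ this cycle becomes exactly the boundary of the outer face, so every edge incident to the outer face is one of these planar hull edges, while every other edge of $G$ is enclosed in a bounded region and is therefore no longer incident to the outer face.

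First I would fix the circular order $v_1,\dots,v_n$ in which the vertices appear along the outer boundary of $\Gamma$. Since $G$ is biconnected, this boundary is a simple closed curve meeting each vertex exactly once, so the order is well defined and ``consecutive'' is unambiguous. As a normalization I would assume w.l.o.g.\ that every edge of $G$ of the form $(v_i,v_{i+1})$ (indices mod $n$) is already drawn as a crossing-free arc running along the outer boundary: because $v_i$ and $v_{i+1}$ are consecutive, the strip of the plane just outside the boundary arc between them lies entirely in the unbounded face, so such an edge can always be rerouted there without meeting any other edge. This rerouting keeps the drawing $1$-sided outer-\fb, since a planar edge belongs to no fan-bundle and participates in no bundle crossing; detaching $(v_i,v_{i+1})$ from whatever fan-bundle previously contained it only removes one edge (and possibly one crossing) and leaves every remaining bundle of $\Gamma$ intact. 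With this normalization in place, the actual augmentation is to \emph{add}, for every index $i$ with $(v_i,v_{i+1})\notin G$, the edge $(v_i,v_{i+1})$ routed along the same boundary-hugging arc; again this curve lies in the unbounded face and hence crosses nothing. Because I add $(v_i,v_{i+1})$ only when it is absent, no parallel or homotopic edges are created and the graph stays simple, and since all vertices remain incident to the (new) outer face, outerness is preserved.

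After processing all $n$ consecutive pairs, the arcs realizing $(v_1,v_2),\dots,(v_{n-1},v_n),(v_n,v_1)$ form a simple crossing-free cycle $C=v_1v_2\cdots v_nv_1$ that encloses the whole of the original drawing. Consequently $C$ bounds the outer face of $\Gamma'$: every edge on the outer face of $\Gamma'$ is an edge of $C$ and is planar, whereas every other edge lies in a bounded face. This is precisely the desired $\Gamma'$.

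The only genuinely delicate point is the normalization step, i.e.\ verifying that an already-present hull edge $(v_i,v_{i+1})$ that happens to be crossed in $\Gamma$ can be safely pulled out to the boundary. This is where biconnectivity is used (it guarantees a single simple outer boundary, so each $v_i$ is passed exactly once and the hull edges form a simple cycle), together with the observation that consecutive outer vertices can always be joined by a boundary-hugging arc and that removing a single edge from a fan-bundle never invalidates the remaining bundles. Everything else is routine, as adding planar edges in the unbounded face cannot introduce any new crossing.
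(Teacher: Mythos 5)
Your proposal is correct and follows essentially the same route as the paper: the paper works on the planarization and, for each crossing point on the outer boundary, adds or reroutes the edge between the two neighbouring real vertices as a boundary-hugging arc, which is exactly your per-pair hull-edge construction stated globally. Both arguments rely on the same facts you highlight, namely that biconnectivity makes the outer boundary a simple closed curve through all vertices and that rerouting or adding a planar edge in the unbounded face cannot invalidate any remaining fan-bundle.
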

\begin{proof}
Let $\Pi$ be the \emph{planarization} of drawing~$\Gamma$, i.e.,
$\Pi$ is the drawing obtained by replacing the crossing points of
$\Gamma$ with dummy vertices. Since $G$ is biconnected, the outer
face of~$\Pi$ is a simple cycle and contains two types of vertices;
vertices of~$G$ and vertices that correspond to crossing points
of~$\Gamma$. Let $v_{1},\ldots,v_{k}$ be the vertices that are
incident to the outer face of $\Pi$ as they appear in clockwise order
along it. Since~$\Gamma$ is outer-\fb, the outer face of $\Pi$
contains all vertices of $G$.

Note that if the outer face of $\Pi$ consists exclusively of vertices
of $G$, then the lemma clearly holds. In particular, for any two
vertices $v_i$ and $v_{i+1}$ of $G$ that are consecutive along the
outer face of $\Pi$, the edge $(v_i,v_{i+1})$ belongs to $G$; see
Fig.~\ref{fig:outer-3-planar-1}. To complete the proof, assume that
there exists a vertex, say $v_i$, along the outer face of $\Pi$ that
corresponds to a crossing point in $\Gamma$. By outer-\fby, it
follows that $v_{i-1}$ and $v_{i+1}$ are both vertices of $G$.  We
remove~$v_i$ from the outer face of $\Pi$ as follows. If the
edge~$(v_{i-1},v_{i+1})$ exists in~$G$, we remove it from drawings
$\Pi$ and $\Gamma$. Then, we add the edge~$(v_{i-1},v_{i+1})$
to~$\Pi$ and to~$\Gamma$ as a curve that starts in~$v_{i-1}$, follows
the outer face until~$v_i$, and ends in~$v_{i+1}$ again by following
the outer face; see Fig.~\ref{fig:outer-3-planar-2}.

\begin{figure}[t]
	\centering
    \subfloat[\label{fig:outer-3-planar-1}]{\includegraphics[page=3]{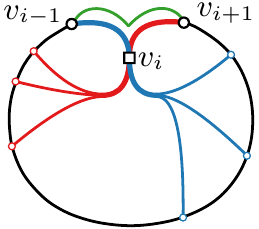}}
    \hfil
    \subfloat[\label{fig:outer-3-planar-2}]{\includegraphics[page=1]{outer-3-planar}}
    \caption{Creating an outer-\fb drawing in which all edges of its outer face are planar.}
    \label{fig:outer-3-planar}
\end{figure}

Note that the aforementioned procedure does not
reduce the number of vertices of $G$ that are on the outer face of
either $\Pi$ or $\Gamma$, which implies that if we apply this
procedure iteratively to each vertex of $\Pi$ that corresponded to a
crossing point of $\Gamma$, we will eventually obtain a drawing
$\Gamma'$ in which all edges of its outer face are planar.
\end{proof}

\begin{lemma}\label{lem:3con-onecrossing}
  The following properties hold in a $1$-sided outer-\fb drawing $\Gamma$ of a triconnected graph $G$ in which
  all edges incident to its outer face are planar: %
  \begin{enumerate}[label=P.\arabic*]
    \item\label{prop:outer-3-crossed} No inner edge of $\Gamma$ is planar.
    \item\label{prop:outer-3-adjacent} The anchors of two crossing fan-bundles in $\Gamma$ are consecutive along the outer face of $\Gamma$.
    \item\label{prop:outer-3-onecrossing} There is at most one fan-bundle crossing in $\Gamma$.
  \end{enumerate}
\end{lemma}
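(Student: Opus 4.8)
The plan is to prove each of the three properties by contradiction, in every case exhibiting a separation pair $\{x,y\}$ of $G$ and invoking triconnectivity for the contradiction. Throughout I use the hypotheses in the form: since all edges on the outer face are planar, the outer boundary of $\Gamma$ is a cycle $C=(v_1,\dots,v_n)$ through every vertex of $G$, and a crossed edge is never an outer edge; hence every crossed edge is an \emph{inner} edge (a chord of $C$), and conversely \ref{prop:outer-3-crossed} will say the converse also holds.

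For \ref{prop:outer-3-crossed}, suppose some inner edge $e=(x,y)$ is planar. As $e$ is inner and $G$ is simple, $e$ is a chord both of whose arcs of $C$ contain a vertex in their interior. Since $e$ is uncrossed, $e$ splits the disk into two parts, one containing each arc-interior, and no edge may cross $e$; thus every edge joining the two arc-interiors must pass through $x$ or $y$. Hence $\{x,y\}$ separates these two nonempty vertex sets, a separation pair, contradicting triconnectivity.

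For \ref{prop:outer-3-adjacent}, let $B_u,B_v$ be crossing bundles anchored at $u,v$, and let $\gamma$ be a $B_uB_v$-following curve from $u$ to $v$. By definition $\gamma$ crosses neither $B_u$ nor $B_v$; and since $\Gamma$ is $1$-sided \fb each of these bundles crosses no further bundle and all crossings are bundle crossings, the curve $\gamma$ in fact crosses no edge of $G$ at all, meeting $C$ only at $u$ and $v$. Were $u$ and $v$ non-consecutive on $C$, both arcs of $C$ between them would contain a vertex, and $\gamma$ — crossing no edge — would witness that $\{u,v\}$ separates the two arc-interiors, again a forbidden separation pair. Hence $u$ and $v$ are consecutive along the outer face.

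For \ref{prop:outer-3-onecrossing} I assume two distinct bundle crossings $\chi_1=B_a\times B_b$ and $\chi_2=B_c\times B_d$. The key structural fact is that these involve four distinct bundles and, because each bundle crosses at most one other, no edge of $\chi_1$ crosses an edge of $\chi_2$, and none of them crosses an outer edge; consequently the entire $\chi_2$-gadget lies inside a single region of the arrangement formed by $C$ together with the edges of $\chi_1$. By \ref{prop:outer-3-adjacent}, $a,b$ are consecutive and $c,d$ are consecutive, and the edges of $B_a$ (resp.\ $B_b$) fan out from $a$ (resp.\ $b$) to tips occupying a contiguous sub-arc of $C$. I will then show that the region containing $\chi_2$ is delimited by at most two fan-edges of $\chi_1$, whose endpoints $\{x,y\}$ (two tips, or an anchor and a tip) are two vertices of $G$: every inner edge incident to a vertex strictly inside the region belongs to the $\chi_2$-gadget and stays inside, while the outer edges there run along $C$ inside the region, so the interior vertices — nonempty, as they contain the tips of $\chi_2$ — are separated from the rest of $G$ (which contains $a$ or $b$) by $\{x,y\}$. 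This separation pair contradicts triconnectivity, leaving at most one crossing. The main obstacle is precisely this last step: proving that, regardless of how $\chi_2$ sits relative to the fans of $\chi_1$ — nested inside a single fan-triangle of $B_a$ or $B_b$, or wedged between the two innermost fan-edges near $\chi_1$ where the two fans meet — the confining region is always bounded by only two vertices of $G$. This needs a careful analysis of the face structure of the $\chi_1$-arrangement, using \ref{prop:outer-3-adjacent} and the ``at most one crossing per bundle'' constraint to rule out any edge escaping the region except through those two vertices.
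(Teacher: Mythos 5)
Your treatment of P.1 and P.2 is correct and is essentially the paper's own argument: a planar inner edge, respectively the planar $B_uB_v$-following curve between two non-consecutive anchors, is an uncrossed chord of the outer Hamiltonian cycle, and its two endpoints form a separation pair, contradicting triconnectivity.

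The gap is in P.3, and it is one you yourself flag. The whole argument hinges on the claim that the region of the arrangement formed by the outer cycle $C$ and the edges of the first crossing $\chi_1$ that confines the second crossing $\chi_2$ is delimited by at most two vertices of $G$; you announce this claim (``I will then show\dots'') and then concede that establishing it ``is precisely the main obstacle,'' so the proof is not actually carried out. Two concrete difficulties remain open. First, your assertion that the tips of $B_a$ occupy a contiguous sub-arc of $C$ is unjustified and not obviously true: vertices that are not tips of $B_a$ may be interleaved among its tips (they are then enclosed between consecutive fan-edges of $B_a$ and an arc of $C$), so the face structure of the $\chi_1$-arrangement is richer than the fan-triangle picture you describe. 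Second, you never treat the case in which the two crossings share an anchor (the four \emph{bundles} are distinct, but the anchors need not be, e.g.\ $b=c$); your argument tacitly places the $\chi_2$-gadget strictly inside one face of the $\chi_1$-arrangement, which is exactly what fails when an anchor of $\chi_2$ already lies on that face's boundary as an anchor of $\chi_1$. The paper sidesteps all of this by working directly with the cyclic order along the outer face: writing the tips of $B_u$ as $u_1,\dots,u_\kappa$ and of $B_z$ as $z_1,\dots,z_\nu$, it first shows the anchors must be pairwise distinct (if $v=w$ then $u,v,z$ are consecutive and either $\langle v,z_\nu\rangle$ or $\langle v,u_1\rangle$ is a separation pair), and then, for distinct anchors $u,v,w,z$ in clockwise order, exhibits $\langle w,z_\nu\rangle$ or $\langle v,u_1\rangle$ as a separation pair; in each case an uncrossed ``following curve'' from an anchor to an extreme tip of the other bundle serves as the separating chord. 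If you want to keep your region-based route, the cleanest repair is to replace the unproven ``two delimiting vertices'' claim by exactly these anchor-plus-extreme-tip pairs and verify that the corresponding curves are uncrossed chords of $C$.
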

\begin{proof}
If there is an inner edge~$(u,v)$ that is planar in~$\Gamma$, then~$u$ and~$v$ form a separation pair in $G$. Since $G$ is triconnected, this is a contradiction. Hence, Property~\ref{prop:outer-3-crossed} holds. 
  
Let $B_u$ and $B_v$ be two crossing fan-bundles in $\Gamma$ that are anchored 
at vertices~$u$ and $v$ of $G$. To prove Property~\ref{prop:outer-3-adjacent}, 
assume to the contrary that $u$ and~$v$ are not consecutive along the outer 
face of $\Gamma$, i.e., $(u,v)$ is not an edge of the outer face of $\Gamma$. 
We proceed by drawing edge $(u,v)$ in $\Gamma$ as a $B_uB_v$-following curve 
(and hence planar), by first removing it from $\Gamma$ in case that it belongs 
to $G$. Since $u$ and $v$ are not consecutive along the outer face of $\Gamma$, 
it follows that~$u$ and~$v$ form a separation pair in $G$. Since $G$ is 
triconnected, this is a contradiction and Property~\ref{prop:outer-3-adjacent}
 holds.

To prove Property~\ref{prop:outer-3-onecrossing}, assume for a contradiction that there exist two fan-bundle crossings in $\Gamma$, say between fan-bundles~$B_u$ and $B_v$ and between fan-bundles~$B_w$ and $B_z$, respectively. Clearly, $u \neq v$ and $w \neq z$ hold. Let $u_1,\ldots,u_\kappa$ and $v_1,\ldots,v_\lambda$ be the tips of $B_u$ and $B_v$, respectively, in this clockwise order along the outer face of $\Gamma$. Accordingly, let $w_1,\ldots,w_\mu$ and $z_1,\ldots,z_\nu$ be the tips of $B_w$ and $B_z$, respectively, in this clockwise order along the outer face of $\Gamma$. We now claim that $u \notin \{w,z\}$ and $v \notin \{w,z\}$ holds. Assume to the contrary that $v=w$. Then, by Property~\ref{prop:outer-3-adjacent}, we may further assume w.l.o.g.~that $u$, $v$ and $z$ appear consecutively in this clockwise order along the outer face of $\Gamma$; see Fig.~\ref{fig:outer-3-onecrossing-1}. In this case, however, either $\langle v, z_\nu \rangle$ or $\langle v, u_1 \rangle$ form a separation pair in $G$, which is a contradiction to the fact that $G$ is triconnected. So, we may assume that $u$, $v$, $w$ and $z$ are pairwaise disjoint and w.l.o.g.~that they appear in this clockwise order along the outerface of $\Gamma$; see Fig.~\ref{fig:outer-3-onecrossing-2}. In this case, however, $\langle w, z_\nu \rangle$ or $\langle v, u_1 \rangle$ form a separation pair in $G$, which is again a contradiction to the fact that $G$ is triconnected. Hence, Property~\ref{prop:outer-3-onecrossing} holds.
\end{proof}
  
\begin{figure}[t]
    \centering
    \subfloat[\label{fig:outer-3-onecrossing-1}]{\includegraphics[page=1]{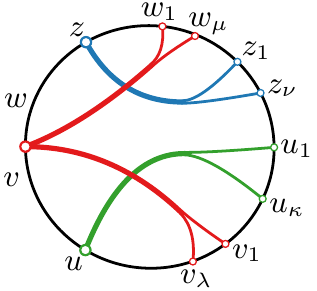}}
    \hfil
    \subfloat[\label{fig:outer-3-onecrossing-2}]{\includegraphics[page=2]{outer-3-onecrossing}}    
    \caption{Illustration of the case of two fan-bundle crossings in Property~\ref{prop:outer-3-onecrossing}.}
    \label{fig:outer-3-onecrossing}
\end{figure}  

We call a drawing with Properties~\ref{prop:outer-3-crossed}, \ref{prop:outer-3-adjacent} and \ref{prop:outer-3-onecrossing} of Lemma~\ref{lem:3con-onecrossing} a \emph{canonical drawing}. In the following, we give a complete characterization of the triconnected $1$-sided outer-\fb graphs.

\begin{lemma}\label{lem:3con-characterization}
  A triconnected graph $G$ with~$n\ge 5$ vertices is $1$-sided outer-\fblong if and
  only if it consists of:
  \begin{enumerate}[label=C.\arabic*]
  \item\label{c1} a Hamiltonian path~$v_1,v_2,\ldots,v_n$,
  \item\label{c5} the edges~$(v_1,v_{n-1})$ and~$(v_n,v_2)$,
  \item\label{c2} the edges~$(v_n,v_i)$, with~$3\le i\le k-1$, and~$(v_1,v_j)$, with~$k\le j\le n-2$ for some~$2\le k\le n$,
  \item\label{c3} the edge~$(v_1,v_n)$ if $k\in\{2,n-1\}$, and
  \item\label{c4} possibly the edges~$(v_n,v_k)$ and~$(v_1,v_n)$.
  \end{enumerate}
\end{lemma}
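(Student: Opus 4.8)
The plan is to prove both implications, after fixing notation: relabel the vertices so that the cyclic order along the outer face is $v_1,\dots,v_n$ and the single fan-bundle crossing occurs between the bundles anchored at the consecutive vertices $v_n$ and $v_1$. For the \emph{if} direction I would simply exhibit a drawing. Placing $v_1,\dots,v_n$ on a circle in this clockwise order, I route the Hamiltonian-path edges of C.\ref{c1} and the edge $(v_1,v_n)$ (when present) as arcs of the outer boundary, so that they are planar; I then collect $(v_n,v_2),(v_n,v_3),\dots,(v_n,v_{k-1})$ (together with $(v_n,v_k)$ if present) into a single fan-bundle $B_{v_n}$ and $(v_1,v_k),\dots,(v_1,v_{n-1})$ into a single fan-bundle $B_{v_1}$, and route the two bundles into the disk so that they cross exactly once. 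Every edge then lies in at most one bundle (so the drawing is $1$-sided), each bundle is crossed at most once, and all vertices lie on the outer face; hence $G$ is $1$-sided outer-\fb. This step is routine and essentially reproduces the canonical drawing of Lemma~\ref{lem:3con-onecrossing}.

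For the \emph{only if} direction I start from an arbitrary $1$-sided outer-\fb drawing of $G$ and apply Lemma~\ref{lem:3con-nonplanar} to pass to a drawing in which all outer-face edges are planar, so that Lemma~\ref{lem:3con-onecrossing} applies. Since a triconnected graph on $n\ge 5$ vertices cannot be outerplanar (outerplanar graphs always have a vertex of degree at most $2$), the drawing has at least one fan-bundle crossing, and by Property~\ref{prop:outer-3-onecrossing} it has exactly one, between two bundles $B_{v_n}$ and $B_{v_1}$ anchored, by Property~\ref{prop:outer-3-adjacent}, at consecutive outer vertices, which I name $v_n$ and $v_1$. By Property~\ref{prop:outer-3-crossed} every inner edge is crossed, hence lies in one of these two bundles and is incident to $v_1$ or $v_n$. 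The only edges the augmentation of Lemma~\ref{lem:3con-nonplanar} can introduce are outer-boundary edges, so it suffices to recover the edge set up to the outer edges $(v_i,v_{i+1})$ and $(v_1,v_n)$, which triconnectivity will pin down.

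Next I would extract the edge set. Triconnectivity forces minimum degree at least $3$; for a \emph{middle} vertex $v_i$ with $2\le i\le n-1$ the only possible incident edges are its two path edges and a bundle edge to $v_1$ or $v_n$, so all three are present. This yields the full Hamiltonian path C.\ref{c1} and shows each middle vertex is a tip of exactly one bundle. As $G$ is simple, the bundle edge of $v_2$ cannot be the path edge $(v_1,v_2)$ and must be $(v_n,v_2)$, and symmetrically $v_{n-1}$ forces $(v_1,v_{n-1})$, giving C.\ref{c5}. I would then show that the tips of $B_{v_n}$ form a prefix $v_2,\dots,v_{k-1}$ and those of $B_{v_1}$ a suffix $v_k,\dots,v_{n-1}$: any non-contiguous assignment would nest a bundle edge so that the separated vertex's bundle edge could escape only by creating a second crossing or by exposing a separation pair, contradicting Property~\ref{prop:outer-3-onecrossing} or triconnectivity. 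This defines $k$ and gives C.\ref{c2}, while the possibility that the boundary tip $v_k$ is additionally a tip of $B_{v_n}$ accounts for the optional edge $(v_n,v_k)$ of C.\ref{c4}; a short separation-pair analysis at $v_1,v_n$ then determines exactly when $(v_1,v_n)$ must be present, yielding C.\ref{c3} in the extreme cases $k\in\{2,n-1\}$ and leaving it optional (C.\ref{c4}) otherwise.

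The main obstacle I expect is the contiguity argument together with the bookkeeping of the degenerate values $k\in\{2,n-1,n\}$, where a bundle shrinks to a single edge and the roles of the outer edge $(v_1,v_n)$ and of a potential parallel bundle edge must be disentangled to avoid double-counting an edge such as $(v_1,v_2)$ or $(v_{n-1},v_n)$. Each such sub-case is ultimately resolved by exhibiting either a forbidden separation pair or an avoidable second crossing, so triconnectivity and Property~\ref{prop:outer-3-onecrossing} do essentially all of the real work.
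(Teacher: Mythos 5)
Your proposal is correct and follows essentially the same route as the paper: both directions rest on Lemma~\ref{lem:3con-nonplanar} and Lemma~\ref{lem:3con-onecrossing} (a single fan-bundle crossing at two consecutive anchors, tips forming two interior-disjoint intervals, and degree-$\ge 3$ bookkeeping to force the remaining edges), with the sufficiency witnessed by the canonical drawing. The only notable deviations are that you argue directly on a possibly non-maximal $G$ where the paper first treats the maximal case and then analyses which edges may be deleted, and that you omit the paper's verification (via three vertex-disjoint paths) that every graph satisfying C.\ref{c1}--C.\ref{c4} is in fact triconnected --- not strictly required for the equivalence as stated, but it is the bulk of the paper's sufficiency argument.
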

\begin{proof}
  For the sufficiency part, in order to prove the triconnectivity, we show that there are at least~3 vertex-disjoint paths between each pair of vertices~$u$ and~$v$ of $G$. If~$\{u,v\}=\{v_1,v_n\}$, then there exist paths $v_1,v_2,v_n$ (by~\ref{c1} and~\ref{c5}) and $v_1,v_{n-1},v_n$ (by~\ref{c2}). For the third path, we choose the path $v_1,v_k,v_{k-1},v_n$ if $3<k<n-1$ (which exists by~\ref{c2}), the path $v_1,v_{k+1},v_k,v_n$ if $k=3$ (which exists by~$n\ge 5$), and the path $v_1,v_n$ if $k\in\{2,n-1\}$ (which exists by~\ref{c3}). 
Consider now a pair $v_i$ and $v_j$ of vertices, with $i < j$. Assume that~$i<k$; the other case is symmetric. This implies that the edge $(v_n,v_i)$ exists (by~\ref{c2}). If~$j<k$, then the edge $(v_n,v_j)$ exists (by~\ref{c2}) and $v_i$ and $v_j$ are connected by the path $v_i,v_{i+1},\ldots, v_j$, the path $v_i,v_n,v_j$, and the path $v_i,v_{i-1},\ldots,v_1,v_k,v_{k-1},\ldots,v_j$. If $j\ge k$, then the edge $(v_1,v_j)$ exists (by~\ref{c2}), and thus $v_i$ and $v_j$ are connected by the path $v_i,v_{i+1},\ldots, v_j$, the path $v_i,v_n,v_{n-1},\ldots,v_j$, and the path $v_i,v_{i-1},\ldots,v_1,v_j$. This proves the triconnectivity.
Fig.~\ref{fig:outer-3-drawing} is an evidence that such a graph always admits a $1$-sided outer-\fb drawing.

For the necessity, first assume that~$G$ is maximal $1$-sided outer-\fb. By Lemma~\ref{lem:3con-nonplanar}, there is a $1$-sided outer-\fb drawing~$\Gamma$ whose outer face is a simple planar Hamiltonian cycle~$v_1,\ldots,v_n,v_1$. Thus,~\ref{c1} holds. Since $G$ is triconnected, there is at least one inner edge. By Lemma~\ref{lem:3con-onecrossing}, $\Gamma$ is canonical, and hence there exist exactly two crossing fan-bundles in $\Gamma$, whose origins $v_n$ and~$v_1$ are adjacent along the outer face of $\Gamma$. 
Since every vertex of $G$ has degree at least~3, each vertex $v_2,\ldots,v_{n-1}$ is a tip of~$B_{v_n}$ or~$B_{v_1}$. Also, since the edges belonging to~$B_{v_n}$ and~$B_{v_1}$ cannot be further crossed, it follows that the tips of~$B_{v_n}$ and of~$B_{v_1}$ form two interior-disjoint intervals along the outer face of $\Gamma$. Hence, all edges described by \ref{c5} and~\ref{c2} belong to $G$; the edges from~\ref{c5} have to exist since~$v_2$ can only be a tip of $B_{v_n}$ and
since~$v_{n-1}$ can only be a tip of~$B_{v_1}$ by simplicity.
Since $G$ is maximal, there exists a vertex $v_k$, with $2 \leq k \leq n-1$, that is a tip of both $B_{v_n}$ and~$B_{v_1}$. Hence, \ref{c3} and \ref{c4} hold. 

\begin{figure}[t]
  \centering
  \includegraphics[page=1]{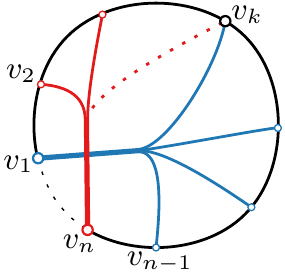}
  \caption{A $1$-sided outer-\fb canonical drawing of a triconnected graph.}
  \label{fig:outer-3-drawing}
\end{figure}

Assume now that~$G$ is not maximal. By Lemma~\ref{lem:3con-nonplanar}, $G$ is a subgraph of a maximal $1$-sided outer-\fb graph $G'$. The only edges that can be removed from $G'$ without violating triconnectivity are one of $(v_1,v_k)$ and $(v_n,v_k)$, plus the edge~$(v_1,v_n)$, but only if $k\notin\{2,n-1\}$; otherwise, at least one of~$v_1$ and~$v_n$ has degree smaller than~3. Hence, \ref{c2}, \ref{c3} and \ref{c4} hold. This also implies that the Hamiltonian cycle $v_1,\ldots,v_n,v_1$ can become a Hamiltonian path $v_1,\ldots,v_n$, and thus \ref{c1} still holds. This concludes the proof of the lemma.
\end{proof}
Based on Lemma~\ref{lem:3con-characterization}, we can derive a linear-time
recognition algorithm for triconnected graphs, which in the case of a positive instance also computes a $1$-sided outer-\fb drawing. To this end, we have to find the Hamiltonian
path~$v_1,\ldots,v_n$. While it is NP-hard in general to find a Hamiltonian path,
we show that we can do so efficiently for this graph class by identifying the
vertices~$v_1$ and~$v_n$ based on their degree.

\newcommand{\outerthreerecognitionText}{Triconnected $1$-sided outer-\fblong graphs can be recognized and drawn in linear time.}
\begin{theorem}\label{thm:outer-3-recognition}
\outerthreerecognitionText
\end{theorem}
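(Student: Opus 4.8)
The plan is to turn the structural characterization of Lemma~\ref{lem:3con-characterization} into an algorithm: a triconnected graph~$G$ on $n$ vertices lies in the class if and only if, for some choice of two \emph{apex} vertices $v_1,v_n$, an integer $k$, and an ordering $v_2,\dots,v_{n-1}$ of the remaining vertices, the edge set of $G$ is \emph{exactly} the one listed in \ref{c1}--\ref{c4}. Hence it suffices to (i)~locate the two apexes $v_1$ and $v_n$, (ii)~recover the Hamiltonian path, and (iii)~test the rigid edge pattern; if the test succeeds, the drawing is produced by the sufficiency construction of Lemma~\ref{lem:3con-characterization}, i.e., place $v_1,\dots,v_n$ on a circle and route the two fan-bundles of $v_1$ and $v_n$ as in Fig.~\ref{fig:outer-3-drawing}, which is clearly linear. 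For $n\le 4$ (where the lemma does not apply) the only triconnected graph is $K_4$, which is planar and hence in the class, so it is handled in $O(1)$.

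The first fact I would establish is that the apexes are the only vertices of high degree. By \ref{c1}--\ref{c4}, every vertex $v_i$ with $2\le i\le n-1$ is a tip of exactly one of the two fan-bundles, except possibly $v_k$, which may be a tip of both; consequently $\deg(v_i)\in\{3,4\}$ for every internal vertex, whereas $\deg(v_1)+\deg(v_n)\ge n$, since $v_2$ and $v_{n-1}$ are adjacent to both apexes and every other internal vertex to at least one. Thus at most two vertices have degree $\ge 5$, and the larger-degree apex has degree $\ge\lceil n/2\rceil$. For $n\ge 9$ this exceeds $4$, so a maximum-degree vertex is guaranteed to be one apex, say $v_1$; the finitely many sizes $5\le n\le 8$ are disposed of by brute force over the $O(1)$ candidate apex pairs.

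Given $v_1$, the second step locates $v_n$. Here I would use that, after deleting $v_1$, every surviving edge is either an edge of the cycle $v_2,\dots,v_n,v_2$ or is incident to $v_n$; hence in $G-v_1$ every vertex other than $v_n$ has degree exactly $2$. If $v_n$ carries at least one such fan-edge, it is the unique vertex of degree $\neq 2$ in $G-v_1$ and is found by a single scan. The only remaining situation is that $G-v_1$ is a pure cycle (the unbalanced case $k=3$ with no edge $(v_n,v_3)$), in which $v_1$ is adjacent to all of $v_2,\dots,v_{n-1}$; then $v_n$ is either the unique cycle vertex non-adjacent to $v_1$, or, in the fully symmetric wheel case $(v_1,v_n)\in E$, it can be taken to be any neighbour of $v_1$ on the cycle. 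In every case only $O(1)$ candidate pairs $\{v_1,v_n\}$ are produced.

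For each candidate pair I would verify the characterization in linear time: check that $G-\{v_1,v_n\}$ is a simple path, read off the order $v_2,\dots,v_{n-1}$ (trying both orientations), and confirm that the apex-adjacency pattern along the path has the required shape---a prefix of vertices adjacent only to $v_n$, a single transition vertex $v_k$ (adjacent to $v_1$, and possibly also to $v_n$), and a suffix adjacent only to $v_1$---together with the boundary edges $(v_n,v_2)$ and $(v_1,v_{n-1})$ of \ref{c5}, the optional edge $(v_1,v_n)$ of \ref{c3}--\ref{c4}, and no further edges. If some candidate passes, then $G$ is exactly the graph of Lemma~\ref{lem:3con-characterization}, so it is triconnected $1$-sided outer-\fb and we output its drawing; note that the sufficiency direction already forces triconnectivity, so no separate triconnectivity test is required. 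Otherwise $G$ is rejected, and the whole procedure runs in $O(n)$ time. The hard part will be the apex-identification of the second step: in the unbalanced and symmetric corner cases one apex is degree-indistinguishable from the internal vertices, and some care is needed to enumerate only a constant number of candidate apex pairs so that the running time stays linear; once the apexes are fixed, matching the highly rigid pattern of Lemma~\ref{lem:3con-characterization} is a routine scan.
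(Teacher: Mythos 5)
Your overall strategy is the same as the paper's: reduce everything to the rigid structure of Lemma~\ref{lem:3con-characterization}, locate the two high-degree ``apex'' vertices $v_1,v_n$ by degree arguments, verify the edge pattern along the Hamiltonian path in linear time, and draw via the sufficiency construction. The degree bounds you derive ($\deg(v_1)+\deg(v_n)\ge n$, all other vertices of degree $3$ or $4$, hence the maximum-degree vertex is an apex for $n\ge 9$), the brute-force treatment of $n\le 8$, and the linear-time verification are all sound and match the paper.

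However, the step that locates the second apex contains a genuine error. You claim that after deleting the maximum-degree apex ``every vertex other than [the second apex] has degree exactly $2$,'' so that the second apex is the unique vertex of degree $\neq 2$. This is false: the edges $(v_1,v_j)$, $k\le j\le n-2$, of condition~\ref{c2} survive the deletion of $v_n$ and are incident to \emph{both} endpoints, so every such $v_j$ also has degree $3$ in $G-v_n$. Concretely, for $n=10$ and $k=7$ (no optional edges), $G-v_{10}$ is the $9$-cycle $v_1,\dots,v_9,v_1$ plus the chords $(v_1,v_7)$ and $(v_1,v_8)$; the vertices of degree $\neq 2$ are $v_1,v_7,v_8$, so your single scan does not return a unique candidate. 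Worse, when the second apex carries exactly one chord (e.g.\ $k=n-2$) it has degree $3$ in $G-v_n$, tied with its chord endpoint, so even ``take the maximum degree'' does not single it out. This is precisely the corner case you flag as ``the hard part,'' but your proposed resolution rests on the false degree-$2$ claim and would fail as stated. The gap is repairable --- take the unique vertex of maximum degree in $G-v_n$ when that degree is at least $4$, otherwise try the at most two degree-$3$ vertices, otherwise handle the pure-cycle (wheel) case --- and this is essentially what the paper's explicit case analysis (Cases~\ref{c:4-4}--\ref{c:n-4-4-4} and Table~\ref{tab:outer-3-recognition-cases}) does by distinguishing the possible degree sequences and, e.g., identifying $v_1$ among the non-neighbours of $v_n$ or trying both labelings of the degree-$4$ vertices. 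As written, though, your candidate-generation step is incorrect whenever the second apex has at least one chord, so the proof is incomplete at that point.
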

\begin{proof}
Let $G$ be any triconnected graph. The task is to test whether $G$ satisfies the conditions \ref{c1}-\ref{c4} of Lemma~\ref{lem:3con-characterization}. Note that, in order for these conditions to be satisfied, the only vertices that can have degree larger than~$3$ are $v_n$,~$v_1$, and~$v_k$. In particular, the sum of the degrees of~$v_1$ and~$v_n$ is between~$n$ and~$n+3$,~$v_k$ has degree at most~$4$, while every other vertex has degree exactly~$3$. More specifically, the sum of the degrees of~$v_1$ and~$v_n$ is 
\begin{itemize}
\item $n$, if $\deg v_k=3$ and edge $(v_1,v_n)$ does not belong to $G$;
\item $n+1$, if $\deg v_k=4$ and edge $(v_1,v_n)$ does not belong to $G$;
\item $n+2$, if $\deg v_k=3$ and edge $(v_1,v_n)$ belongs to $G$;
\item $n+3$, if $\deg v_k=4$ and edge $(v_1,v_n)$ belongs to $G$.
\end{itemize}
Hence, our algorithm rejects $G$ if one of the following holds:
\begin{itemize}
  \item there are more than three vertices with degree larger than~$3$;
  \item there are more than two vertices with degree larger than~$4$;
  \item there are no two vertices such that the sum of their degrees is between~$n$ and~$n+3$.
\end{itemize}

Fig.~\ref{fig:outer-3-recognition} shows all triconnected $1$-sided outer-\fb
graphs with~$n\le 8$. So, if~$G$ has at most eight vertices, our algorithm exhaustively tests whether $G$ is one of these graphs. So, we may assume w.l.o.g.~that $G$ has at least nine vertices. Then, the sum of the degrees of~$v_1$ and~$v_n$ must be at least~$9$, so there always exists at least one vertex of degree larger than~$4$. We distinguish the following cases. Table~\ref{tab:outer-3-recognition-cases} gives an overview of the cases and shows that the case analysis is complete.

\begin{figure}[t]
  \centering
  \begin{tabular}{c@{\qquad}c@{\qquad}c@{\qquad}c@{\qquad}c@{\qquad}}
    \includegraphics[page=1]{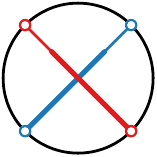}
    &
    \includegraphics[page=2]{outer-3-recognition}
    &
    \includegraphics[page=3]{outer-3-recognition}
    &
    \includegraphics[page=4]{outer-3-recognition}
    &
    \includegraphics[page=5]{outer-3-recognition}
    \\[.5em]
    \includegraphics[page=6]{outer-3-recognition}
    &
    \includegraphics[page=7]{outer-3-recognition}
    &
    \includegraphics[page=8]{outer-3-recognition}
    &
    \includegraphics[page=9]{outer-3-recognition}
    &
    \includegraphics[page=10]{outer-3-recognition}
    \\[.5em]
    \includegraphics[page=11]{outer-3-recognition}
    &
    \includegraphics[page=12]{outer-3-recognition}
    &
    \includegraphics[page=13]{outer-3-recognition}
    &
    \includegraphics[page=14]{outer-3-recognition}
    &
    \includegraphics[page=15]{outer-3-recognition}
    \\[.5em]
    \includegraphics[page=16]{outer-3-recognition}
    &
    \includegraphics[page=17]{outer-3-recognition}
    &
    \includegraphics[page=18]{outer-3-recognition}
    &
    \includegraphics[page=19]{outer-3-recognition}
    &
    \includegraphics[page=20]{outer-3-recognition}
    \\[.5em]
    \includegraphics[page=21]{outer-3-recognition}
    &
    \includegraphics[page=22]{outer-3-recognition}
    &
    \includegraphics[page=23]{outer-3-recognition}
    &
    \includegraphics[page=24]{outer-3-recognition}
    &
    \includegraphics[page=25]{outer-3-recognition}
    \\[.5em]
    \includegraphics[page=26]{outer-3-recognition}
    &
    \includegraphics[page=27]{outer-3-recognition}
    &
    \includegraphics[page=28]{outer-3-recognition}
    &
    \includegraphics[page=29]{outer-3-recognition}
    &
    \includegraphics[page=30]{outer-3-recognition}
  \end{tabular}
    \caption{All triconnected $1$-sided outer-\fb graphs with at most~8 vertices.}
    \label{fig:outer-3-recognition}
\end{figure}

\newcolumntype{M}[1]{>{\centering\arraybackslash}m{#1}}
\begin{table}[t]
\centering
\caption{An illustration of all cases in the proof of Theorem~\ref{thm:outer-3-recognition}
with $n\ge 9$ that can occur by $n\le \deg v_n+\deg v_1 \le n+3$ and $3\le \deg v_k\le 4$,
assuming that~$\deg v_n>4\ge\deg v_1$.
The remaining case that~$\deg v_n\ge \deg v_1>4$ is handled in Case~\ref{c:4-4}.
Columns marked by an X cannot occur because of $\deg v_n+\deg v_1\ge n$.}
\label{tab:outer-3-recognition-cases}
\begin{tabular}{M{2.5cm}M{0cm}|M{0cm}M{1.8cm}M{1.8cm}M{0cm}M{1.8cm}M{1.8cm}M{0cm}@{}m{0pt}@{}}
\toprule
            &\multicolumn{1}{c}{}&& \multicolumn{2}{c}{$\deg v_1=4$}                 && \multicolumn{2}{c}{$\deg v_1=3$}                 &&\\
            \cmidrule{4-5}\cmidrule{7-8}
            &\multicolumn{1}{c}{}&& $\deg v_k=4$ & $\deg v_k=3$ && $\deg v_k=4$ & $\deg v_k=3$ &&\\
            \midrule
$\deg v_n=n-1$ &&& Case \ref{c:n-4-4-4}                 & Case \ref{sc:n-1-4}                  && Case \ref{sc:n-1-4}                  & Case \ref{sc:n-1-3}                  &&\\[5pt]
$\deg v_n=n-2$ &&& Case \ref{c:n-4-4-4}                 & Case \ref{ssc:n-2-4-v1}            && Case \ref{ssc:n-2-4-vk}            & Case \ref{sc:n-2-3}                  &&\\[5pt]
$\deg v_n=n-3$ &&& Case \ref{c:n-4-4-4}                 & Case \ref{sc:n-1-4}                  && Case \ref{sc:n-1-4}                  & Case \ref{sc:n-3-3}                  &&\\[5pt]
$\deg v_n=n-4$ &&& Case \ref{c:n-4-4-4}                 & Case \ref{sc:n-4-4}                  && X                                 & X       &&\\             \bottomrule
\end{tabular}
\end{table}

\ccase{c:4-4} \emph{There are two vertices of $G$ with degree larger than~$4$}.
  These two vertices must be~$v_1$ and~$v_n$. If we remove them from $G$,
  what remains must be a path~$v_2,\ldots,v_{n-1}$, which prescribes the Hamiltonian
  path together with~$(v_1,v_2)$ and $(v_{n-1},v_n)$;
  see Fig.~\ref{fig:outer-5-degrees-1}.

\ccase{c:n-3-3} \emph{There is a vertex of $G$ with degree at least~$n-3$, and every other
  vertex has degree~$3$}. We label the high-degree vertex as~$v_n$, and we distinguish three subcases, based on its degree.

\begin{changemargin}{.2in}{0in}
 	
\subcase{sc:n-1-3} $\deg(v_n)=n-1$. Then,~$v_n$ is connected to all other vertices.
  Hence, if we remove~$v_n$ from~$G$, what remains must be a cycle. So, we can choose any vertex as~$v_1$ and one of its incident
  edges as~$B_{v_1}$; see Fig.~\ref{fig:outer-5-degrees-2}.

\subcase{sc:n-2-3} $\deg(v_n)=n-2$. Then, the sum of the degrees over all vertices of $G$ is 
$n-2+3\cdot (n-1) = 4n-5$, which is not possible, since the sum of the degrees over all vertices of a graph is always even.

\subcase{sc:n-3-3} $\deg(v_n)=n-3$. Then, $\deg(v_1)+\deg(v_n)=n$, so the
  edge~$(v_1,v_n)$ does not exist. Hence, the three edges from~$v_1$ are~$(v_1,v_2)$,
  $(v_1,v_{n-2})=(v_1,v_k)$, and~$(v_1,v_{n-1})$. Since \mbox{$\deg(v_k)=3$}, the edge
  $(v_k,v_n)$ does not exist, so~$v_n$ is connected to every vertex except for~$v_1$
  and~$v_k$. We label as~$v_1$ one of the two vertices that are not connected to~$v_n$.
  If we now remove~$v_1$ and~$v_n$ from~$G$, what remains must be again a
  path~$v_2,\ldots,v_{n-1}$, which prescribes the Hamiltonian
  path together with~$(v_1,v_2)$ and $(v_{n-1},v_n)$; see Fig.~\ref{fig:outer-5-degrees-3}.

\end{changemargin}

\ccase{c:n-2-4} \emph{There is a vertex of $G$ with degree at least~$n-4$, one vertex with
  degree~4, and each other vertex has degree~3}.
  We label the high-degree vertex as~$v_n$, and we distinguish again three subcases.

\begin{changemargin}{.2in}{0in}

\subcase{sc:n-1-4} $\deg(v_n)=n-1$ or $\deg(v_n)=n-3$. In the former case,
  the sum of degrees over all vertices of $G$ is $n-1 + 4 + 3\cdot (n-2) = 4n-3$, while in the latter case it
  is~$4n-5$; since in both cases this value is odd, we conclude that this case cannot occur.

\subcase{sc:n-2-4} $\deg(v_n)=n-2$. In this case, we do not know
  whether~$v_1$ or~$v_k$ is the degree-4 vertex, so we have to try both
  possibilities.
\end{changemargin}

\begin{changemargin}{.4in}{0in}

\subsubcase{ssc:n-2-4-v1} We label the degree-4 vertex as~$v_1$.
  Then, we have a similar situation as in Case~\ref{c:4-4}, with the only difference being that the edge~$(v_k,v_n)$ does not exist. We thus proceed as in this case; see Fig.~\ref{fig:outer-5-degrees-3}.

\subsubcase{ssc:n-2-4-vk} We label the degree-4 vertex as~$v_k$.
  Then,~$v_1$ has degree~3 and we have~$\deg(v_1)+\deg(v_n)=n+1$,
  so~$v_1$ has to be the only vertex not adjacent to~$v_n$.
  If we now remove~$v_n$ and~$v_1$ from~$G$, what remains must be again a
  path~$v_2,\ldots,v_{n-1}$, which prescribes the Hamiltonian path
  together with~$(v_1,v_2)$ and $(v_{n-1},v_n)$; see
  Fig.~\ref{fig:outer-5-degrees-3}.

\end{changemargin}

\begin{changemargin}{.2in}{0in}

\subcase{sc:n-4-4} $\deg(v_n)=n-4$.
  Since~$\deg(v_1)+\deg(v_n)\ge n$, we have that~$v_1$ must be the degree-4 vertex and the
  edge~$(v_1,v_n)$ does not exist. Then, since~$\deg(v_k)=3$, the edge~$(v_k,v_n)$
  also does not exist. Hence, the situation is the same as in Case~\ref{c:4-4}
  (with these two edges missing); see Fig.~\ref{fig:outer-5-degrees-1}.
  Hence, we can again prescribe the Hamiltonian path by removing~$v_1$ and~$v_n$.

\end{changemargin}

\ccase{c:n-4-4-4} \emph{There is a vertex of $G$ with degree at least~$n-4$, two vertices with
  degree~$4$, and every other vertex has degree~$3$}. We label the high-degree
  vertex as~$v_n$. One of the vertices with degree~4 has to be~$v_1$, the other
  one has to be~$v_k$. In any case,~$v_k$ will have degree~4, so both the
  edges~$(v_1,v_k)$ and~$(v_k,v_n)$ must exist. We distinguish two subcases based on whether one or both these degree-$4$ vertices are connected to $v_n$.

\begin{changemargin}{.2in}{0in}

\subcase{sc:n-4-4-4-noedge} One of the two degree-4 vertices is not connected
  to~$v_n$. Since the edge~$(v_k,v_n)$ exists, this vertex must be~$v_1$.
  We can handle this case in the same way as Case~\ref{sc:n-4-4}; see
  Fig.~\ref{fig:outer-5-degrees-1}.

\subcase{sc:n-4-4-4-edge} Both degree-4 vertices are connected to~$v_n$.
  In this case, the edge~$(v_1,v_n)$ exists.
  Since~$v_1$ has degree~$4$, it has two inner edges:~$(v_1,v_{n-1})$
  and~$(v_1,v_{n-2})$ with its other edges being~$(v_1,v_2)$ and~$(v_1,v_n)$.
  This implies that~$k=n-2$, so~$v_k$ has edges~$(v_1,v_k)$,~$(v_k,v_n)$, $(v_
  k,v_{n-1})$, and~$(v_k,v_{n-3})$; thus $v_1$ and $v_k$ only differ in one edge. In fact, by
  removing~$v_n$ and its incident edges we obtain a cycle with the single
  chord~$(v_1,v_k)$, so the whole graph is symmetric and we can choose either of the
  degree-$4$ vertices as~$v_1$; see Fig.~\ref{fig:outer-5-degrees-3}.
 	
\end{changemargin}

  \begin{figure}[t]
    \centering
    \subfloat[\label{fig:outer-5-degrees-1}]{\includegraphics[page=1]{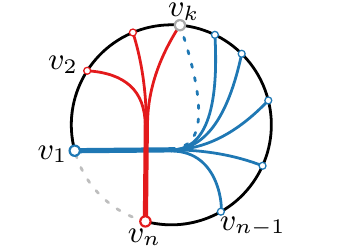}}
    \hfil
    \subfloat[\label{fig:outer-5-degrees-2}]{\includegraphics[page=2]{outer-3-degrees}}
    \hfil
    \subfloat[\label{fig:outer-5-degrees-3}]{\includegraphics[page=3]{outer-3-degrees}}
    \caption{Illustration of the different cases for~$n\ge9$. Dotted edges might
      be there or not, depending on the case:
      (a)~Cases \ref{c:4-4}, \ref{sc:n-4-4}, and~\ref{sc:n-4-4-4-noedge},
      (b)~Case \ref{sc:n-1-3}, and
      (c)~Cases~\ref{sc:n-3-3},~\ref{sc:n-2-4},~and~\ref{sc:n-4-4-4-edge}}
    \label{fig:outer-5-degrees}
  \end{figure}

This completes the description of our recognition algorithm. We can find vertices~$v_1$ and~$v_n$
and the Hamiltonian cycle in linear time and we can check whether the
correct edges are in the graph in linear time as well, so the whole algorithm
runs in linear time. In the case in which $G$ is a positive instance, we obtain a $1$-sided outer-\fb drawing as follows. If~$n<9$, then we directly construct the drawing as in Fig.~\ref{fig:outer-3-recognition}. Otherwise, we identify the case of the proof and then create a drawing according to Fig.~\ref{fig:outer-5-degrees}.
\end{proof}

\section{Conclusions}
\label{sec:conclusions}

In this work, we studied a new drawing model, which introduces the visualization technique of edge-bundling in the framework of beyond-planarity, focusing in particular on the class of fan-planar graphs. 
Our work opens several research directions:
\begin{itemize}
\item Find recognition algorithms for $1$- or $2$-sided (biconnected) outer- or $2$-layer \fb graphs;
\item close the gaps in the density bounds of Table~\ref{table:density};
\item discuss relationships with other classes of nearly-planar graphs;
\item study the \emph{$k$-fan-bundle-planarity}, where each fan-bundle can be crossed at most~$k$ times;
\item consider other models of edge~bundling suitable for theoretical analyses and comparisons, e.g., allowing edges to be bundled together not only at their endpoints.
\end{itemize}

\section*{Acknowledgments} 
\noindent The authors would like to thank the anonymous reviewers for their constructive comments. This work is partially supported by the DFG grants Ka812/17-1 and Schu2458/4-1.

\bibliographystyle{abbrvurl}
\bibliography{abbrv,bundles}

\begin{thebibliography}{10}

\bibitem{DBLP:journals/combinatorica/AgarwalAPPS97}
P.~K. Agarwal, B.~Aronov, J.~Pach, R.~Pollack, and M.~Sharir.
\newblock Quasi-planar graphs have a linear number of edges.
\newblock {\em Combinatorica}, 17(1):1--9, 1997.
\newblock \href {http://dx.doi.org/10.1007/BF01196127}
  {\path{doi:10.1007/BF01196127}}.

\bibitem{DBLP:journals/jgaa/ArgyriouBS12}
E.~N. Argyriou, M.~A. Bekos, and A.~Symvonis.
\newblock The straight-line {RAC} drawing problem is {NP}-hard.
\newblock {\em J. Graph Algorithms Appl.}, 16(2):569--597, 2012.
\newblock \href {http://dx.doi.org/10.7155/jgaa.00274}
  {\path{doi:10.7155/jgaa.00274}}.

\bibitem{DBLP:conf/gd/AuerBBGHNR13}
C.~Auer, C.~Bachmaier, F.~J. Brandenburg, A.~Glei{\ss}ner, K.~Hanauer,
  D.~Neuwirth, and J.~Reislhuber.
\newblock Outer 1-planar graphs.
\newblock {\em Algorithmica}, 74(4):1293--1320, 2016.
\newblock \href {http://dx.doi.org/10.1007/s00453-015-0002-1}
  {\path{doi:10.1007/s00453-015-0002-1}}.

\bibitem{DBLP:conf/gd/AuerBGH12}
C.~Auer, F.~J. Brandenburg, A.~Glei{\ss}ner, and K.~Hanauer.
\newblock On sparse maximal 2-planar graphs.
\newblock In W.~Didimo and M.~Patrignani, editors, {\em Proc. 20th Int. Symp.
  Graph Drawing (GD '12)}, volume 7704 of {\em LNCS}, pages 555--556. Springer,
  2012.
\newblock \href {http://dx.doi.org/10.1007/978-3-642-36763-2_50}
  {\path{doi:10.1007/978-3-642-36763-2_50}}.

\bibitem{2017arXiv170807653B}
S.~W. {Bae}, J.-F. {Baffier}, J.~{Chun}, P.~{Eades}, K.~{Eickmeyer},
  L.~{Grilli}, S.-H. {Hong}, M.~{Korman}, F.~{Montecchiani}, I.~{Rutter}, and
  C.~D. {T{\'o}th}.
\newblock Gap-planar graphs.
\newblock In F.~Frati and K.-L. Ma, editors, {\em Proc. 25th Int. Symp. Graph
  Drawing (GD '17)}, volume 10692 of {\em LNCS}, pages 531--545. SV, 2017.

\bibitem{BekosCGHK14}
M.~A. Bekos, S.~Cornelsen, L.~Grilli, S.~Hong, and M.~Kaufmann.
\newblock On the recognition of fan-planar and maximal outer-fan-planar graphs.
\newblock {\em Algorithmica}, 79(2):401--427, 2017.
\newblock \href {http://dx.doi.org/10.1007/s00453-016-0200-5}
  {\path{doi:10.1007/s00453-016-0200-5}}.

\bibitem{DBLP:conf/gd/BinucciCDGKKMT15}
C.~Binucci, M.~Chimani, W.~Didimo, M.~Gronemann, K.~Klein,
  J.~Kratochv{\'{\i}}l, F.~Montecchiani, and I.~G. Tollis.
\newblock Algorithms and characterizations for 2-layer fan-planarity: From
  caterpillar to stegosaurus.
\newblock {\em J. Graph Algorithms Appl.}, 21(1):81--102, 2017.
\newblock \href {http://dx.doi.org/10.7155/jgaa.00398}
  {\path{doi:10.7155/jgaa.00398}}.

\bibitem{DBLP:journals/tcs/BinucciGDMPST15}
C.~Binucci, E.~{Di Giacomo}, W.~Didimo, F.~Montecchiani, M.~Patrignani,
  A.~Symvonis, and I.~G. Tollis.
\newblock Fan-planarity: Properties and complexity.
\newblock {\em Theor. Comput. Sci.}, 589:76--86, 2015.
\newblock \href {http://dx.doi.org/10.1016/j.tcs.2015.04.020}
  {\path{doi:10.1016/j.tcs.2015.04.020}}.

\bibitem{MANA:MANA3211170125}
R.~Bodendiek, H.~Schumacher, and K.~Wagner.
\newblock {\"U}ber 1-optimale {G}raphen.
\newblock {\em Math. Nachrichten}, 117(1):323--339, 1984.
\newblock \href {http://dx.doi.org/10.1002/mana.3211170125}
  {\path{doi:10.1002/mana.3211170125}}.

\bibitem{franz-quasi-planar}
F.~J. Brandenburg.
\newblock A simple quasi-planar drawing of {$K_{10}$}.
\newblock In Y.~Hu and M.~N{\"{o}}llenburg, editors, {\em Proc. 24th Int. Symp.
  Graph Drawing Netw. Vis. (GD '16)}, volume 9801 of {\em LNCS}, pages
  603--604. Springer, 2016.

\bibitem{DBLP:journals/tvcg/BuchinSV11}
K.~Buchin, B.~Speckmann, and K.~Verbeek.
\newblock Flow map layout via spiral trees.
\newblock {\em {IEEE} Trans. Vis. Comput. Graphics}, 17(12):2536--2544, 2011.
\newblock \href {http://dx.doi.org/10.1109/TVCG.2011.202}
  {\path{doi:10.1109/TVCG.2011.202}}.

\bibitem{SoCG2017}
S.~Chaplick and K.~Verbeek.
\newblock Geometric perspectives in graph drawing and information
  visualization.
\newblock SoCG2017 Workshop, July 4th 2017.

\bibitem{DBLP:conf/isaac/CheongHKK13}
O.~Cheong, S.~Har{-}Peled, H.~Kim, and H.~Kim.
\newblock On the number of edges of fan-crossing free graphs.
\newblock {\em Algorithmica}, 73(4):673--695, 2015.
\newblock \href {http://dx.doi.org/10.1007/s00453-014-9935-z}
  {\path{doi:10.1007/s00453-014-9935-z}}.

\bibitem{DBLP:conf/walcom/DehkordiNEH13}
H.~R. Dehkordi, P.~Eades, S.~Hong, and Q.~H. Nguyen.
\newblock Circular right-angle crossing drawings in linear time.
\newblock {\em Theor. Comput. Sci.}, 639:26--41, 2016.
\newblock \href {http://dx.doi.org/10.1016/j.tcs.2016.05.017}
  {\path{doi:10.1016/j.tcs.2016.05.017}}.

\bibitem{DBLP:journals/algorithmica/GiacomoDEL14}
E.~{Di Giacomo}, W.~Didimo, P.~Eades, and G.~Liotta.
\newblock 2-layer right angle crossing drawings.
\newblock {\em Algorithmica}, 68(4):954--997, 2014.
\newblock \href {http://dx.doi.org/10.1007/s00453-012-9706-7}
  {\path{doi:10.1007/s00453-012-9706-7}}.

\bibitem{DBLP:reference/crc/GiacomoDL13}
E.~{Di Giacomo}, W.~Didimo, and G.~Liotta.
\newblock Spine and radial drawings.
\newblock In R.~Tamassia, editor, {\em Handbook on Graph Drawing and
  Visualization.}, pages 247--284. Chapman \& Hall, 2013.

\bibitem{degm-cdvnd-GD03}
M.~Dickerson, D.~Eppstein, M.~T. Goodrich, and J.~Y. Meng.
\newblock Confluent drawings: Visualizing non-planar diagrams in a planar way.
\newblock In G.~Liotta, editor, {\em Proc. 11th Int. Symp. Graph Drawing (GD
  '03)}, volume 2912 of {\em LNCS}, pages 1--12. Springer, 2003.
\newblock \href {http://dx.doi.org/10.1007/978-3-540-24595-7_1}
  {\path{doi:10.1007/978-3-540-24595-7_1}}.

\bibitem{DBLP:journals/tcs/DidimoEL11}
W.~Didimo, P.~Eades, and G.~Liotta.
\newblock Drawing graphs with right angle crossings.
\newblock {\em Theor. Comput. Sci.}, 412(39):5156--5166, 2011.
\newblock \href {http://dx.doi.org/10.1016/j.tcs.2011.05.025}
  {\path{doi:10.1016/j.tcs.2011.05.025}}.

\bibitem{DBLP:journals/tcs/EadesHKLSS13}
P.~Eades, S.~Hong, N.~Katoh, G.~Liotta, P.~Schweitzer, and Y.~Suzuki.
\newblock A linear time algorithm for testing maximal 1-planarity of graphs
  with a rotation system.
\newblock {\em Theor. Comput. Sci.}, 513:65--76, 2013.
\newblock \href {http://dx.doi.org/10.1016/j.tcs.2013.09.029}
  {\path{doi:10.1016/j.tcs.2013.09.029}}.

\bibitem{eggleton}
R.~B. Eggleton.
\newblock Rectilinear drawings of graphs.
\newblock {\em Utilitas Math.}, 29:149--172, 1986.

\bibitem{fhsv-bceg-LATIN16}
M.~Fink, J.~Hershberger, S.~Suri, and K.~Verbeek.
\newblock Bundled crossings in embedded graphs.
\newblock In E.~Kranakis, G.~Navarro, and E.~Ch{\'{a}}vez, editors, {\em Proc.
  12th Lat. Am. Symp. Theoret. Inform. ({LATIN} '16)}, volume 9644 of {\em
  LNCS}, pages 454--468. Springer, 2016.
\newblock \href {http://dx.doi.org/10.1007/978-3-662-49529-2_34}
  {\path{doi:10.1007/978-3-662-49529-2_34}}.

\bibitem{fr-gdfdp-91}
T.~M.~J. Fruchterman and E.~M. Reingold.
\newblock Graph drawing by force-directed placement.
\newblock {\em Softw., Pract. Exper.}, 21(11):1129--1164, 1991.
\newblock \href {http://dx.doi.org/10.1002/spe.4380211102}
  {\path{doi:10.1002/spe.4380211102}}.

\bibitem{gj-cigtnpc-79}
M.~Garey and D.~S. Johnson.
\newblock {\em Computers and Intractability: A Guide to the Theory of
  {NP}-Completeness}.
\newblock W. H. Freeman \& Co., New York, NY, USA, 1979.

\bibitem{gj-cnnpc-83}
M.~Garey and D.~S. Johnson.
\newblock Crossing number is {NP}-complete.
\newblock {\em {SIAM} J. Algebraic Discrete Methods}, 4(3):312--316, 1983.
\newblock \href {http://dx.doi.org/10.1137/0604033}
  {\path{doi:10.1137/0604033}}.

\bibitem{DBLP:journals/tvcg/Holten06}
D.~Holten.
\newblock Hierarchical edge bundles: Visualization of adjacency relations in
  hierarchical data.
\newblock {\em {IEEE} Trans. Vis. Comput. Graphics}, 12(5):741--748, 2006.
\newblock \href {http://dx.doi.org/10.1109/TVCG.2006.147}
  {\path{doi:10.1109/TVCG.2006.147}}.

\bibitem{DBLP:journals/cgf/HoltenW09}
D.~Holten and J.~J. van Wijk.
\newblock Force-directed edge bundling for graph visualization.
\newblock {\em Comput. Graph. Forum}, 28(3):983--990, 2009.
\newblock \href {http://dx.doi.org/10.1111/j.1467-8659.2009.01450.x}
  {\path{doi:10.1111/j.1467-8659.2009.01450.x}}.

\bibitem{DBLP:conf/gd/HongEKLSS13}
S.~Hong, P.~Eades, N.~Katoh, G.~Liotta, P.~Schweitzer, and Y.~Suzuki.
\newblock A linear-time algorithm for testing outer-1-planarity.
\newblock {\em Algorithmica}, 72(4):1033--1054, 2015.
\newblock \href {http://dx.doi.org/10.1007/s00453-014-9890-8}
  {\path{doi:10.1007/s00453-014-9890-8}}.

\bibitem{Shonan2016}
S.~Hong and T.~Tokuyama.
\newblock Algorithmics for beyond planar graphs.
\newblock NII Shonan Meeting Seminar 089, November 27 -– December 1 2016.

\bibitem{Dagstuhl2016}
M.~Kaufmann, S.~Kobourov, J.~Pach, and S.~Hong.
\newblock Beyond planar graphs: Algorithmic and combinatorics.
\newblock Dagstuhl Seminar 16452, November 6 -– 11 2016.

\bibitem{DBLP:journals/corr/KaufmannU14}
M.~Kaufmann and T.~Ueckerdt.
\newblock The density of fan-planar graphs.
\newblock {\em ArXiv e-prints}, 1403.6184, 2014.

\bibitem{2017arXiv170302261K}
S.~G. {Kobourov}, G.~{Liotta}, and F.~{Montecchiani}.
\newblock An annotated bibliography on 1-planarity.
\newblock {\em Computer Science Review}, 25:49--67, 2017.
\newblock \href
  {http://dx.doi.org/http://dx.doi.org/10.1016/j.cosrev.2017.06.002}
  {\path{doi:http://dx.doi.org/10.1016/j.cosrev.2017.06.002}}.

\bibitem{DBLP:journals/combinatorica/PachT97}
J.~Pach and G.~T{\'o}th.
\newblock Graphs drawn with few crossings per edge.
\newblock {\em Combinatorica}, 17(3):427--439, 1997.
\newblock \href {http://dx.doi.org/10.1007/BF01215922}
  {\path{doi:10.1007/BF01215922}}.

\bibitem{MR0187232}
G.~Ringel.
\newblock Ein {S}echsfarbenproblem auf der {K}ugel.
\newblock {\em Abh. Math. Sem. Univ. Hamb.}, 29:107--117, 1965.
\newblock \href {http://dx.doi.org/10.1007/BF02996313}
  {\path{doi:10.1007/BF02996313}}.

\bibitem{DBLP:journals/tsmc/SugiyamaTT81}
K.~Sugiyama, S.~Tagawa, and M.~Toda.
\newblock Methods for visual understanding of hierarchical system structures.
\newblock {\em {IEEE} Trans. Syst. Man Cybern.}, 11(2):109--125, 1981.
\newblock \href {http://dx.doi.org/10.1109/TSMC.1981.4308636}
  {\path{doi:10.1109/TSMC.1981.4308636}}.

\bibitem{DBLP:journals/cgf/TeleaE10}
A.~Telea and O.~Ersoy.
\newblock Image-based edge bundles: Simplified visualization of large graphs.
\newblock {\em Comput. Graph. Forum}, 29(3):843--852, 2010.
\newblock \href {http://dx.doi.org/10.1111/j.1467-8659.2009.01680.x}
  {\path{doi:10.1111/j.1467-8659.2009.01680.x}}.

\bibitem{zhou13}
H.~Zhou, P.~Xu, X.~Yuan, and H.~Qu.
\newblock Edge bundling in information visualization.
\newblock {\em Tsinghua Sci. Technol.}, 18(2):145--156, 2013.

\bibitem{DBLP:journals/cgf/ZhouYQCC08}
H.~Zhou, X.~Yuan, H.~Qu, W.~Cui, and B.~Chen.
\newblock Visual clustering in parallel coordinates.
\newblock {\em Comput. Graph. Forum}, 27(3):1047--1054, 2008.
\newblock \href {http://dx.doi.org/10.1111/j.1467-8659.2008.01241.x}
  {\path{doi:10.1111/j.1467-8659.2008.01241.x}}.

\end{thebibliography}

\end{document}